\numberwithin{equation}{section}
\newtheorem{theorem}{Theorem}
\numberwithin{theorem}{section}
\newtheorem{proposition}[theorem]{Proposition}
\newtheorem{lemma}[theorem]{Lemma}
\newtheorem{corollary}[theorem]{Corollary}
\theoremstyle{definition}
\newtheorem{example}[theorem]{Example}
\newtheorem{remark}[theorem]{Remark}
\newtheorem{definition}[theorem]{Definition}
\newcounter{FNC}[page]
\def\fauxfootnote#1{{\addtocounter{FNC}{2}$^\fnsymbol{FNC}$%
     \let\thefootnote\relax\footnotetext{$^\fnsymbol{FNC}$#1}}}
\newcommand{\R}{\mathbb{R}}
\newcommand{\Q}{\mathbb{Q}}
\newcommand{\Z}{\mathbb{Z}}
\newcommand{\Sp}{\mathscr{S}}
\newcommand{\uri}{\rightarrow_\circ}
\begin{document}

\title[Reduced vs extended networks]{Multistationarity questions in reduced vs extended biochemical networks} 

\author[A. Dickenstein]{Alicia Dickenstein}
\address{Dto.\ de Matem\'atica, FCEN, Universidad de Buenos Aires, and IMAS (UBA-CONICET), Ciudad Universitaria, Pab.\ I, 
C1428EGA Buenos Aires, Argentina}
\email{alidick@dm.uba.ar}
\urladdr{http://mate.dm.uba.ar/~alidick}

\author[M. Giaroli]{Magal\'i Giaroli}
\address{Dto.\ de Matem\'atica, FCEN, Universidad de Buenos Aires, Ciudad Universitaria, Pab.\ I, 
C1428EGA Buenos Aires, Argentina}
\email{mgiaroli@dm.uba.ar}

\author[M. P\'erez Mill\'an]{Mercedes P\'erez Mill\'an}
\address{Dto.\ de Matem\'atica, FCEN, Universidad de Buenos Aires, and IMAS (UBA-CONICET), Ciudad Universitaria, Pab.\ I, 
C1428EGA Buenos Aires, Argentina}
\email{mpmillan@dm.uba.ar}
\urladdr{http://cms.dm.uba.ar/Members/mpmillan}

\author[R. Rischter]{Rick Rischter}
\address{\sc Rick Rischter\\
Universidade Federal de Itajub\'a (UNIFEI)\\ 
Av. BPS 1303, Bairro Pinheirinho\\ 
37500-903, Itajub\'a, Minas Gerais\\ 
Brazil}
\email{rischter@unifei.edu.br}
\urladdr{http://w3.impa.br/~rischter/}

\date{}

\begin{abstract}
We address several questions in  reduced versus extended networks via the elimination or addition of intermediate complexes in the framework of chemical reaction networks with mass-action kinetics. We clarify and extend advances in the literature concerning multistationarity in this context, mainly from~\cite{fw13,SFeliu,messi,DPMST}. We establish  general results about MESSI systems, which we use to compute the circuits of multistationarity for significant biochemical networks.
 
\end{abstract}
\maketitle

\section{Introduction}

The first systematic study of the role of intermediate complexes in the setting of biochemical reaction networks  and the extension of multistationarity results in this context was introduced in the thoughtful paper~\cite{fw13}. We clarify and extend results from this paper and also from~\cite{Focm,messi,DPMST,SFeliu}. In particular, we give explicit {\em open conditions} on the parameters that ensure the lifting of multistationarity.  We rephrase the notion of circuits of multistationarity introduced in~\cite{SFeliu} and we compute the minimal additions of intermediates that allow for multistationarity in different variations of the ERK pathway,  a cascade of protein reactions in the cell that communicates a signal from a receptor on the surface of the cell to the nucleus~\cite{PS18}. 
%https://en.wikipedia.org/wiki/MAPK/ERK_pathway
The MESSI networks introduced in~\cite{messi} include the ERK pathway and are abundant in the literature. They give us a  unified framework to study different properties for many classes of interesting networks that can be easily determined with graphical tools. We concentrate on conditions to decide that the associated systems are toric~\cite{PMDSC} by means of $\R$-linear operations and we prove that many interesting networks as the ERK pathway have this property.
In order to introduce the necessary definitions and state our results, we first recall the basic setup of chemical reaction networks and how they give rise to autonomous dynamical systems under mass-action kinetics. 
For a more comprehensive overview of Chemical Reaction Networks we refer to \cite{alicia, notices}.  

\smallskip

Given a set of $s$ chemical species $X_1,X_2,\dots, X_s$, a {\em chemical reaction network} on this set of species is a finite directed graph $G$ whose vertices are indicated by complexes (linear combinations of the species with nonnegative integer coefficients) and whose edges are labeled by parameters (reaction rate constants). The set of species is denoted by $\Sp_G$, the  vertex set by $\mathscr{C}_G$, the edge set by $\,\mathcal{R}_G$, and the edge labels by $\kappa=\{\kappa_{yy'}\} \in \R_{>0}^{\# \mathcal{R}_G}$. Here, $\kappa_{yy'}$ is the reaction rate constant associated to the reaction $(y,y')\in \mathcal{R}_G$, which is in turn denoted by $y\rightarrow y'$.

The unknowns $x_1,x_2,\ldots,x_s$ represent, respectively, the concentrations of the species $X_1, \dots,X_s$ in the network, and we regard them as functions of time $t$. Under mass-action kinetics, the chemical reaction network $G$ defines the following chemical reaction dynamical system for $x=(x_1,\dots,x_s)$:
\begin{equation}
\label{eq:CRN}
\dot{x}~:=~f_\kappa(x)~=~\underset{y\to y'}{\sum} \kappa_{yy'} \,  x^y \, (y'-y),
\end{equation}
where $x^y=x_1^{y_1}\cdots x_s^{y_s}$. The right-hand side of each differential equation $\dot{x}_\ell$ is a polynomial $f_{\kappa,\ell}(x)$, in the variables $x_1,\dots, x_s$ with positive coefficients  $\kappa$.

A {\em steady state} of the system is a nonnegative concentration vector $x^* \in \mathbb{R}_{\geq 0}^s$ at which the ODEs~\eqref{eq:CRN}  vanish, i.e. $f_\kappa (x^*) =0$. %= f(x^*) =0$.  
We distinguish between {\em positive steady states} $x ^* \in \mathbb{R}^s_{> 0}$ and {\em boundary steady states} 
$x^*\in {\mathbb R}_{\geq 0}^s\backslash {\mathbb R}_{>0}^s$. Both the positive orthant $\mathbb R_{>0}^s$ and its closure $\mathbb R^s_{\ge 0}$ are forward-invariant for the dynamics. 

The linear subspace spanned by the reaction vectors $\mathcal{S} = \{ y' -y\, : \, y \to y' \}$ is called the {\em stoichiometric subspace}. 
Vectors in $\mathcal{S}^\perp$ yield the equations of $x^0+\mathcal{S}$  for any $x^0 \in \R^s$, and give rise to linear \emph{conservation relations} of the system. If $d=s-\dim(\mathcal{S})$, and $W$ is a row-reduced $d\times s$-matrix whose   rows form a basis of $\mathcal{S}^{\perp}$, then $W$ is called a {\em conservation-law matrix} of $G$, and $W \dot x = Wf_\kappa(x)=0$. 
Thus, a trajectory $x(t)$ beginning at a nonnegative vector $x(0)=x^0 \in \mathbb{R}^s_{\geq 0}$ remains, for all $t$ in any interval containing $0$ where $x$ is defined, in the following {\em stoichiometric compatibility class} with respect to the {\em linear conservation vector} $c:= W x^0$: % \in {\mathbb R}_{>0}^d$:
\[
\mathcal{S}_c~:=~ \{x\in {\mathbb R}_{\geq 0}^s \mid Wx=c\}.
\]
In particular, $\mathcal{S}_c$ is also forward-invariant with respect to the dynamics~\eqref{eq:CRN}. 

The system is said to be \emph{conservative} if there exists a positive vector in $\mathcal{S}^\perp$. In this case, all the stoichiometric compatibility classes are compact and trajectories  around $0$ are defined for any positive $t >0$. 
We say that the system {\em has the capacity for multistationarity} if there exists a choice of rate constants $\kappa$ such that there are two or more positive steady states in one stoichiometric compatibility class, called stoichiometrically compatible (scpss). On the other hand, if for any choice of rate constants there is at most one positive steady state in each stoichiometric compatibility class, the system is said to be \emph{monostationary}.

\medskip

In this paper, we give the fundamental notions of {\em intermediate species and complexes} in Definition~\ref{def:int} and of {\em reduced and extended networks by intermediates} in Definition~\ref{def:redext}.
The relations between the conservation laws in a given network $G$ and a reduced network $G_{red}$ as well as the relation between the corresponding rate constants as in~\eqref{eq:T} and steady states were studied in~\cite{fw13}. We summarize their results in Section~\ref{sec:intermediates}. We introduce the notion of non-confluent networks in Definition~\ref{def:nc}
and we prove in Proposition~\ref{prop:bss} that a system does not have relevant boundary steady states if and only if any of its non-confluent extensions has this property.

\smallskip

The lifting of steady states to an extended network was first studied in Theorem~5.1 in~\cite{fw13}. The main result in the recent paper~\cite{Murad23} studies the lifting of multistationarity and of periodic orbits. Feliu and Wiuf proved that if we consider a network $G$ with rate constants $\kappa^0$ and $\tau^0=T(\kappa^0)$ are the corresponding rate constants of the reduced network $G_{red}$, if $G_{red}$ has $m$ non-degenerate scpss then there exist infinitely many rate constants $\kappa$ such that $\tau^0 = T(\kappa)$ and the extended system $G$ with these rate constants has at least $m$ non-degenerate scpss. Note that this doesn't say that the original system $G$ with rate constants $\kappa^0$ is multistationary. Using the same ideas of their proof, we improve their result in Theorems~\ref{thm:IFT} and~\ref{thm:IFT2} by identifying (a finite number of) explicit rational functions of $\kappa$ which ensure multistationary of $G$ when they are small and in many common networks, if the explicit rate constants of reactions with source in an intermediate complex are big enough.

\smallskip

In Section~\ref{sec:circuits} we recall and simplify results in~\cite{SFeliu}. They introduced the notion of {\em circuits of multistationarity} (see Definition~\ref{def:circuits}): given a monostationary network $G'$ with steady states defined by binomial equations, a circuit of multistationarity is a subset of the complexes of $G'$ such that the addition of intermediates from these complexes gives raise to a multistationary system (for some choice of matching rate constants), which is minimal with respect to inclusion.  In fact, we need to specify the meaning of {\em defined by binomial equations}. In~\cite{SFeliu} they present a very general setting of {\em complete binomial networks}. We introduce the more restrictive but still quite general notion of networks {\em linearly equivalent to a binomial network} (called {\em lebn} networks, see Definition~\ref{def:lebn}).
We prove  in Proposition~\ref{prop:CK} that one can check this condition computing a reduced row echelon form of the matrix of the system an in this setting we state Theorem~\ref{th:Bpoly} (that also holds with more general hypotheses).
All examples in~\cite{SFeliu} are lebn as well as the important networks we study in the subsequent sections.

\smallskip

Section~\ref{sec:ERK} is concentrated on the computation of the circuits of multistationarity for the {\em mitogen-activated protein kinase (MAPK)} pathway, commonly known as the MAPK/ERK signaling pathway. The acronym ERK refers to {\em Extracellular Signal-Regulated Kinase}. The initiation of this signaling mechanism occurs when an external stimulus, such as a growth factor, binds to a designated receptor on the cell's surface, followed by a {\em cascade} of enzymatic activations within the cell.  Given the high number of variables and parameters, we implemented the previous results in a computer algebra system.
We refer the interested reader to the section on ``Signaling through Enzyme-Linked Cell-Surface Receptors" in Chapter 15 in~\cite{alberts2002} for a biochemical understanding of the ERK pathway.
 
 \smallskip
 
In  Section~\ref{sec:Messi} we recall the notion of MESSI systems introduced in~\cite{messi}. We give in Theorem~\ref{th:ylebn} and Theorem~\ref{th:monoy} simple combinatorial conditions on the associated digraphs of a MESSI network that ensure that a system  is monostationary and lebn.  Most common biological networks are MESSI, so our results have wide applicability. In particular, the ERK pathway has a MESSI structure as well as the mixed sequential distributive-processive phosphorylation mechanisms, for any number of phosphorylation sites. We compute the circuits of multistationarity of these networks in Section~\ref{sec:sequential}.

\section{Intermediates} \label{sec:intermediates}
In this section, we give in~\S~\ref{ssec:red&ext} the definition of intermediate species and complexes and the notions of extended and reduced networks via intermediates. In \S~\ref{ssec:conslaws} and \S~\ref{ssec:rate_constants} we highlight in a concise way the results in~\cite{fw13} concerning the lifting of conservation laws and the reduction of rate constants. Then, in \S~\ref{ssec:bss}, we introduce the notion of non-confluent extensions by intermediates and we prove Proposition~\ref{prop:bss} about the lifting and reduction of relevant boundary steady states for this kind of extensions. We finally identify in Theorems~\ref{thm:IFT} and~\ref{thm:IFT2} explicit relative open sets such that multistationarity is lifted to the mass-action systems of all the extended networks with rate constants lying in these open sets. In particular, we show that for standard enzymatic networks, this is the case when the catalytic rate constants are big (see Remark~\ref{rem:unasola}).

\subsection{Reduced and extended networks}\label{ssec:red&ext}

We consider a reaction network $G$ and a subset of species $\mathcal{I}=\{U_1,U_2,\dots,U_p\} \subset \Sp_G$ such that for any $i =1, \dots, p$, the only complex that involves species $U_i$ is $U_i$.

\begin{definition}\label{def:uri}
 We say that complex $y$ reacts to complex $y'$  via $\mathcal{I}$ if either $y\to y'$ or there exists a path of reactions from $y$ to $y'$ only through complexes $U_j$. This is denoted  by $y \uri y'$.
\end{definition}

We now define the meaning of intermediate species and complexes.

\begin{definition}[Intermediate and core] \label{def:int}
Let $G$ and $\mathcal I$ be as above. A species $U_i \in \mathcal I$ is called an {\em intermediate species} if there is a sequence of reactions $y_{j} \uri {U_i}\uri y_{k}$, with $y_{j},y_{k}$ complexes that only involve species in $\Sp_G\setminus \mathcal{I}$. When all species in $\mathcal I$ are intermediate, we say that $\Sp_G \setminus \mathcal{I} = \{X_1,\dots, X_n\}$ is the set of \emph{core species}. The complexes  $U_i$ are called  \emph{intermediate complexes} and the complexes not involving intermediate species are called \emph{core complexes}.
 \end{definition}

Reduced and extended networks by intermediates are defined as follows.

\begin{definition}[Reduced and extended networks]\label{def:redext}
 Consider a reaction network $G$ and a subset of intermediate species $\mathcal{I}=\{U_1,U_2,\dots,U_p\} \subset \Sp_G$. 
 The associated reduced network $G_{red,\mathcal{I} }$ is obtained from $G$ by removing the intermediate species in $\mathcal{I}$. 
 The set of species of $G_{red,\mathcal{I}}$ is $\{X_1, \dots, X_n\}$. The complexes of $G_{red,\mathcal{I}}$ are the complexes of  $G$ which are not an intermediate complex  and the set of reactions of $G_{red,\mathcal{I}}$ is obtained from the set of reactions of $G$ by collapsing any sequence $y_{j}\uri y_k$ to the reaction $y_{j}\rightarrow{y_k}$. We will omit the subindex $\mathcal{I}$ when the set of intermediate species is clear from the context. Reciprocally, we say that $G$ is an extended network of $G'= G_{red, \mathcal I}$ via the addition of the intermediate species in $\mathcal I$ and we write $G = G'_{ext, \mathcal I}$.
\end{definition}

\begin{example}\label{ex:int}
Let $G$ be the network  with core complexes $y_1$, $y_2$ and $y_3$, and intermediate complexes $U_1, U_2, U_3$ depicted in Figure~\ref{fig:GG}. 
We show the associated reduced network $G_{red}$ on the right.
\begin{center}
\begin{figure}
 \begin{tabular}{c@{\hskip 1in}c}
%\multicolumn{1}{l}{$G$:} & \multicolumn{1}{l}{$G_{red}$:}\\
$\xymatrix@!=0.4pc{& {y_2}\ar_{\ \kappa_3 \ }@<0.3ex>[d] & &\\ & {U_1}\ar_{\ \kappa_4\ }@<0.3ex>[u]\ar^{\kappa_5}[rd]& &\\ 
 {y_1}\ar^{\kappa_1}[ru]\ar_{\kappa_2}[rd]&& {U_3}\ar^{\kappa_7}[r]& {y_3}\\ & {U_2}\ar_{\kappa_6}[ru] & & }$ &
$\xymatrix{& {y_2}\ar^{\tau_2}[rd] & \\ {y_1}\ar^{\tau_1}[ru]\ar^{\tau_3}[rr]&& {y_3}.}$
\end{tabular}
\caption{The networks $G$ (on the left) and the corresponding network $G_{red}$ (on the right) } \label{fig:GG}
\end{figure}
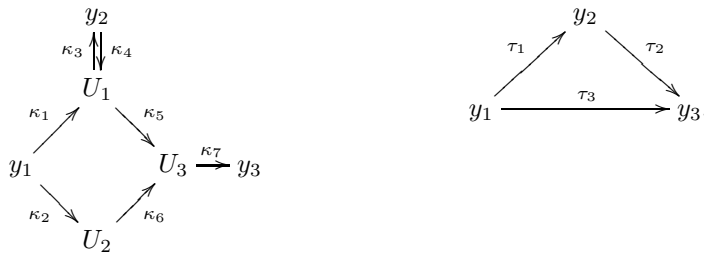
\end{center}
\end{example}

\subsection{Conservation laws}\label{ssec:conslaws}

The conservation laws in $G$ are in one-to-one correspondence with the conservation laws in $G_{red}$. This is detailed in Theorem~2.1 (and also Lemmas~1 and~2 of the ESM) in~\cite{fw13}, which we now briefly recall. Note that if a system is conservative,  by picking any basis of the conservation relations and adding to the linear forms in this basis a sufficiently high multiple of any positive vector in the orthogonal of the stoichiometric subspace $\mathcal S$, we can  assume that all linear forms in the basis have positive coefficients.

We denote as before  $\mathcal{S}^\perp$ the orthogonal of the stoichiometric subspace associated to $G$ and we let $\mathcal{S}_{red}^\perp$ be the corresponding subspace for $G_{red}$. These two linear subspaces in $\R^{n+p}$ and $\R^n$, respectively, are in bijection via the projection onto the first $n$ coordinates. The inverse linear mapping is given as follows: any  linear {conservation relation} $\ell$ of  $G_{red}$ is lifted to the linear conservation relation $\bar{\ell}$ of $G$ defined by:
\begin{equation}\label{eq:cons_laws_G}
 \bar{\ell}(x,u)= \ell(x+\sum_{k=1}^p u_k y^{(k)}) = \ell(x) + \sum_{k=1}^p \ell(y^{(k)}) \,  u_k,
\end{equation}
where $y^{(k)}$ is any choice of a core complex in the same connected component as $U_k$. The following lemma is a straightforward consequence of this description.

\begin{lemma} \label{lem:conservative} 
Let $G$ be a chemical reaction network with set of intermediate species $\mathcal{I}$ and let $G_{red,\mathcal{I} }$ be the associated reduced network. Then, the associated chemical reaction system of $G_{red,\mathcal{I} }$ is conservative if and only if the associated system of $G$ is conservative.
\end{lemma} 

\subsection{Steady states and rate constants}\label{ssec:rate_constants}
We first need to briefly introduce some facts related to the Laplacian $\mathcal{L}(G)$ of a digraph $G$. Recall that a spanning tree of a digraph is a subgraph that contains all the vertices, is connected and acyclic as an undirected graph. An $i$-tree of a graph is a spanning tree where the vertex $i$ is its unique sink (that is, the only node with outdegree zero). When $G$ is strongly connected (i.e. for any ordered pair of vertices of $G$ there is a directed path from the first vertex to the second one), the kernel of $\mathcal{L}(G)$ has dimension one and there is a known generator $\rho(G)$, where the $i$-th coordinate equals:
\begin{equation}\label{eq:rhoi}
\rho_i(G)=\underset{\mathcal{T}\; an \; i-tree}{\sum}\pi(\mathcal{T}),
\end{equation}
where $\pi(\mathcal{T})$ is the product of the labels of all the edges of $\mathcal{T}$.
We refer the reader to~\cite{MiGu13,tutte} for a detailed account.

\medskip

We now recall the relation between the steady states of the mass-action kinetics system associated to a given network with intermediates and the steady states of the corresponding reduced system.
Consider the network $G$ with reaction rate constants $\kappa$. By Theorem 3.1~in \cite{fw13} we have an expression of the concentration of the intermediates at steady state in terms of the reaction rate constants $\kappa$ and the concentration of the core species.
The system of differential equations $\dot{u_i}=0$, for all intermediates $U_i$, $i=1,\dots,p$, is linear on the $u_i's$, and the concentration $u_i$ at steady state can be written as follows in terms of the concentrations of the core species:
\begin{equation}\label{eq:concentracioninter}
u_i=\sum_{y\in{\mathscr{C}_{G_{red}}}}{\mu_{i,y}(\kappa) \, x^y}.
\end{equation}
Here, $y$ denotes a core complex and it holds that $\mu_{i,y}\neq 0$ if and only if $y\uri U_i$. In this case,  ${\mu_{i,y}(\kappa)}$ is a nonnegative rational function on the reaction rate constants $\kappa$ with homogeneous numerator and denominator. 
In fact, in the proof of Theorem~3.1 of~\cite{fw13} it is shown how to obtain $\mu_{i,y}$ from a graphical procedure that we briefly recall.
For a fixed core complex $y$ consider the digraph $G^y$ whith node set $\{*\}$ and all the intermediates $U_i$ such that $y\uri U_i$, and edges $U_i\overset{\kappa_{ij}}{\longrightarrow}U_j$, $*\overset{\kappa_{yU_j}}{\longrightarrow}U_j$ if the corresponding constant is nonzero, and $U_i\overset{\sum_{U_i\to y'}\kappa_{U_iy'}}{\longrightarrow}*$ (i.e. if there are several core complexes to which $U_i$ reacts, the edges are collapsed and the label equals the sum of the labels of the corresponding collapsed edges). It is easy to see that $G^y$ is strongly connected. Consider the generator $\rho(G^y)$ of the kernel of $\mathcal{L}(G^y)$. If $\rho_i$ denotes the entry of $\rho(G^y)$ that corresponds to $U_i$ and $\rho_*$ the entry of $\rho(G^y)$ that corresponds to $*$, then $\mu_{i,y}=\frac{\rho_i}{\rho_*}$.
Thus, the denominator does not vanish over the positive orthant. 
We   show an explicit computation in the following example.

\begin{example}[Example~\ref{ex:int}, continued] \label{ex:int_cont}
 Consider the reaction networks $G$ and $G_{red}$ in Figure~\ref{fig:GG}.  
 Here we explain how to  obtain $\mu_{1,y_1}$.

\noindent\begin{minipage}{0.75\textwidth}
 Consider $G^{y_1}$, the digraph with node set $\mathcal{I}\cup\{*\}$ and labeled edges, constructed from $G$ in Figure~\ref{fig:GG} as shown on the right. Call $\rho$ the generator of the kernel of $\mathcal{L}(G^{y_1})$, then:
 $$\mu_{1,y_1}(\kappa)=\frac{\rho_1}{\rho_{*}}=\frac{\kappa_1\kappa_6\kappa_7}{\kappa_3\kappa_6\kappa_7 + \kappa_5\kappa_6\kappa_7}=\frac{\kappa_1}{\kappa_3+\kappa_5}.$$
\end{minipage}
\begin{minipage}{0.25\textwidth}
 \[\xymatrix@!=0.3pc{& {U_1}\ar^{\kappa_5}[rd] \ar_{\kappa_1}[ld]& \\ 
 {*}\ar_{\kappa_3}@<0.8ex>[ru]\ar_{\kappa_2}[rd]&& {U_3.}\ar^{\kappa_7}[ll]\\ 
 & {U_2}\ar_{\kappa_6}[ru] & }\]
\end{minipage}

\end{example}

\medskip

Consider a network $G$ with reaction rate constants $\kappa$.
Denote by $\varphi_i(x) =\sum_{y\in{\mathscr{C}_{G_{red}}}}{\mu_{i,y}(\kappa) x^y}$ for $i=1, \dots, p$, and $\varphi= (\varphi_1, \dots, \varphi_p)$. 
After replacing $u = \varphi(x)$ into the differential equations $\dot{x}_i$ of $G$:
\begin{equation}\label{eq:dynG}
 \dot{x}_i = g_i(x)= f_i (x, \varphi(x)), \ i =1, \dots, n,
\end{equation}
we obtain a dynamical system associated to the network $G_{red}$ with mass-action kinetics, with reaction rate constants defined by the application 
\begin{equation}\label{eq:T}
T:
\R_{>0}^{\# \mathcal{R}_G} \to \R_{>0}^{\# \mathcal{R}_{G_{red}}}
\end{equation}
given by the following assignments:
\begin{equation}\label{eq:tau}
 T(\kappa)_{y y'} = \tau(\kappa)_{y y'} =\kappa_{yy'} + \sum_{j=1}^p \kappa_{U_j y'} \, \mu_{j,y}(\kappa),
\end{equation}
depending on the reaction rate constants $\kappa$ of $G$ \cite[Theorem 3.2]{fw13}. Here, $\kappa_{yy'}$ is positive when $y\overset{\kappa_{yy'}}{\longrightarrow} y'$ in $G$ (and $\kappa_{yy'}=0$ otherwise), and $\kappa_{U_j y'}$ is positive if $U_j\overset{\kappa_{U_j y'}}{\longrightarrow} y'$ in $G$ (and $\kappa_{U_j y'}=0$  otherwise), where $\mu_{j,y}$ is in \eqref{eq:concentracioninter}.

It is important to note that if the network $G_{red}$ has reaction rate constants $\tau(\kappa)$ as in \eqref{eq:tau}, the steady states of the mass-action chemical reaction systems defined by $G$ and $G_{red}$ are in one-to-one correspondence via the projection $\pi(x,u) = x$.
%\end{remark}

\begin{example}[Example~\ref{ex:int_cont}, continued]
Using~\eqref{eq:tau}, we can express the reaction rate constants $\tau$ in terms of the reaction rate constants $\kappa$.  We then have $\tau_1=\kappa_3\, \mu_{1,y_1}(\kappa)=\frac{\kappa_1\kappa_3}{\kappa_3+\kappa_5}$, $\tau_2=\kappa_7\, \mu_{3,y_2}(\kappa)=\frac{\kappa_4\kappa_5}{\kappa_3+\kappa_5}$, $\tau_3=\kappa_7\, \mu_{3,y_1}(\kappa)=\frac{\kappa_1\kappa_5+\kappa_2\kappa_5+\kappa_2\kappa_3
}{\kappa_3+\kappa_5}$. 
\end{example}

From the proofs of Theorems~3.1 and~3.2 in~\cite{fw13} it can be immediately inferred that the system $f_1, \dots, f_{n+p}$ can be transformed via {\em reversible} linear operations into the system defined by the equalities~\eqref{eq:concentracioninter} and $g_1, \dots, g_n$.  We state  
their result in the following lemma.

\begin{lemma}[\cite{fw13}]\label{lem:linear}
Let $G$ be a  chemical reaction network with set of intermediate species $U_1,\dots, U_p$, core species $X_1, \dots,X_n$ and rate constants $\kappa$ with $f_{\kappa, i}, i=1, \dots, n+p$ as in~\eqref{eq:CRN}, consider the reduced network $G_{red}$
with rate constants $\tau=T(\kappa)$ with $g_{\tau, i} = g_i, i=1, \dots, n$ and let $v_i = u_i -\varphi(x)$, $i = 1,\dots, p$, with $g_i$ and $\varphi$ as in~\eqref{eq:dynG}.
Then, the system
\[f_{\kappa,1}(x,u) = \dots = f_{\kappa, n+p}(x,u) =0\]
is equivalent via $\R$-linear operations to the system
\[g_{\tau,1}(x) = \dots = g_{\tau, n}(x) = v_1(x,u) = \dots= v_p(x,u) =0.\]
\end{lemma}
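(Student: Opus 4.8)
The plan is to make the linear change of coordinates explicit and to track the two systems of polynomial equations under it. The key observation is that, by construction, the species of $G$ split into the $p$ intermediates $U_1,\dots,U_p$ and the $n$ core species $X_1,\dots,X_n$, and correspondingly the right-hand sides $f_{\kappa,1},\dots,f_{\kappa,n+p}$ split into the equations $\dot u_1,\dots,\dot u_p$ governing the intermediates and the equations $\dot x_1,\dots,\dot x_n$ governing the core species. First I would recall from the discussion preceding~\eqref{eq:concentracioninter} that the subsystem $\dot u_1 = \dots = \dot u_p = 0$ is \emph{linear} in the variables $u_1,\dots,u_p$ (the complexes involving an intermediate species $U_i$ are exactly the $U_i$ themselves, so every monomial $x^y$ appearing with a $u$-factor is linear in $u$), with coefficient matrix the Laplacian-type matrix of the intermediate subnetwork, which is invertible over the field of rational functions in $x$ and $\kappa$ by the strong-connectedness and $i$-tree description of $\rho(G^y)$. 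Solving this linear system yields precisely $u_i = \varphi_i(x)$, i.e.\ $u_i - \varphi_i(x) = v_i(x,u) = 0$; and since the passage from $(\dot u_1,\dots,\dot u_p)$ to $(v_1,\dots,v_p)$ is left-multiplication by an invertible (over $\R(x,\kappa)$, in fact already row operations suffice since the relevant matrix has entries that are constants in $\kappa$) matrix, this is a reversible $\R$-linear operation on that block.

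Next I would handle the core block. Having the relations $v_i = 0$ available, one substitutes $u_i = \varphi_i(x)$ into the equations $f_{\kappa,i}(x,u)$ for $i=1,\dots,n$; by the very definition~\eqref{eq:dynG} this produces $g_i(x) = f_{\kappa,i}(x,\varphi(x)) = g_{\tau,i}(x)$ with $\tau = T(\kappa)$ as in~\eqref{eq:tau}, which is the content of Theorems~3.1 and~3.2 of~\cite{fw13}. The point to be careful about is that "substituting $u=\varphi(x)$" is not literally an $\R$-linear operation on polynomials; instead I would phrase it as: since each $f_{\kappa,i}$ for $i\le n$ is affine-linear in $u$ (again because $u$ enters only through the monomials $u_k x^{y}$ coming from reactions out of $U_k$), we can write $f_{\kappa,i}(x,u) = f_{\kappa,i}(x,\varphi(x)) + \sum_{k=1}^p a_{ik}(x)\,(u_k - \varphi_k(x)) = g_{\tau,i}(x) + \sum_k a_{ik}(x)\, v_k(x,u)$ for suitable polynomials (indeed constants-times-monomials) $a_{ik}$. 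Hence subtracting $\R[x]$-combinations of the already-transformed rows $v_1,\dots,v_p$ from the rows $f_{\kappa,1},\dots,f_{\kappa,n}$ turns the latter into $g_{\tau,1},\dots,g_{\tau,n}$; these are again reversible operations.

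Putting the two blocks together: starting from $(f_{\kappa,1},\dots,f_{\kappa,n+p})$, first apply the invertible transformation on the intermediate block to replace $(f_{\kappa,n+1},\dots,f_{\kappa,n+p})$ by $(v_1,\dots,v_p)$ — here I should check, or simply note as in~\cite{fw13}, that this block transformation can be taken to have coefficients not involving the core block so that it is genuinely a linear operation on the full vector of polynomials — and then apply the row operations above to replace $(f_{\kappa,1},\dots,f_{\kappa,n})$ by $(g_{\tau,1},\dots,g_{\tau,n})$. The composite is a reversible $\R$-linear (more precisely $\R[x]$-linear, which is what "equivalent via $\R$-linear operations" means here in the sense relevant to the common zero locus) transformation carrying one system to the other, which is exactly the assertion of the lemma. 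I expect the main subtlety — not a deep obstacle, but the step deserving care — to be the bookkeeping that the first block operation really is performed by a matrix that does not disturb, and is not disturbed by, the core equations, so that the two steps compose into a single honest linear equivalence rather than an ad hoc two-stage manipulation; this is implicit in "the proofs of Theorems~3.1 and~3.2 in~\cite{fw13}" and I would cite it as such while spelling out the affine-linearity-in-$u$ point that makes it work.
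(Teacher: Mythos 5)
Your argument is correct and is essentially the argument the paper relies on: the paper gives no proof of this lemma, attributing it to the proofs of Theorems~3.1 and~3.2 in \cite{fw13}, which proceed exactly as you describe (solve the $u$-linear intermediate block by the invertible Laplacian-type matrix, then use affine-linearity in $u$ of the core block to rewrite $f_{\kappa,i}=g_{\tau,i}+\sum_k a_{ik}v_k$). The one point you over-complicate: since every reaction out of an intermediate $U_k$ contributes $\kappa_{U_k y'}u_k(y'-U_k)$, the coefficient of $u_k$ (hence of $v_k$) in each $f_{\kappa,i}$, $i\le n$, is a genuine constant depending only on $\kappa$ (not a monomial in $x$), and the intermediate-block matrix likewise has entries depending only on $\kappa$; so the whole transformation is honestly $\R$-linear with constant coefficients (the matrix $K^*$ in the proof of Theorem~\ref{thm:IFT}), and your hedge about needing $\R[x]$-linear operations is unnecessary.
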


\subsection{Non-confluent networks and boundary steady states}\label{ssec:bss}

When studying the dynamics of a mass-action kinetics system in the positive orthant, it is important to understand the ocurrence of boundary steady states
in stoichiometric compatibility classes that intersect the positive orthant.
These are called {\em relevant} boundary steady states. We now relate the absence of relevant boundary steady states associated with a network with intermediate species and its corresponding reduced network.

\begin{definition}[Non-confluent networks]\label{def:nc}
 Let $G$ be a chemical reaction network with intermediate species $\mathcal{I}$. A  core complex $y$ with $y \uri U$ and $U \in \mathcal I$, is called an {\em input} of $U$. We say that $G$ is non-confluent 
 %(denoted by {\em nc}) 
 if any $U \in \mathcal I$ has  a unique input.
\end{definition}

Standard examples of non-confluent reaction networks include the distributive sequential phosphorylations (see Figure~\ref{fig:nsite}). %in~\eqref{eq:red2}. 
More in general, if we add intermediates to a network $G'$ with one of the following local shapes around any intermediate $U$, with $y, y' \in \mathcal C_{G'}$, we get a non-confluent extension:
\begin{equation}\label{eq:ncexamples}
y\ce{
<=>[\kappa_1][\kappa_2]
U
->[\kappa_3]
} y' \qquad y\ce{
->[\kappa_1]
U
->[\kappa_2]
} y' \qquad y\ce{
<=>[\kappa_1][\kappa_2]
U.
}
\end{equation}

\begin{remark}\label{rem:uk}
Given a non-confluent reaction network $G$ and $U_k$ an intermediate species,  let $y^{(k)}$ be the unique complex in $G$ with $y^{(k)} \uri U_k$. Then, by
~\eqref{eq:concentracioninter} we have that there exist a positive $\mu_k$ such that 
\begin{equation} \label{eq:ukbinom}
 u_k=\mu_kx^{y^{(k)}}.
\end{equation}
\end{remark}

\smallskip

\begin{proposition}\label{prop:bss} 
Let $G$ be a non-confluent chemical reaction network with intermediate species $\mathcal{I}$ and rate constants $\kappa$. Consider the reduced network $G_{red, \mathcal{I}}$ with rate constants $T(\kappa)$ as in~\eqref{eq:tau}. %~\eqref{eq:katau}.  
The reduced network $G_{red,\mathcal{I}}$  has no relevant boundary steady states if and only if $G$ does not have relevant boundary steady states.
\end{proposition}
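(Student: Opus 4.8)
The plan is to transfer the question about boundary steady states across the correspondence between $G$ and $G_{red,\mathcal I}$ furnished by Theorem~3.1 and Theorem~3.2 of~\cite{fw13}, using the non-confluence hypothesis to control what happens at the boundary. First I would fix notation: a steady state of $G$ is a pair $(x^*,u^*)\in\R_{\ge 0}^{n+p}$, and by~\eqref{eq:concentracioninter} (together with Remark~\ref{rem:uk}) at any steady state we have $u_k^*=\mu_k\,(x^*)^{y^{(k)}}$ for the \emph{unique} input $y^{(k)}$ of $U_k$. The projection $\pi(x,u)=x$ gives the stated one-to-one correspondence between steady states of $G$ (with rates $\kappa$) and steady states of $G_{red,\mathcal I}$ (with rates $T(\kappa)$); I would use Lemma~\ref{lem:linear} to justify that $\pi$ is a bijection on steady-state sets, its inverse being $x\mapsto(x,\varphi(x))$.

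Next I would match up the stoichiometric compatibility classes. By \S\ref{ssec:conslaws}, $\mathcal S^\perp$ and $\mathcal S_{red}^\perp$ are in bijection via projection onto the first $n$ coordinates, and a conservation relation $\ell$ of $G_{red}$ lifts to $\bar\ell(x,u)=\ell(x)+\sum_k \ell(y^{(k)})u_k$ by~\eqref{eq:cons_laws_G}. Choosing $W$ for $G_{red}$ and $\bar W$ the corresponding lifted matrix for $G$, I claim $\bar W(x,\varphi(x))=Wx$ for all $x$: indeed, the $\ell$-component of $\bar W(x,\varphi(x))$ is $\ell(x)+\sum_k\ell(y^{(k)})\varphi_k(x)$, and here I must invoke that in a non-confluent network $\varphi_k(x)=\mu_k x^{y^{(k)}}$ is a \emph{monomial}, so $\bar\ell(x,\varphi(x))$ need not equal $\ell(x)$ pointwise — rather the relevant fact is that a point $x^*$ lies in a class meeting $\R_{>0}^n$ iff the lifted point $(x^*,\varphi(x^*))$ lies in a class meeting $\R_{>0}^{n+p}$. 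This is where I would be careful: given $c=Wx^0$ with $x^0\in\R_{>0}^n$, the lifted conservation vector is $\bar c = \bar W(x^0,\varphi(x^0))$, and $(x^0,\varphi(x^0))\in\R_{>0}^{n+p}$ since $\varphi_k(x^0)=\mu_k(x^0)^{y^{(k)}}>0$; conversely any positive point of $G$ projects to a positive point of $G_{red}$. Hence a compatibility class of $G$ is \emph{relevant} iff the corresponding class of $G_{red}$ is relevant.

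Finally I would combine the two correspondences. Suppose $G$ has a relevant boundary steady state $(x^*,u^*)$, so $(x^*,u^*)\in\R_{\ge0}^{n+p}\setminus\R_{>0}^{n+p}$ lies in a class meeting the positive orthant. Since $u_k^*=\mu_k(x^*)^{y^{(k)}}$, if all $x_i^*>0$ then all $u_k^*>0$, contradicting that the point is on the boundary; therefore some $x_i^*=0$, i.e.\ $x^*$ is a boundary point of $\R_{\ge0}^n$, and by the class correspondence it is relevant for $G_{red}$ — and $x^*$ is a steady state of $G_{red}$ by the steady-state correspondence. Conversely, a relevant boundary steady state $x^*$ of $G_{red}$ lifts to $(x^*,\varphi(x^*))$, which is a steady state of $G$ in a relevant class, and it is a boundary point because $x^*$ has a zero coordinate. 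This proves the equivalence. The main obstacle, and the only place non-confluence is truly needed, is the step asserting that a boundary steady state of $G$ must have a zero \emph{core} coordinate: without the monomial form~\eqref{eq:ukbinom} an intermediate concentration could vanish while all core concentrations stay positive (a general $\mu_{i,y}$-combination $\sum_y\mu_{i,y}(\kappa)x^y$ is a sum of positive terms, so that particular issue does not actually arise even in the confluent case — the real subtlety is matching the compatibility classes, since the lift $x\mapsto(x,\varphi(x))$ is not linear and one must check it carries relevant classes to relevant classes and back, which the explicit formula~\eqref{eq:cons_laws_G} makes routine). I would present the argument in the order: steady-state bijection, class correspondence, then the boundary-coordinate dichotomy.
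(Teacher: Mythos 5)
Your overall architecture (steady-state bijection via $\pi$, matching of compatibility classes, boundary-coordinate dichotomy) is the same as the paper's, and the first and third steps are fine. The gap is in the middle step, which you declare ``routine'' but which is in fact the entire content of the proposition and the only place the non-confluence hypothesis is consumed. The bijection $\ell\mapsto\bar\ell$ between conservation laws does \emph{not} induce the class correspondence you need: the class of $x^*$ in $G_{red}$ is cut out by $\ell(x)=\ell(x^*)$, while the class of $(x^*,u^*)$ in $G$ is cut out by $\ell(x)+\sum_k\ell(y^{(k)})u_k=\ell(x^*)+\sum_k\ell(y^{(k)})u^*_k$, and the offset depends on the intermediate coordinates of the point. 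Concretely, in the forward direction you are handed a positive point $(x^0,u^0)$ with $\bar\ell(x^0,u^0)=\bar\ell(x^*,u^*)$ and must produce a positive $z$ with $\ell(z)=\ell(x^*)$. Your implicit candidate is $x^0$ itself (``any positive point of $G$ projects to a positive point of $G_{red}$''), but $x^0$ lies in the wrong class: $\ell(x^0)=\ell(x^*)+\sum_k\ell(y^{(k)})(u^*_k-u^0_k)\neq\ell(x^*)$ in general. The identity only gives $\ell(x^*)=\ell(v^0)$ with $v^0=x^0+\sum_k(u^0_k-u^*_k)\,y^{(k)}$, and $v^0$ may have negative coordinates. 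The paper closes this using non-confluence: since $u^*_k=\mu_k (x^*)^{y^{(k)}}$ by~\eqref{eq:ukbinom}, the only negative contributions $-u^*_k y^{(k)}_j$ to a coordinate $v^0_j$ occur when $(x^*)^{y^{(k)}}>0$ and $y^{(k)}_j>0$, which forces $x^*_j>0$; hence $\alpha v^0+(1-\alpha)x^*$ is positive for small $\alpha>0$ and still satisfies $\ell(\cdot)=\ell(x^*)$. For a confluent extension this sign control fails (the $y^{(k)}$ used in~\eqref{eq:cons_laws_G} need not be the only input, so $u^*_k>0$ does not force $(x^*)^{y^{(k)}}>0$), so the hypothesis enters exactly here --- not, as you correctly note, in the ``some core coordinate vanishes'' step.

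The same problem occurs, in milder form, in your converse direction: the lift $(x^0,\varphi(x^0))$ of the positive witness $x^0$ does not lie in the class of $(x^*,\varphi(x^*))$, because $\varphi$ is nonlinear and $\ell(x^0)=\ell(x^*)$ does not imply $\bar\ell(x^0,\varphi(x^0))=\bar\ell(x^*,\varphi(x^*))$. The correct candidate is $(x^0,u^*)$, which does satisfy $\bar\ell(x^0,u^*)=\ell(x^0)+\sum_k\ell(y^{(k)})u^*_k=\bar\ell(x^*,u^*)$, but it may fail to be strictly positive when some $u^*_k=0$; the paper repairs this by moving a small amount $\alpha_k$ from $x^0$ along $y^{(k)}$ into the $k$-th intermediate coordinate. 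Both constructions are short, but neither is automatic, and your write-up supplies neither.
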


\begin{proof}
Denote $\Sp_G=\{X_1,\dots, X_n,U_1,\dots,U_p\}$, with intermediates $\mathcal{I}=\{U_1,\dots,U_p\}$.
Consider conservation laws $\ell$ of $G_{red,\mathcal{I}}$ and $\bar{\ell}$ of  $G$ as described in~\eqref{eq:cons_laws_G}.
Let $(x,u)\in\R_{\geq 0}^{n+p}$ be a relevant boundary steady state of $G$. Then, there exists a positive point $(x^0,u^0) \in \R_{>0}^{n +p}$ such that 
\begin{equation}\label{eq:comparing_cons_laws}
 \bar{\ell}(x,u)=\bar{\ell}(x^0,u^0).
\end{equation}
We can assume that there is an index $j$ such that $x_j=0$.  Indeed, if some $u_i=0$, we have by~\eqref{eq:concentracioninter} that some coordinate of $x$ must vanish because all coefficients $\mu_{i,y}$ are nonnegative and at least one of them is strictly positive by the definition of intermediate species. Moreover, if $(x,u) \not=0$, we cannot have that all $x_i=0$ because this implies by the same relations that all $u_k=0$. Then, $\pi(x,u)=x$ is a nonzero boundary steady state of $G_{red,\mathcal{I}}$.  

We claim that $x$ is a relevant boundary steady state of $G_{red,\mathcal{I}}$.  We need to find a positive point $(x')^0 \in \R_{>0}^n$ such that
\begin{equation}\label{eq:primas}
\ell(x) = \ell((x')^0).
\end{equation}
Given $k$, let $y^{(k)}$ be the unique core complex with $y^{(k)} \uri U_k$. Then, $u_k=\mu_kx^{y^{(k)}}$ for some positive constant $\mu_k$ by~\eqref{rem:uk}.
Now, from~\eqref{eq:cons_laws_G} and~\eqref{eq:comparing_cons_laws} we can write 
\[ \ell(x)=\ell(x^0+\sum_{k=1}^p (u^0_k-\mu_k x^{y^{(k)}}) \,  y^{(k)}) = \ell(v^0),\]
with $v^0=x^0+\sum_{k=1}^p (u^0_k-\mu_k x^{y^{(k)}}) y^{(k)}$. Consider the affine combination $v_\alpha=\alpha v^0+(1-\alpha) x$. 
As $x^{y^{(k)}}\neq 0$ if and only if $\{\ell: y^{(k)}_\ell\neq 0\} \subseteq  \{\ell: x_{\ell}>0\}$, whenever $v^0_\ell<0$ necessarily $x_\ell>0$. 
We can then take $\alpha_0$ small enough such that $v_{\alpha_0} >0$ and pick the positive vector $(x')^0=v_{\alpha_0}$.
%\alpha v^0+(1-\alpha) x=\alpha (x^0+\sum_{k=1}^p (u^0_k-\mu_k x^{y^{(k)}}) y^{(k)})+(1-\alpha) x$.

Now, let $x\in\R_{\geq 0}^n$ be a relevant boundary steady state of $G_{red,\mathcal{I}}$.  This means that there exists a positive point $x^0 \in \R_{>0}^n$ such that $\ell(x^0) = \ell(x)$ for any conservation law $\ell$ of $G_{red,\mathcal{I}}$. 
We can complete $x$ to a steady state $(x,u)\in\R_{\geq 0}^{n+p}$ by~\eqref{eq:concentracioninter} and we are then looking for a positive vector $((x^0)',(u^0)')\in \R_{>0}^{n +p}$ such that $\bar{\ell}((x^0)',(u^0)')=\bar{\ell}(x,u)$ for any conservation law $\ell$ of $G_{red,\mathcal{I}}$. We can pick a positive vector $(\alpha_1,\dots,\alpha_p)$ with each coordinate $\alpha_k$ small enough such that $x^0-\sum_{k=1}^p\alpha_ky^{(k)}$ is a positive vector, where $y^{(k)}$ is a core complex  in the same connected component as $U_k$. Then $\bar{\ell}(x,u)=\bar{\ell}(x^0,u)=\ell(x^0+\sum_{k=1}^p u_k y^{(k)})$, which equals $\ell\left(x^0-\sum_{k=1}^p \alpha_ky^{(k)} +\sum_{k=1}^p (u_k +\alpha_k) y^{(k)} \right)$. Consider then $(x^0)'=x^0-\sum_{k=1}^p \alpha_ky^{(k)}$ and $(u^0)'$ such that $(u^0)_k'=u_k +\alpha_k$.
\end{proof}

\subsection{Lifting steady states} \label{ssec:lifting}

We now study the lifting of steady states from a reduced network $G_{red}$ to an extended network $G$. This relation was first studied in Theorem~5.1 in~\cite{fw13}.
We also refer the reader to~\cite{Murad23,Murad22,Murad18}.

We need to introduce the notion of non-degenerate steady states when the stoichiometric subspace is not the whole space. 
In this case, there are non-trivial linear relations between $f_1,\dots, f_s$ and the usual condition of non-degeneracy of a steady state $x^*$ given by the non-vanishing of the Jacobian determinant $\det J(f)(x^*)$  needs to be extended.

\begin{definition}\label{def:nondeg}
Consider a dynamical system as in~\eqref{eq:CRN} with associated stoichiometric subspace $\mathcal S$.  A steady state $x^* \in \R^s_{>0}$ is said to be non-degenerate if $\ker(J(f))(x^*) \cap \mathcal S = \{0\}$. 
\end{definition}

Standard linear algebra arguments show the following result.

\begin{lemma} \label{lem:nondeg}
Given a dynamical system as in~\eqref{eq:CRN} with associated stoichiometric subspace $\mathcal S = \{x \in \R^s \, : \, W x=0\} $ of dimension $s-d$, and a positive steady state $x^*$, the following assertions are equivalent:
\begin{itemize}
 \item[(i)] $x^*$ is non-degenerate.
 \item[(ii)] \label{(2)} There exist $s-d$ linearly independent functions $f_{i_1}, \dots, f_{i_{s-d}}$ such that  the  $s \times s$ 
 matrix with first $s-d$ rows given by the Jacobian of these  functions evaluated at $x^*$ and the last $d$ rows corresponding  to the matrix $W$,  has nonzero determinant. 
\end{itemize}
\end{lemma}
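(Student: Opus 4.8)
The plan is to unwind both conditions into statements about the same linear map and show they agree. Write $J = J(f)(x^*)$ for the full $s\times s$ Jacobian at the positive steady state $x^*$. Since $\mathcal S = \ker W$ has dimension $s-d$ and $W$ has full rank $d$, the space $\mathcal S$ is exactly the set of vectors killed by the $d$ rows of $W$. Condition (i) says $\ker J \cap \mathcal S = \{0\}$, i.e. no nonzero vector $w$ satisfies both $Jw = 0$ and $Ww = 0$ simultaneously; equivalently, the $(s+d)\times s$ matrix obtained by stacking $J$ on top of $W$ has trivial kernel, hence rank $s$.

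The key observation is that the rows of $J$ are the gradients $\nabla f_1,\dots,\nabla f_s$ (evaluated at $x^*$), and because of the conservation laws these gradients all lie in $\mathcal S$ — indeed $W J = W J(f) = J(Wf) = J(0) = 0$ since $Wf \equiv 0$ identically, so the row span of $J$ is contained in $\mathcal S$, a subspace of dimension $s-d$. Therefore $\rank J \le s-d$, and moreover the row span of $J$ is complementary to the row span of $W$ (which is $\mathcal S^\perp$... more precisely, row span of $W$ meets row span of $J$ only in $0$, since one sits in $\mathcal S^\perp$-complementary position — here I should be careful: the relevant point is that rows of $J$ lie in $\mathcal S$, rows of $W$ span a complement, so stacking them the total row rank is $\rank J + d$). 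Hence the stacked matrix $\binom{J}{W}$ has rank exactly $\rank J + d$, and condition (i) — that this rank equals $s$ — is equivalent to $\rank J = s-d$, i.e. to $J$ having maximal possible rank $s-d$ among its rows.

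Now for (ii): since $\rank J \le s-d$ always, the condition that some $s-d$ of the functions $f_{i_1},\dots,f_{i_{s-d}}$ have Jacobian rows at $x^*$ that, together with the rows of $W$, form an invertible $s\times s$ matrix, is equivalent to saying those $s-d$ rows of $J$ are linearly independent \emph{and} span a complement to the row span of $W$ inside $\R^s$. But any $s-d$ linearly independent rows of $J$ automatically lie in $\mathcal S$ (row span of $J$) and automatically form a complement to $\mathcal S^\perp \supseteq$ row span of $W$... again I must phrase it via the rank computation: stacking $s-d$ independent rows of $J$ with the $d$ rows of $W$ gives a matrix of rank $(s-d) + d = s$ precisely because the first block sits in $\mathcal S$ and the second block's row span meets $\mathcal S$ trivially. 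So (ii) holds iff $J$ possesses $s-d$ linearly independent rows, i.e. iff $\rank J = s-d$ — exactly the reformulation of (i) obtained above. This closes the equivalence.

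I would organize the write-up as: (1) note $WJ(f)(x^*)=0$, so rows of $J(f)(x^*)$ lie in $\mathcal S$ and $\rank J \le s-d$; (2) observe that for \emph{any} collection of $s-d$ rows $r_1,\dots,r_{s-d}$ of $J$, the $s\times s$ matrix stacking them over $W$ is invertible iff $r_1,\dots,r_{s-d}$ are linearly independent, because $\operatorname{span}\{r_i\}\subseteq \mathcal S$ and the row span of $W$ is a complement of $\mathcal S$ in $\R^s$ (dimension count $(s-d)+d=s$, trivial intersection since a vector in the span of $W$'s rows that also lies in $\mathcal S$ is orthogonal to all of $\mathcal S$ hence zero); and (3) separately observe that $\ker J \cap \mathcal S = \{0\}$ iff the stacked matrix $\binom{J}{W}$ has trivial kernel iff $\rank \binom{J}{W} = s$ iff $\rank J + d = s$ iff $\rank J = s-d$. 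Combining (2) and (3) gives (i)$\Leftrightarrow$(ii). The main obstacle — really the only subtle point — is keeping straight the interplay between row spaces and the subspace $\mathcal S$ versus $\mathcal S^\perp$: the rows of $J$ live in $\mathcal S$ while the rows of $W$ span $\mathcal S^\perp$, and $\mathcal S \oplus \mathcal S^\perp = \R^s$, so these two row blocks are always "transverse" in the right sense; once that is pinned down the rest is bookkeeping with ranks.
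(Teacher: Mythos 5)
Your overall strategy (reduce both (i) and (ii) to statements about the stacked matrix $\binom{J}{W}$, using $WJ=0$) is the right one, but the central claim on which everything rests is false, and the argument does not close. The identity $WJ=J(Wf)=0$ says that every \emph{column} of $J=J(f)(x^*)$ lies in $\ker W=\mathcal S$ (equivalently $\operatorname{im}J\subseteq\mathcal S$, clear since $f$ takes values in $\mathcal S$); read row-wise it says only that the rows of $J$ satisfy the $d$ independent relations $\sum_i W_{ki}\,\nabla f_i=0$, whence $\operatorname{rank}J\le s-d$. It does \emph{not} say that the gradients $\nabla f_i(x^*)$ lie in $\mathcal S$. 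Already for the network $A\to B$ one has $J=\bigl(\begin{smallmatrix}-\kappa&0\\ \kappa&0\end{smallmatrix}\bigr)$ and $\mathcal S=\operatorname{span}\{(-1,1)\}$: the row $(-\kappa,0)$ is not in $\mathcal S$. Consequently the two intermediate equivalences you derive are both wrong. Take $s=2$, $d=1$, $W=(1\;\;1)$, $J=\bigl(\begin{smallmatrix}1&1\\-1&-1\end{smallmatrix}\bigr)$: then $WJ=0$ and $\operatorname{rank}J=s-d=1$, yet $\ker J=\mathcal S$, so (i) fails, and stacking the row $(1,1)$ over $W$ gives a singular matrix, so (ii) fails as well. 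So neither (i) nor (ii) is equivalent to $\operatorname{rank}J=s-d$; the transversality of the row space of $J$ with the row space of $W$ is precisely the content of the lemma, not something you get for free from $WJ=0$.

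The repair keeps your bookkeeping but routes the transversality through orthogonal complements of kernels rather than through the (false) containment of row spaces in $\mathcal S$. The row space of $J$ is $(\ker J)^{\perp}$ and the row space of $W$ is $\mathcal S^{\perp}$, so their intersection is $(\ker J+\mathcal S)^{\perp}$. Hence (ii) holds iff $\operatorname{rank}J=s-d$ (so that $s-d$ independent rows exist, necessarily spanning the full row space) \emph{and} $\ker J+\mathcal S=\R^s$. Since $\dim\mathcal S=s-d$ and $\operatorname{rank}J\le s-d$ always, a dimension count shows that these two conditions together are equivalent to $\ker J\cap\mathcal S=\{0\}$, i.e.\ to (i): in one direction $\ker J\cap\mathcal S=\{0\}$ forces $\dim\ker J\le d$, hence $\operatorname{rank}J=s-d$ and $\ker J\oplus\mathcal S=\R^s$; in the other, $\dim\ker J=d$ together with $\ker J+\mathcal S=\R^s$ forces trivial intersection. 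With this one step replaced, the rest of your write-up goes through.
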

Of course, if condition~(ii) holds, the determinant constructed as above for {\em any} choice of $s-d$ linearly independent functions $f_i$ will also be nonzero.

\medskip

Let $G$ be a chemical reaction network with reaction rate constants $\kappa$ and intermediate species $\mathcal{I}=\{U_1,U_2,\dots,U_p\}$. 
Consider the reduced network $G_{red} = G_{red, \mathcal{I}}$ with rate constants defined by the application $T$ in~\eqref{eq:T}.
In Theorem~5.1 in~\cite{fw13} it is shown that if  $G$ has reaction rate constants $\kappa^0$ and we  consider the reduced network $G_{red}$ with reaction rate constants $\tau^0 =T(\kappa^0)$, then if $G_{red}$ has $m$ non-degenerate steady states in a stoichiometric compatibility class, then there is a curve of rate constants $\kappa'$ with $T(\kappa')=\tau^0$ such that the associated system has at least $m$ non-degenerate steady states in a compatibility class of $G$.

We extend their result in Theorem~\ref{thm:IFT}, by describing regions in the space of parameters for which we can lift the steady states of the reduced network to the original network. Given $\tau^0$ in the image of $T$, we denote the fiber of $\tau^0$ by
\begin{equation} \label{eq:Ftau}
{\mathcal F}_{\tau^0}=\{\kappa>0 \, : \, T(\kappa)=\tau^0 \}.
\end{equation}
In the next theorems, we describe open sets in ${\mathcal F}_{\tau^0}$ such that multistationarity is lifted to the mass-action systems of all the extended networks with rate constants lying in these open sets.

\begin{theorem}\label{thm:IFT} 
Let $G$ be a chemical reaction network with intermediate species $\{U_1,U_2,\dots,U_p\}$ with associated map $T$ as in~\eqref{eq:T}. Consider the reduced network $G_{red}$ with rate constants $\tau^0 \in {\rm im}(T)$.

Given $\varepsilon >0$ and $\mu_{i,y}$ as in~\eqref{eq:concentracioninter}, the open set of the fiber ${\mathcal F}_{\tau^0}$:
\[
{\mathcal F}_{\tau^0,\varepsilon} =\{\kappa \in  {\mathcal F}_{\tau^0}\, : \, \mu_{i,y}(\kappa) 
< \varepsilon\ \, \, \forall y\in \mathscr{C}_{G_{red}},\, i=1,\dots,p\}\]
is nonempty.

Moreover, fix $c_1,\dots,c_d\in\R$ and consider the stoichiometric compatibility class $\mathcal S_c$ defined by the equations $\ell_1(x)=c_1,\dots,\ell_d(x)=c_d$, where $\ell_1(x), \dots, \ell_d(x)$ is a basis of conservations laws of the system associated with $G_{red}$.

 \smallskip
 
\noindent \begin{minipage}{0.6\textwidth}
 Then, if  $G_{red}$ has $m$ non-degenerate positive steady states in $\mathcal S_c$, there exists a positive value $\varepsilon_0$ such that for all $\kappa\in \mathcal F_{\tau^0,\varepsilon}$ with $0<\varepsilon<\varepsilon_0$, there are at least $m$ non-degenerate positive steady states of $G$ with reaction rate constants $\kappa$ in the stoichiometric class of the system associated with $G$ defined by $\bar{\ell}_1(x,u)=c_1, \dots,\bar{\ell}_d(x,u)=c_d$, where $\bar{\ell}_1, \dots, \bar{\ell}_d$ is a basis of conservation laws of the system associated with $G$ obtained from $\ell_1(x), \dots, \ell_d(x)$ as in~\eqref{eq:cons_laws_G}.
\end{minipage}
\begin{minipage}{0.35\textwidth}
 \begin{center}
 \tikzset{
 flecha/.style={
              ->,
              thick},
              }
 \scalebox{0.9}{
    \begin{tikzpicture}
        \node (O) at (-5,-2) {};
		\coordinate (A) at ($(O)+(1.75,1.3)$);
		\coordinate (B) at ($(O)+(2.75,1.25)$);
		\coordinate (C) at ($(O)+(3.2,2)$);
		\coordinate (D) at ($(O)+(2.9,2.5)$);
		\coordinate (E) at ($(O)+(2,3)$);
		\coordinate (F) at ($(O)+(1,1.8)$);
		\node (G) at ($(O)+(1.5,1.85)$) {};
		\node (H) at ($(O)+(3,1.5)$) {};
        \draw[fill=purple!30] (A) to[out=-60, in=200, looseness=1.5] (B) to[out=10, in=220, looseness=1.5] (C) to[out=40, in=-60, looseness=1] (D) to[out=120, in=-10, looseness=1] (E) to[out=170, in=60, looseness=1.5] (F) to[out=240, in=140, looseness=1.5] cycle;
        \draw[color=magenta!50,fill=magenta!20,dashed]  (B) to[out=10, in=220, looseness=1.5] (C) to[out=40, in=-60, looseness=1] (D) to[out=150, in=50, looseness=1]  ($(F)+(1,0)$) to[out=240, in=140, looseness=1.5] cycle;
        \node[] (t) {$\tau^0$};
        \node[inner sep=8pt,left=2cm of t] (k1) {};
        \node[inner sep=1pt, above left= 0.05mm of H] (k) {\footnotesize $\mathcal{F}_{\tau^0,\varepsilon}$};
        \draw[fill=black!80] ($(k)+(-1,0.15)$) circle (0.5mm) node[below] {$\kappa^0$};
        \draw[fill=black!80] ($(t)+(-0.1,0.15)$) circle (0.5mm);
        \path[flecha] ($(k)+(-1,0.15)$) edge [bend left=45] ($(t)+(-0.1,0.15)$);
        \draw node[left=5mm of E]{\small $\mathcal{F}_{\tau^0}$};
   \end{tikzpicture}
  }
 \end{center}
 \end{minipage}

\end{theorem}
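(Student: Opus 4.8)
The plan is to prove the two assertions separately: the nonemptiness of $\mathcal F_{\tau^0,\varepsilon}$ is elementary, and the lifting of the $m$ steady states rests on a continuity/implicit-function argument built on Lemmas~\ref{lem:linear} and~\ref{lem:nondeg}.

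\emph{Nonemptiness of $\mathcal F_{\tau^0,\varepsilon}$.} Since $\tau^0\in{\rm im}(T)$, fix some $\kappa^0\in\mathcal F_{\tau^0}$, and for $\lambda>0$ let $\kappa^{(\lambda)}$ be obtained from $\kappa^0$ by multiplying by $\lambda$ every rate constant of a reaction whose source is an intermediate complex (the $\kappa_{U_iU_j}$ and $\kappa_{U_iy'}$) and leaving all other rate constants unchanged. In the digraph $G^y$ used to compute $\mu_{i,y}$ (recalled before Example~\ref{ex:int_cont}) the edges out of the node $*$ are the unscaled labels $\kappa_{yU_j}$ while every edge out of an intermediate is a scaled label; hence every $i$-tree uses exactly one unscaled out-edge at $*$ and one scaled out-edge at each of the other $q-1$ intermediates of $G^y$, and every $*$-tree uses one scaled out-edge at each of the $q$ intermediates. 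So, via~\eqref{eq:rhoi}, $\rho_i(G^y)$ and $\rho_*(G^y)$ get multiplied by $\lambda^{q-1}$ and $\lambda^q$ respectively, giving $\mu_{i,y}(\kappa^{(\lambda)})=\lambda^{-1}\mu_{i,y}(\kappa^0)$ for all $i,y$. Plugging this into~\eqref{eq:tau} shows $T(\kappa^{(\lambda)})=T(\kappa^0)=\tau^0$, so $\kappa^{(\lambda)}\in\mathcal F_{\tau^0}$, and any $\lambda>\varepsilon^{-1}\max_{i,y}\mu_{i,y}(\kappa^0)$ puts it in $\mathcal F_{\tau^0,\varepsilon}$.

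\emph{Lifting.} Let $x^{*,1},\dots,x^{*,m}\in\mathcal S_c$ be the non-degenerate positive steady states of $G_{red}$ with rate constants $\tau^0$. By Lemma~\ref{lem:linear}, for any $\kappa\in\mathcal F_{\tau^0}$ the steady states of $G$ are the points $(x,\varphi^\kappa(x))$ with $g_{\tau^0}(x)=0$, where $\varphi^\kappa$ depends on $\kappa$ through the $\mu_{i,y}$; and by~\eqref{eq:cons_laws_G} such a point lies in $\{\bar\ell_1=c_1,\dots,\bar\ell_d=c_d\}$ iff $\ell_i(x)+\Delta_i^\kappa(x)=c_i$ for $i=1,\dots,d$, where $\Delta_i^\kappa(x):=\sum_{k=1}^p\ell_i(y^{(k)})\,\varphi_k^\kappa(x)=\sum_k\ell_i(y^{(k)})\sum_y\mu_{k,y}(\kappa)\,x^y$. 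Choosing $n-d$ linearly independent components $\tilde g$ of $g_{\tau^0}$, I would study the map
\[
F^\kappa\colon \R^n_{>0}\to\R^n,\qquad F^\kappa(x)=\bigl(\tilde g(x),\ \ell_1(x)+\Delta_1^\kappa(x)-c_1,\ \dots,\ \ell_d(x)+\Delta_d^\kappa(x)-c_d\bigr),
\]
whose zeros are exactly the $x$-coordinates of the positive steady states of $G$ with rate constants $\kappa$ lying in $\{\bar\ell_i=c_i\}$. At the ``ideal'' value $\Delta^\kappa\equiv0$ this is $F^0(x)=(\tilde g(x),(\ell_i(x)-c_i)_i)$, which vanishes at every $x^{*,j}$, and $JF^0(x^{*,j})$ is precisely the matrix of Lemma~\ref{lem:nondeg}(ii) for $G_{red}$, hence nonsingular. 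A short Jacobian computation --- writing the matrix of Lemma~\ref{lem:nondeg}(ii) for $G$ with the functions $g_{\tau^0,i}$ and $v_k=u_k-\varphi_k^\kappa$ provided by Lemma~\ref{lem:linear}, and using the identity block in the $u$-columns of the $v$-rows to clear the $u$-columns of $\bar W$ --- shows that, at a zero $x$ of $F^\kappa$, the condition $\det JF^\kappa(x)\neq0$ is equivalent to non-degeneracy of the steady state $(x,\varphi^\kappa(x))$ of $G$.

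\emph{Conclusion, and the main obstacle.} Pick disjoint closed balls $\bar B_j\subset\R^n_{>0}$ around the $x^{*,j}$ on which $x^{*,j}$ is the only zero of $F^0$ and $\det JF^0\neq0$. On each $\bar B_j$ one has $\|\Delta^\kappa_i\|_{C^1}\le\varepsilon\,C_j$ whenever $\kappa\in\mathcal F_{\tau^0,\varepsilon}$, so there is $\varepsilon_0>0$ such that for all $\kappa\in\mathcal F_{\tau^0,\varepsilon}$ with $\varepsilon<\varepsilon_0$ the map $F^\kappa$ is $C^1$-close enough to $F^0$ to be zero-free on $\partial B_j$ and to have nonvanishing Jacobian determinant on $\bar B_j$; a standard topological-degree (or implicit-function) argument then yields a unique zero $x^{*,j}(\kappa)\in B_j$, which is positive and, by the previous step, a non-degenerate steady state of $G$. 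These $m$ distinct points give the desired $m$ non-degenerate positive steady states of $G$ in $\{\bar\ell_i=c_i\}$. The main obstacle is that one cannot simply ``lift and stop'': the naive lift $(x^{*,j},\varphi^\kappa(x^{*,j}))$ is a steady state of $G$ but it sits in the class $\{\bar\ell_i=c_i+\Delta_i^\kappa(x^{*,j})\}$ rather than the prescribed one, so the shift $\Delta_i^\kappa$ has to be tracked and the steady-state-plus-conservation system treated as a genuine $C^1$-perturbation $F^\kappa$ of $F^0$ --- and it is here, together with the non-full-dimensionality of the stoichiometric subspaces, that Lemmas~\ref{lem:linear} and~\ref{lem:nondeg} must be invoked with care.
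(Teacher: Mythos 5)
Your proposal is correct and follows essentially the same route as the paper: nonemptiness of $\mathcal F_{\tau^0,\varepsilon}$ by rescaling the rate constants of reactions whose source is an intermediate (the paper's $\kappa^\theta$ with $\theta=\lambda^{-1}$, which you justify in more detail via the spanning-tree formula~\eqref{eq:rhoi}), and the lifting by substituting $u=\varphi^\kappa(x)$, recognizing that the only $\kappa$-dependence left is the $O(\varepsilon)$ shift $\Delta_i^\kappa$ in the conservation laws, and perturbing off the non-degenerate zeros of the augmented reduced system. The only cosmetic difference is that the paper keeps the $(x,u)$-variables and applies the implicit function theorem with the vector $\mu\in\R^N$ as an explicit parameter (the Jacobian being block-diagonal with an $I_p$ block), whereas you eliminate $u$ and run a $C^1$-perturbation/degree argument on $F^\kappa:\R^n_{>0}\to\R^n$; both are valid and your identification of the conservation-class shift as the real content of the argument matches the paper's.
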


\begin{proof} %[Proof of  Theorem~\ref{thm:IFT}] 
The proof is an extension of that in Theorem~5.1 in~\cite{fw13}. 
The species of $G$ are ordered as: $X_1,\dots, X_n, U_1,\dots, U_p$, and the species of $G_{red}$ are $X_1,\dots, X_n$. 
%We denote by $x_i$ the concentration of $X_i$ and by $u_i$ the concentration of $U_i$.
Given reaction rate constants $\kappa=(\kappa_{yy'})$ and $\theta\in\R_{>0}$, they define a curve of rate constants $\kappa^{\theta}=(\kappa^{\theta}_{yy'})$ by $\kappa^{\theta}_{yy'}= \frac{\kappa_{yy'}}{\theta}$ if $y$ is an intermediate species and $\kappa^{\theta}_{yy'}=\kappa_{yy'}$ otherwise. Then it can be checked that for any $\kappa \in \mathcal F_{\tau^0}$,  %$T (\kappa) = T(\kappa^\theta)=\tau^0$,
$\kappa^{\theta}\in \mathcal{F}_{\tau^0}$ and
$\mu^{\theta}_{i,y}=\theta \,  \mu_{i,y}$.
Therefore, given any positive $\varepsilon$, if we take $\theta$ small enough it happens that $\kappa^{\theta}\in \mathcal{F}_{\tau^0, \varepsilon}$, and then, $\mathcal{F}_{\tau^0,\varepsilon}$ is a nonempty open set of $\mathcal{F}_{\tau^0}$.

\smallskip

Let now $g_1, \dots, g_n$ be the polynomials defined in~\eqref{eq:dynG} associated to the network $G_{red}$. As there exists a non-degenerate positive steady state of $G_{red}$ in the stoichiometric compatibility class $\mathcal S_c$, without loss of generality we can assume by Lemma~\ref{lem:nondeg} that $g_{d+1}, \dots, g_n$ are linearly independent and generate the $\R$-vector space generated by all the $g_i$'s. It follows that $f_{d+1}, \dots, f_n$ are also linearly independent and generate all the $f_i$'s. As we want to prove that there exist $\kappa$ such that the system $\bar{\ell}_1(x,u) - c_1 =\dots=\bar{\ell}_d(x,u) - c_d=
f_{\kappa,1}(x,u)=\dots=f_{\kappa,n+p}(x,u)=0$ has $m$ positive non-degenerate solutions, we can consider the solutions of the equivalent system defined by
\begin{small}
\begin{equation}\label{sysaugmented:G} 
f_{c,\kappa}(x,u)=(\bar{\ell}_1(x,u) - c_1,\dots,\bar{\ell}_d(x,u) - c_d, 
f_{\kappa,d+1}(x,u),\dots,f_{\kappa,n+p}(x,u)),
\end{equation}
\end{small}%
where $f_{\kappa,i}(x,u)$ is the $i$-th coordinate of the function $f_{\kappa}(x,u)$ as in~\eqref{eq:CRN}. 
Then, a vector $(x,u)$ is a steady state of $G$ for the reaction rate constants $\kappa$ and for the stoichiometric compatibility class defined by $c_1,\dots,c_d$, if and only if $f_{c,\kappa}(x,u)=0$.

\smallskip

Analogously, we consider the system augmented by conservation laws corresponding to the network $G_{red}$. A vector $x$ is a steady state of $G_{red}$ for reaction rate constants $\tau$ in the stoichiometric compatibility class defined by $c_1,\dots,c_d$ if and only if $x$ is a zero of the following function:
\begin{small}
\begin{equation}\label{sysaugmented:G'} 
g_{c,\tau}(x)= (\ell_1(x) - c_1,\dots,\ell_d(x) - c_d, g_{\tau,d+1}(x),\dots,g_{\tau,n}(x)),
\end{equation}
\end{small}%
where 
\begin{small}
\[g_\tau(x) = \sum_{y\rightarrow y'\in \mathcal{R}_{G_{red}}} \tau_{yy'}x^y(y'- y).\]
\end{small}%

In the proof of Theorem 5.1 in~\cite{fw13} and recalled in Lemma~\ref{lem:linear}, it is shown that the system in~\eqref{sysaugmented:G} can be transformed via linear operations into the equations below.  That is, there exists an invertible matrix $K^*$ such that:
\begin{small}
\begin{align}\label{sysaugmented:G_subs} 
\nonumber f_{c,\kappa}(x,u)=& K^* \, (\bar{\ell}_1(x,\varphi(x)) - c_1,\dots,\bar{\ell}_d(x,\varphi(x)) - c_d,\\
\nonumber & f_{\kappa,d+1}(x,\varphi(x)),\dots,f_{\kappa,n}(x,\varphi(x)), u_1-\varphi_1(x),\dots,u_p-\varphi_p(x))^t\\
=& K^* \, (\ell_1(x)-c_1 + \sum_{k=1}^p a_{1k} \, \varphi_k(x),\dots,\ell_d(x)-c_d + \sum_{k=1}^p a_{dk} \, \varphi_k(x),\\
\nonumber & f_{\kappa,d+1}(x,\varphi(x)),\dots,f_{\kappa,n}(x,\varphi(x)),u_1-\varphi_1(x),\dots,u_p-\varphi_p(x))^t,
\end{align}
\end{small}%
where $\varphi_i(x) =\sum_{y\in{\mathscr{C}_{G_{red}}}}{\mu_{i,y}(\kappa) x^y}$ for $i=1, \dots, p$,  $\varphi= (\varphi_1, \dots, \varphi_p)$, as in~\eqref{eq:dynG}, and $a_{ik}=\ell_i(y^{(k)})$ with $y^{(k)}$ a core complex in the same connected component as $U_k$. Then, the Jacobian matrix of ${f}_{c,\kappa}$ evaluated at $(x,u)$ is nonsingular if and only if the Jacobian matrix of the right-hand side equations evaluated at $(x,u)$ is nonsingular.

\smallskip

Note also that, if $T(\kappa)=\tau$, then $f_{\kappa,d+1}(x,\varphi(x))=g_{\tau,d+1}(x),\dots,f_{\kappa,n}(x,\varphi(x))=g_{\tau,n}(x)$.
Take now all the nonzero coefficients $\mu_{i,y}$, for all $i=1,\dots,p$ and all complexes $y\in G_{red}$, and let $N$ be the number of nonzero $\mu_{i,y}$ coefficients. Let $\mu\in \R_{>0}^N$ be the vector with coordinates $\mu_{i,y}$, in some order. Fix $c_1,\dots,c_d$, $\tau^0 \in {\rm im}(T)$ and consider the function $F_{c,\tau^0}\colon\R^n\times \R^p \times \R^N  \to \R^{n+p}$ defined by:
\begin{small}
\begin{equation}\label{sysreplace:G}
F_{c,\tau^0,i}(x,u,\mu)= \left\{ \begin{array}{lc}
            \ell_i(x)-c_i + \sum_{k=1}^p\sum_{y\in{\mathscr{C}_{G_{red}}}}{\mu_{k,y} a_{ik}x^y}, \ i=1,\dots,d, \\
            g_{\tau^0,i}(x), \ i=d+1,\dots,n,\\ 
            u_{i-n} - \sum_{y\in{\mathscr{C}_{G_{red}}}}{\mu_{{i-n},y} x^y},\ i=n+1,\dots,n+p.
             \end{array}
   \right.
\end{equation}
\end{small}%

Assume now that $G_{red}$ has $m$ nondegenerate positive steady states $x^{(i)}\in \R^n_{>0}$, $i=1,\dots,m$ in the stoichiometric compatibility class defined by the total amounts $c_1,\dots,c_d$, and for the reaction rate constants $\tau^0$.

\smallskip

For $\mu=0$, $F_{c,\tau^0}(x^{(i)},0,0)=0$, because $\ell_j(x^{(i)})=c_j$ for $j=1,\dots,d$ and $g_{\tau^0,j}(x^{(i)})=0$ for all $j=d+1,\dots,m$, and the Jacobian matrix of $F_{c,\tau^0}(x,u,0)$ has the form:
\[J_{(x,u)}(F_{c,\tau^0})(x,u,0)=\begin{pmatrix}
J_x(g_{c,\tau^0}) & 0 \\
0 & I_p \\
\end{pmatrix}.\]

Since the steady states $x^{(i)}$ are nondegenerate, $J_x(g_{c,\tau^0})$ evaluated at $x^{(i)}$ is nonsingular for each $i=1,\dots,m$. 
Then, $J_{(x,u)}(F_{c,\tau^0})$ evaluated at $(x^{(i)},0,0)$ is nonsingular for each $i=1,\dots,m$. By the Implicit Function Theorem applied to $F$ at $(x^{(i)},0,0)$, there exists an open set $\mathcal{U}_i\subset \R^N$, with $0 \in \mathcal{U}_i$, and an open set $\mathcal{V}_i\in \R^n\times \R^p$, with $(x^{(i)},0)\in \mathcal{V}_i$ such that for all $\mu \in \mathcal{U}_i$, there is a vector $(x^{(i)}(\mu),u^{(i)}(\mu))\in \mathcal{V}_i$ such that $F_{c,\tau^0}(x^{(i)}(\mu),u^{(i)}(\mu),\mu)=0$. 

\medskip

Because $x^{(i)}>0$ and $x^{(i)}$ is a nondegenerate steady state, we can take the open set $\mathcal{U}_i$ such that $x^{(i)}(\mu)>0$ and $J_{(x,u)}(F_{c,\tau^0})(x^{(i)}(\mu),u^{(i)}(\mu),\mu)$ is nonsingular for all $\mu \in \mathcal{U}_i$. We take $\mathcal{U}_i^+=\mathcal{U}_i\cap \R^N_{> 0}$. Since $x^{(i)}(\mu)>0$, it follows that $u^{(i)}(\mu)>0$ for all $\mu \in \mathcal{U}_i^+$, by construction. 
Because the $x^{(i)}$ are distinct, we can moreover choose the open sets $\mathcal{U}_i$ (smaller if needed, contained in the  original $\mathcal{U}_i$) such that $\cap_{i=1}^m \mathcal{V}_i=\emptyset$.

\medskip

Now we take $\mathcal{U}=\cap_{i=1}^m \mathcal{U}_i^+$. If $\kappa>0$ is such that $T(\kappa)=\tau^0$, and $\mu=\mu(\kappa)$ as in~\eqref{eq:tau} is such that $\mu \in \mathcal{U}$, then the original network $G$ has $m$ nondegenerate positive steady states  $(x^{(i)}(\mu),u^{(i)}(\mu))$, $i=1,\dots,m$ in the stoichiometric compatibility class defined by $c_1,\dots,c_d$.

\end{proof}

\begin{theorem}\label{thm:IFT2} 

With the same hypotheses as in Theorem~\ref{thm:IFT}, assume moreover that $G$ is an extension network obtained from $G_{red}$ by adding reactions of any of the forms depicted in~\eqref{eq:ncexamples}. 
Consider the reduced network $G_{red}$ with rate constants $\tau^0 \in {\rm im}(T)$. Given $M >0$, the open set of the fiber ${\mathcal F}_{\tau^0}$:
\[
{\mathcal F}_{\tau^0,M} =\{\kappa \in  {\mathcal F}_{\tau^0}\, : \, \kappa_{U_iy'}>M\ \, \text{for all reactions } U_i\to y',\, i=1,\dots,p\}\]
is nonempty.

Moreover, if $G_{red}$ has $m$ non-degenerate positive steady states in the stoichiometric compatibility class defined by the equations $\ell_1(x)=c_1,\dots,\ell_d(x)=c_d$, there exists a positive value $M_0$ such that for all $M \ge M_0$ there are at least $m$ non-degenerate positive steady states of $G$ with reaction rate constants $\kappa \in {\mathcal F}_{\tau^0,M}$ in the stoichiometric class of the system associated with $G$  defined by $\bar{\ell}_1(x,u)=c_1, \dots,\bar{\ell}_d(x,u)=c_d$, where $\bar{\ell}_1, \dots, \bar{\ell}_d$ are obtained from $\ell_1(x), \dots, \ell_d(x)$ as in~\eqref{eq:cons_laws_G}.

\end{theorem}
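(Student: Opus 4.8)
The plan is to deduce this from Theorem~\ref{thm:IFT}, using that for networks built out of the local pieces in~\eqref{eq:ncexamples} the requirement ``$\kappa_{U_iy'}$ large'' is, modulo the fiber equations, the same as the requirement ``$\mu_{i,y}$ small'' from~\eqref{eq:concentracioninter}. First I would compute $\mu_{i,y}$ for each of the three shapes with Remark~\ref{rem:uk} (equivalently the graphical recipe recalled before Example~\ref{ex:int_cont}). Labelling a piece around an intermediate $U=U_i$ with unique input $y=y^{(i)}$ as in~\eqref{eq:ncexamples}, one gets $\mu_{i,y}=\kappa_1/(\kappa_2+\kappa_3)$ in the first shape and $\mu_{i,y}=\kappa_1/\kappa_2$ in the second and the third, while $\mu_{i,y'}=0$ for every core complex $y'\neq y$ by non-confluence. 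Substituting into~\eqref{eq:tau}, the coordinate of $T(\kappa)$ attached to a collapsed reaction $y\to y'$ is $\tau_{yy'}=\kappa_{yy'}+\sum_j\kappa_{U_jy'}\,\mu_{j,y}$, a sum of nonnegative terms, in which the summand coming from $U_i$ is $\kappa_3\,\mu_{i,y}$ in the first shape and $\kappa_2\,\mu_{i,y}$ in the second; the third shape collapses to the trivial reaction $y\to y$, so its two rate constants do not enter any coordinate of $T$.

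The key inequality then reads: if $\tau^0\in{\rm im}(T)$ and $\kappa\in\mathcal F_{\tau^0,M}$, so that $\kappa_{U_iy'}>M$ for the reaction $U_i\to y'$, then $\kappa_{U_iy'}\,\mu_{i,y}\le\tau_{yy'}=\tau^0_{yy'}$ forces $\mu_{i,y}(\kappa)<\tau^0_{yy'}/M$ for every intermediate of the first two types. For an intermediate $U$ of the third type, whose rate constants are free on the fiber, I would simply fix $\kappa_{yU}=1$; this is harmless, and then $\kappa_{Uy}>M$ gives $\mu_{i,y}<1/M$. Setting $C:=\max\{1,\ \tau^0_{yy'}\}$ over the finitely many collapsed reactions arising from first- or second-type pieces, every $\kappa\in\mathcal F_{\tau^0,M}$ taken with $\kappa_{yU}=1$ on the third-type pieces satisfies $\mu_{i,y}(\kappa)<C/M$ for all $i$ and all core complexes $y$. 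Nonemptiness (and openness) of $\mathcal F_{\tau^0,M}$ I would get from the rescaling used in the proof of Theorem~\ref{thm:IFT}: for $\kappa\in\mathcal F_{\tau^0}$ (nonempty since $\tau^0\in{\rm im}(T)$) and $\theta>0$, the curve $\kappa^\theta$, with $\kappa^\theta_{zz'}=\kappa_{zz'}/\theta$ when $z$ is an intermediate and $\kappa^\theta_{zz'}=\kappa_{zz'}$ otherwise, stays in $\mathcal F_{\tau^0}$ and has $\kappa^\theta_{U_iy'}=\kappa_{U_iy'}/\theta>M$ once $\theta$ is small.

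To conclude, let $\varepsilon_0>0$ be the value furnished by Theorem~\ref{thm:IFT} for the data $(G_{red},\tau^0,c_1,\dots,c_d,m)$, and put $M_0:=C/\varepsilon_0$. For every $M\ge M_0$, pick $\kappa\in\mathcal F_{\tau^0,M}$ as above; then $\mu_{i,y}(\kappa)<C/M\le\varepsilon_0$, i.e.\ $\kappa\in\mathcal F_{\tau^0,\varepsilon}$ with $\varepsilon:=C/M<\varepsilon_0$, so Theorem~\ref{thm:IFT} produces $m$ non-degenerate positive steady states of $G$ with reaction rate constants $\kappa$ in the stoichiometric class $\bar\ell_1(x,u)=c_1,\dots,\bar\ell_d(x,u)=c_d$, the $\bar\ell_i$ being obtained from the $\ell_i$ as in~\eqref{eq:cons_laws_G}. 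When $G$ is assembled solely from the first two shapes, the bound $\mu_{i,y}(\kappa)<C/M$ holds for \emph{every} $\kappa\in\mathcal F_{\tau^0,M}$, so the conclusion then holds for all such $\kappa$; specialising to the enzymatic piece $E+S\rightleftarrows ES\to E+P$ this is Remark~\ref{rem:unasola}. The one point needing care is the third shape $y\rightleftarrows U$: since its rate constants are invisible to $T$, requiring only $\kappa_{Uy}>M$ does not make $\mu_{i,y}$ small, so one must also pin down the companion constant $\kappa_{yU}$; once this bookkeeping is done the argument is a clean reduction to Theorem~\ref{thm:IFT}, resting on the elementary observation that in these networks ``rate constants out of intermediates large'' and ``the $\mu_{i,y}$ small'' cut out the same region of the fiber.
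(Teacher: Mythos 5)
Your proof is correct and follows essentially the same route as the paper: compute $\mu_i$ explicitly for each local shape in~\eqref{eq:ncexamples}, observe that on the fiber $\kappa_{U_iy'}\,\mu_{i,y}\le\tau^0_{yy'}$ so that large catalytic constants force small $\mu$'s, and then invoke Theorem~\ref{thm:IFT} with $M_0$ of the form $C/\varepsilon_0$. You are in fact somewhat more careful than the paper about the canonical shape $y\rightleftarrows U$, whose two rate constants are invisible to $T$ and therefore require pinning down the companion constant $\kappa_{yU}$ before ``$\kappa_{Uy}>M$'' yields ``$\mu_i$ small''.
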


\begin{proof}
  If $G$ is an extension network obtained by adding reactions of any of the forms depicted in~\eqref{eq:ncexamples}, by Remark~\ref{rem:uk} we know that $u_i=\mu_ix^{y^i}$, where in the first two cases we have:
 \begin{center}
 \begin{tabular}{|c|c|c|}
 \hline
 Extension & $\mu_i=\mu_i(\kappa)$ & $\tau_{yy'}=T_{yy'}(\kappa)$\\
 \hline
 $y\ce{
<=>[\kappa_{yU_i}][\kappa_{U_iy}]
U_i
->[\kappa_{U_iy'}]
} y'$ &  
$\frac{\kappa_{yU_i}}{\kappa_{U_iy}+\kappa_{U_iy'}}$ &
$\kappa_{U_iy'}\mu_i$ \\
\hline
$y\ce{
->[\kappa_{yU_i}]
U_i
->[\kappa_{U_iy'}]
} y' $ & 
$\frac{\kappa_{yU_i}}{\kappa_{U_iy'}}$ &
$\kappa_{U_iy'}\, \mu_i$ \\
\hline
\end{tabular}
\end{center}

In the first type of extension, given any choice of positive
rate constants $\kappa_{U_iy}, \kappa_{U_iy'}$ if we pick $\kappa_{yU_i}=\tau_{yy'}\frac{\kappa_{U_iy}+\kappa_{U_iy'}}{\kappa_{U_iy'}}$ then $T_{yy'}(\kappa) = \tau_{yy'}$. For the second type of extension we can also solve for $\kappa_{yU_i}$ for any choice of $\kappa_{U_iy'}$. In the case of a canonical extension, $y\ce{<=>[\kappa_{yU_i}][\kappa_{U_iy}]} U_i$ we have that ${\mathcal F}_{\tau^0}$ is the whole set of parameters $\kappa$ because $y =y'$.

Consider $\varepsilon_0$ as in the statement of Theorem~\ref{thm:IFT}. 
It is clear that there exists $M_0 >0$ such that if $\kappa_{U_iy'}>M_0$ then $\mu_i<\varepsilon_0$ for every $i$, and the result follows by Theorem~\ref{thm:IFT}.
\end{proof}

\begin{remark}\label{rem:unasola}
 Note that the proof of Theorem~\ref{thm:IFT2} shows that for extensions of the form
 \[y\ce{
<=>[\kappa_{yU_i}][\kappa_{U_iy}]
U_i
->[\kappa_{U_iy'}]
} y',\]
it is enough that the rate constants $\kappa_{U_iy'}$ are sufficiently big to imply multistationarity. That is, multistationarity can be lifted when the {\em catalytic} reaction is ``sufficiently fast''.
\end{remark}

\section{Circuits of multistationarity}\label{sec:circuits}

In this section, we retrieve and simplify the results in~\cite{SFeliu}  about  {\em complete binomial networks} (see \cite[Definition~2.6]{SFeliu}).    
The condition of having a complete binomial network is a very general setting for the theoretical results in~\cite{SFeliu}, but it cannot be checked in general.
We consider a subclass of complete binomial networks in the sense of Definition~2.6 in~\cite{SFeliu} that we call {\em linearly equivalent to a binomial network} (see Definition~\ref{def:lebn} below) for which the technical hypotheses of the main results can be ensured (see Propositon~\ref{prop:CK}).   These conditions are more restrictive but we will show that they are satisfied by plenty of interesting biochemical networks, including the examples in all the following sections and all the examples in~\cite{SFeliu}. 

\medskip

As we are interested in the existence of positive steady states,  we first introduce a basic necessary condition. We write system~\eqref{eq:CRN} in the following form:
\begin{equation}\label{eq:CRNN}
\dot{x} \, = \, N \cdot R_\kappa(x),
\end{equation}
where we order the set of reactions $\mathcal R_G$, $R_\kappa(x)$ is the vector of size equal to the cardinality $r$ of $\mathcal R_G$ defined as follows: if the $i$-th reaction is $y \to y'$ then $R_\kappa(x)_i = \kappa_{y y'} x^y$ and $N$ is the integer matrix whose $i$-th column equals $y'-y$, known as the {\em stoichiometric matrix} of the network.

\begin{definition}\label{def:consistent}
A network is said to be {\em consistent} when $\ker(N) \cap \R^r_{>0} \neq \emptyset$.
\end{definition}
Note that the existence of a positive steady state $x$ of a reaction network with positive rate constants $\kappa$, gives the positive vector $R_\kappa(x)$ in $\ker(N)$ and so any network with a positive steady state must be consistent.
An interesting consequence of consistency is the following well-known result (see e.g. \cite{CFR}).

 \begin{lemma}\label{lem:clarke}
Assume that a reaction network with stoichiometric matrix $N$ is consistent and let $v \in \ker(N)\cap \R^r_{> 0}$. For any choice of $x \in \R^s_{>0}$ there exists a choice of rate constants $\kappa$ such that  $f_\kappa(x) =0$, that is, for which $x$ is a steady state of the associated system.
 \end{lemma}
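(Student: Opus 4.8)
\textbf{Proof plan for Lemma~\ref{lem:clarke}.}

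The plan is to exhibit the rate constants explicitly by matching, reaction by reaction, the monomial vector $R_\kappa(x)$ evaluated at the prescribed point $x$ with a given positive element $v\in\ker(N)\cap\R^r_{>0}$. First I would fix $v=(v_1,\dots,v_r)\in\ker(N)\cap\R^r_{>0}$ (which exists by the consistency hypothesis) and fix an arbitrary $x\in\R^s_{>0}$. For the $i$-th reaction $y\to y'$ of $G$, recall that $R_\kappa(x)_i=\kappa_{yy'}\,x^y$, and since $x>0$ the monomial $x^y=x_1^{y_1}\cdots x_s^{y_s}$ is a strictly positive real number. Hence I can simply define
\[
\kappa_{yy'} \, := \, \frac{v_i}{x^y} \, > \, 0 .
\]
This is a legitimate choice of positive rate constants, i.e.\ $\kappa\in\R^r_{>0}$.

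Next I would verify that $x$ is indeed a steady state of the system associated with these $\kappa$. By construction $R_\kappa(x)_i=\kappa_{yy'}\,x^y=v_i$ for every $i$, so $R_\kappa(x)=v$. Therefore
\[
f_\kappa(x) \, = \, N\cdot R_\kappa(x) \, = \, N\cdot v \, = \, 0 ,
\]
using the form~\eqref{eq:CRNN} of the dynamical system and the fact that $v\in\ker(N)$. This shows $f_\kappa(x)=0$, which is exactly the assertion.

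There is essentially no obstacle here: the only thing that makes the argument work is that evaluating a monomial at a strictly positive vector yields a strictly positive scalar, so dividing by it is harmless, and that the defining identity $\dot x=N\cdot R_\kappa(x)$ separates the dependence on $\kappa$ (linear, through $R_\kappa$) from the combinatorics of the network (the fixed matrix $N$). One small point worth stating, if a slightly stronger conclusion is desired, is that this construction produces \emph{all} rate constants for which $x$ is a steady state with $R_\kappa(x)$ proportional to a prescribed direction in $\ker(N)$; but for the statement as given, the single explicit choice above suffices.
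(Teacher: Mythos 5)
Your proof is correct and is essentially identical to the paper's: both set $\kappa_{yy'}=v_{yy'}/x^y$ so that $R_\kappa(x)=v$ and hence $f_\kappa(x)=N\cdot v=0$. Nothing further is needed.
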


\begin{proof}
We need to find a value of $\kappa \in \R^r_{>0}$ such that $f_\kappa (x) = N \cdot R_\kappa(x)$.
It is then enough to find a value of $\kappa$ for which $R_\kappa(x) = v$, which is satisfied by  setting $\kappa_{yy'}= v_{yy'}/x^y$ for every reaction $y \to y'$. 
\end{proof}

\begin{lemma} \label{lem:cons}
  Given any consistent reaction network $G'$, any extension network $G$ obtained by adding reactions of any of the forms depicted in~\eqref{eq:ncexamples} is also consistent. 
  \end{lemma}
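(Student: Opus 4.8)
The plan is to work directly with the stoichiometric matrices and exhibit a positive vector in the kernel of the extended matrix, starting from one for $G'$. Since $G'$ is consistent, there is $v' \in \ker(N') \cap \R^{r'}_{>0}$, where $N'$ is the stoichiometric matrix of $G'$ and $r' = \#\mathcal R_{G'}$. It suffices to treat the addition of a single intermediate $U$ of one of the three shapes in~\eqref{eq:ncexamples}, since a general extension is obtained by iterating this (an iterated extension of a consistent network is again an extension of a consistent network, so finitely many applications of the one-step case give the result). For a single added intermediate $U$, the species set gains one coordinate (that of $U$) and the reaction set gains either two reactions (the first two shapes) or two reactions forming a reversible pair (the third shape).

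The key step is to choose the new entries of the kernel vector. In the third case, $y \ce{<=>[\kappa_1][\kappa_2]} U$, the two new reaction vectors are $e_U - e_{\text{supp}(y)\text{-part}}$ and its negative (here $y$ is a core complex, so $y$ as a vector lives in the old coordinates and the reaction vector in the extended space is $(\,y - y\,$ in old coords$,\,+1$ or $-1$ in the $U$ coord$)$... more precisely the reaction $y \to U$ has vector $(-y, 1)$ in the block decomposition $\R^{s'} \times \R$, and $U \to y$ has vector $(y,-1)$). Assigning the same positive flux value $w$ to both directions, their contributions cancel, and the $U$-row of $N$ is satisfied. For the old rows we simply keep $v'$; the $U$-coordinate equation reads $w - w = 0$. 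So $(v', w, w)$ works for any $w>0$. For the first shape $y \ce{<=>[\kappa_1][\kappa_2]} U \ce{->[\kappa_3]} y'$, the reaction $y \to y'$ of $G_{red}$ that was collapsed from $y \uri U \uri y'$ carries some flux $v'_{yy'} > 0$ in $v'$; we route this same flux through the three new reactions: assign flux $t$ to $y \to U$, $s$ to $U \to y$, and $t - s$ to $U \to y'$, with $t > s > 0$, so that the net flux out of $U$ (namely $(t-s) + s = t$) balances the flux $t$ into it, and the net effect on the core species equals $t - s$ times the reaction vector $y' - y$; choosing $t - s = v'_{yy'}$ reproduces exactly the $G'$ dynamics on core species while keeping all three new fluxes positive. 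The $U$-row of $N$ reads $t - s - (t-s) = 0$. The second shape $y \to U \to y'$ is the same with the reverse reaction absent, i.e. $s = 0$ is not allowed but we just take flux $t$ on both $y \to U$ and $U \to y'$ with $t = v'_{yy'}$. In every case we produce an explicit positive vector in the kernel of the extended stoichiometric matrix, so $G$ is consistent.

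I do not expect a serious obstacle here; the only point requiring a little care is bookkeeping the block structure of $N$ under the extension — matching the column (= reaction vector) for each newly added reaction and checking that the single new row (the $U$-species balance) is satisfied by the flux assignment above, together with the fact that the collapsed reaction $y \to y'$ of $G_{red}$ appears with a single flux coordinate $v'_{yy'}$ that we are free to redistribute. One should also note that a reversible core reaction appearing in $G'$ is unaffected, and that if the same pair $(y,y')$ is linked both directly and through $U$ in $G$, one can still split $v'_{yy'}$ between the direct reaction and the intermediate route; this does not change the argument. Hence iterating over all added intermediates yields a positive kernel vector for $G$, proving consistency.
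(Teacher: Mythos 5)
Your proposal is correct and follows essentially the same route as the paper: reduce to a single added intermediate, then exhibit an explicit positive kernel vector for the extended stoichiometric matrix by routing the flux $v'_{yy'}$ of the collapsed reaction through the new columns (your one-parameter choice $t>s>0$ with $t-s=v'_{yy'}$ in the first case contains the paper's specific choice $t=2v'_{yy'}$, $s=v'_{yy'}$). The bookkeeping of the new $U$-row and the cancellation in the reversible case are handled exactly as in the paper's proof.
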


\begin{proof}
It is enough to show that given a consistent reaction network $G'$, the extension network $G$ obtained by adding a new complex $U$ to any choice of reaction $y \to y'$ in $G'$ and intermediate reactions of any of the forms depicted in~\eqref{eq:ncexamples} is also consistent.
 Let $N'$ be the stoichiometric matrix of $G'$. As $G'$ is consistent there exists a vector $v' \in \ker(N')\cap \R^{r'}_{> 0}$. We show how to get
 from $v'$ a positive vector in the kernel of the stoichiometric matrices of these  extensions.
 
 If $G$ is obtained by adding a (reversible) reaction of the form $y\ce{<=> U }$, there are two opposite columns in the stoichiometric matrix $N$ of $G$ for these new reactions. Then, the vector 
 $v\in\R^r_{>0}$ which coincides with $v'$ in the entries that correspond to reactions in $G'$ and $1$'s in the entries corresponding to the added columns belongs to $\ker(N)\cap \R^{r}_{> 0}$.
 
 Assume we replace the reaction $y \to y'$  in $G'$ (which we assume to be the $r'$-th reaction). 
 Assume first that $G$ is obtained by replacing this reaction  by the two reactions $y\to U\to y'$. Note that slightly abusing the notation, we have that  $y'-y = (y'- U) + (U - y)$. Then, the vector $v\in\R^r_{>0}$ which coincides with $v'$ in the entries that correspond to reactions in $G'$ except for the reaction $y \to y'$ and $v'_{r'}$ repeated in the entries corresponding to the two added columns, belongs to $\ker(N)\cap \R^{r}_{> 0}$.
 If instead $G$ is obtained by deleting the reaction $y \to y'$ and adding the reactions $y\ce{<=> U}\to y'$, then (again with a slight abuse of notation) ${y'}-{y}=({y'} - U)+2(U-{y})+({y}-U)$.  Then, the vector $v\in\R^r_{>0}$ which coincides with $v'$ in the entries that correspond to reactions in $G'$ except for the reaction $y \to y'$, and $v'_{r'}$ repeated in the coordinates that correspond to $U\to y$ and $U\to y'$ and $2v'_{r'}$ in the coordinate that corresponds to $y\to U$, lies in $\ker(N)\cap \R^{r}_{> 0}$.
 
\end{proof}

 We next  introduce the definition of extended systems.

\begin{definition}\label{def:extsys}
 Let $G'$ be a core network definining a mass-action system with vector of rate constants $\tau$. Given a set of intermediates $\mathcal I$, we say that the extended network $G= G'_{ext, \mathcal I}$ with rate constants $\kappa$ is an extended system if $\tau= \tau(\kappa)$ are related by the equations~\eqref{eq:tau}.
\end{definition}

% We have the following direct consequence of Lemmas~\ref{lem:clarke} and~\ref{lem:cons}:
The following lemma is a consequence of item (ii) in~\cite[Prop. 5.3]{SFeliu}.

\begin{lemma} \label{lem:ext}
 Given a core network $G'$ with rate constants $\tau$ and any extended network $G= G'_{ext, \mathcal I}$ obtained by adding reactions of any of the forms depicted in~\eqref{eq:ncexamples}, there exist (infinitely many) choices of vectors of rate constants $\kappa$ which define an extended system.
\end{lemma}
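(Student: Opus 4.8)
The statement asserts that for any core network $G'$ with prescribed rate constants $\tau$ and any extension $G = G'_{ext, \mathcal I}$ built only from the three local shapes in~\eqref{eq:ncexamples}, one can always find (infinitely many) vectors $\kappa$ of rate constants for $G$ such that $\tau = T(\kappa)$, i.e.\ $(G, \kappa)$ is an extended system in the sense of Definition~\ref{def:extsys}. The plan is to reduce to the case of adding a single intermediate $U$ and then solve the corresponding equations~\eqref{eq:tau} explicitly in each of the three local configurations, exactly as the table in the proof of Theorem~\ref{thm:IFT2} already does. Since intermediates can be added one at a time (the reduction of rate constants $T$ is built by composing the elimination of each intermediate, and the formulas~\eqref{eq:tau} for one intermediate do not interact with those for the others once the core reactions are fixed), it suffices to treat one intermediate.

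First I would fix an intermediate $U \in \mathcal I$ with unique input $y$ (non-confluence of these shapes is exactly the content of the display following Definition~\ref{def:nc}), and consider the three cases. In the canonical case $y \ce{<=>[\kappa_{yU}][\kappa_{Uy}]} U$ there is no new core reaction at all ($y = y'$), so $T(\kappa) = \tau$ holds for \emph{every} choice of the two new positive constants $\kappa_{yU}, \kappa_{Uy}$; this already gives infinitely many $\kappa$. In the case $y \ce{<=>[\kappa_{yU}][\kappa_{Uy}]} U \ce{->[\kappa_{Uy'}]} y'$, Remark~\ref{rem:uk} gives $u = \mu \, x^{y}$ with $\mu = \kappa_{yU}/(\kappa_{Uy} + \kappa_{Uy'})$, and~\eqref{eq:tau} reads $\tau_{yy'} = \kappa_{Uy'} \, \mu$. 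So I choose $\kappa_{Uy}, \kappa_{Uy'} > 0$ arbitrarily and set $\kappa_{yU} = \tau_{yy'} \, (\kappa_{Uy} + \kappa_{Uy'}) / \kappa_{Uy'}$, which is positive; letting $\kappa_{Uy}$ and $\kappa_{Uy'}$ range over the positive reals yields infinitely many solutions. In the case $y \ce{->[\kappa_{yU}]} U \ce{->[\kappa_{Uy'}]} y'$ one has $\mu = \kappa_{yU}/\kappa_{Uy'}$ and again $\tau_{yy'} = \kappa_{Uy'} \, \mu = \kappa_{yU}$, so one picks $\kappa_{Uy'} > 0$ arbitrarily and sets $\kappa_{yU} = \tau_{yy'}$. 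In every case all other core rate constants of $G$ are taken equal to those of $G'$, so that the collapsed reactions not involving $U$ contribute to $T(\kappa)$ exactly their old values.

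For the general case I would add the intermediates of $\mathcal I$ one by one, at each step applying the single-intermediate construction above to the current network playing the role of the core; the rate constants chosen for previously added intermediates are not disturbed, because~\eqref{eq:tau} expresses each $\tau_{yy'}$ as $\kappa_{yy'}$ plus contributions $\kappa_{U_j y'}\mu_{j,y}$ coming only from intermediates $U_j$ with $y \uri U_j$, and the shapes in~\eqref{eq:ncexamples} keep these contributions decoupled. An induction on $|\mathcal I|$ then finishes the argument, and the infinitude is preserved since at each stage we had a free positive parameter.

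I do not anticipate a genuine obstacle here: the whole point is that~\eqref{eq:tau} is linear (indeed affine) and triangular in the unknown $\kappa_{yU_i}$ once the downstream constants $\kappa_{U_iy}, \kappa_{U_iy'}$ are frozen, so it is always solvable with a positive solution. The only mild care needed is bookkeeping when an intermediate is inserted into a reaction $y \to y'$ that was itself produced by collapsing an earlier sequence through other intermediates — but Definition~\ref{def:redext} guarantees that the reduced network of $G$ is intrinsically $G'$ regardless of the order of elimination, so the identity $T(\kappa) = \tau$ is unambiguous and the step-by-step construction is consistent. Alternatively, one can simply invoke item~(ii) of~\cite[Prop.~5.3]{SFeliu} directly, of which this lemma is the announced consequence; the above gives a self-contained proof for the restricted family of extensions~\eqref{eq:ncexamples} that we actually use.
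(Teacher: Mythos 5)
Your proof is correct. The paper itself does not prove this lemma --- it is stated as an immediate consequence of item~(ii) of \cite[Prop.~5.3]{SFeliu} --- but the explicit case-by-case construction you give is precisely the computation the paper carries out inside the proof of Theorem~\ref{thm:IFT2}: the table there records $\mu_i=\kappa_{yU_i}/(\kappa_{U_iy}+\kappa_{U_iy'})$ (resp.\ $\kappa_{yU_i}/\kappa_{U_iy'}$) and $\tau_{yy'}=\kappa_{U_iy'}\mu_i$, and the text then solves for $\kappa_{yU_i}$ in terms of an arbitrary positive choice of the outgoing constants, exactly as you do, with the canonical case $y\rightleftharpoons U$ handled by the observation that ${\mathcal F}_{\tau^0}$ is the whole parameter space since no new core reaction is created. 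So your argument is a self-contained version of what the paper delegates to a citation, which is a legitimate gain. One small bookkeeping point worth making explicit: if an intermediate is inserted into a reaction $y\to y'$ that is \emph{retained} alongside the chain through $U$ (rather than deleted, as in Lemma~\ref{lem:cons}), then $\tau_{yy'}=\kappa_{yy'}+\kappa_{Uy'}\mu$ has two positive contributions and you must split $\tau_{yy'}$ (e.g.\ set $\kappa_{yy'}=\tau_{yy'}/2$) before solving for $\kappa_{yU}$; this does not affect solvability or the infinitude of solutions, but your phrase ``all other core rate constants of $G$ are taken equal to those of $G'$'' should be qualified accordingly.
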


The main question we address is: given a core network, which are the minimal sets of intermediates required for the associated extended network to have the capacity for multstationarity (see Definition~\ref{def:circuits}).
 From Proposition~\ref{prop:nc=canonical}, based on~\cite[Thm.~5.1]{fw13} and~\cite[Prop. 5.3 (ii)]{SFeliu}, we deduce that when $G'$ has two or more stoichiometrically compatible non-degenerate positive steady states, all the extensions by intermediates that arise from~\eqref{eq:ncexamples} are multistationary and the only minimal set is the empty set.

\medskip

We now define canonical extensions of networks  following~\cite{fw13}.

\begin{definition} \label{def:can}
Given a network $G'$, a subset $C \subset \mathcal C_{G'}$ and a choice of an intermediate complex  $U(y)$ for each $y \in C$, the associated canonical extension $G = G'_{ext,C} =G'_{ext, \{U(y) : y \in C\}}$ of $G'$ is obtained by adding the reactions $y\ce{<=>U(y) }$ for all $y \in C$.
\par
Given an extension $G$ of a network $G'$ , let $C \subset \mathcal C_G$ be the set of input complexes of $G$. We say that $G'_{ext,C}$ is the canonical extension of $G'$ associated to $C$.
\end{definition}
\noindent Note that canonical extensions are non-confluent (see \eqref{eq:ncexamples}).

\medskip
The following definition is based on~\cite[Definition~4.6]{SFeliu}.

\begin{definition}\label{def:circuits}
Given a network $G'$ and a subset $C' \subset \mathcal C_{G'}$, the {\em circuits of multistationarity associated to $G'$ and $C'$} are the minimal subsets $C$ of $C'$ (with respect to inclusion) for which the canonical extension $G'_{ext,C}$ is multistationary. In case $C' =\mathcal C_{G'}$, we say that they are the {\em circuits of multstationarity of $G'$}.
\end{definition}

In fact, the canonical extension associated to a subset of complexes is multistationary if and only if the other two non-confluent extensions
by intermediates in~\eqref{eq:ncexamples} are so.

\begin{proposition} \label{prop:nc=canonical}
Given a network $G'$ and a subset $C \subset \mathcal C_{G'}$, the canonical extension $G'_{ext,C}$ is multistationary if and only if adding to any reaction
$y  \to y'$ with $y \in C$ an intermediate $U$ with reactions either of the form $ y \to U \to y'$,  $ y \ce{<=> U } \to y'$ or $y \ce{<=> U}$ yields a multistationary network. 
 \end{proposition}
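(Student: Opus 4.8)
The plan is to prove the two implications of Proposition~\ref{prop:nc=canonical} by relating all three types of non-confluent extensions in~\eqref{eq:ncexamples} to the canonical one via the theory already developed. The key observation is that multistationarity of an extended network by intermediates only depends on the reduced network $G'$ together with the vector $\tau$ of reduced rate constants it defines, because of the one-to-one correspondence between steady states of $G$ with constants $\kappa$ and steady states of $G_{red}=G'$ with constants $T(\kappa)$ (recalled after~\eqref{eq:tau}), and because conservation laws and stoichiometric compatibility classes of $G$ and $G'$ correspond bijectively (Lemma~\ref{lem:conservative} and~\eqref{eq:cons_laws_G}). So the question ``is the extension $G$ multistationary?'' reduces to ``does there exist $\kappa$ in the admissible fiber such that $G'$ with rate constants $T(\kappa)$ has two stoichiometrically compatible positive steady states?'', i.e. whether the set of reduced rate constants $\{T(\kappa)\}$ reachable by the given extension type contains a multistationary point of $G'$.

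\emph{First} I would fix a subset $C\subset\mathcal C_{G'}$ and, for each $y\in C$, a reaction $y\to y'$ of $G'$ with source $y$ (if $y$ is a source of several reactions the argument is applied to each, or one notes the general extension mixes these; it suffices to treat one added intermediate at a time and iterate, exactly as in the proof of Lemma~\ref{lem:cons}). I would then compare the three possible intermediate gadgets attached to $y\to y'$: the canonical one $y\ce{<=>U}$, the ``full'' one $y\ce{<=>U}\to y'$, and the ``one-way'' one $y\to U\to y'$. For the canonical extension, $\mathcal F_{\tau^0}$ is the whole parameter space and $T$ is essentially the identity on the relevant coordinates (as noted in the proof of Theorem~\ref{thm:IFT2}, here $y=y'$), so $G'_{ext,C}$ is multistationary exactly when $G'$ with its original constants can be chosen multistationary — but actually the cleaner route is: canonical extension multistationary $\iff$ $G'$ is multistationary for \emph{some} $\tau$ attainable, and the canonical gadget imposes no constraint on $\tau$, so canonical extension multistationary $\iff$ $G'$ itself is multistationary. \emph{Then}, for the other two gadgets, the table in the proof of Theorem~\ref{thm:IFT2} shows $\tau_{yy'}=\kappa_{U_iy'}\mu_i$ with $\mu_i$ a positive rational function of the new constants; solving for $\kappa_{yU_i}$ (done explicitly there) shows that for \emph{any} prescribed target value of $\tau_{yy'}$ there is a choice of the new constants realizing it, i.e. Lemma~\ref{lem:ext} applies and the set of attainable reduced rate-constant vectors is again all of the relevant space. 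Hence each of the three extension types is multistationary if and only if $G'$ is multistationary for some choice of its rate constants, and the three conditions coincide.

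\emph{Finally}, to make this rigorous without circularity I would phrase it through the steady-state correspondence directly: if $G'_{ext,C}$ (canonical) is multistationary, pick $\kappa$ and a compatibility class of $G$ with two non-degenerate positive steady states; project via $\pi(x,u)=x$ (which by the correspondence after~\eqref{eq:tau} and by~\eqref{eq:cons_laws_G} lands in a single compatibility class of $G'$ with two positive steady states) to get that $G'$ with constants $\tau=T(\kappa)$ is multistationary; now take the ``full'' or ``one-way'' extension of $G'$ and, using Lemma~\ref{lem:ext}, choose new constants $\kappa'$ realizing exactly this $\tau$; the correspondence run backwards gives two stoichiometrically compatible positive steady states of this extension. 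The reverse direction is symmetric. One must also handle non-degeneracy if one wants the scpss count to match, but for mere capacity of multistationarity the bijection of positive steady states within corresponding classes suffices.

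\textbf{Main obstacle.} The delicate point is bookkeeping of conservation laws and compatibility classes: I must be sure that two positive steady states of $G$ lying in \emph{one} stoichiometric compatibility class project to two points of $G'$ in \emph{one} compatibility class (and conversely). This is where~\eqref{eq:cons_laws_G}, Lemma~\ref{lem:conservative}, and the projection bijection after~\eqref{eq:tau} do all the work, and I would cite them explicitly rather than re-derive. A secondary nuisance is that a complex $y\in C$ may be the source of more than one reaction of $G'$, so the ``full/one-way'' gadget must be specified per reaction; since the statement only quantifies over ``any reaction $y\to y'$ with $y\in C$'', this is really just a matter of reading the statement correctly and invoking Lemma~\ref{lem:ext}, which is stated for adding reactions of the forms in~\eqref{eq:ncexamples} without restriction on the number of reactions out of $y$.
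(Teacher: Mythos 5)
Your argument rests on a false premise: that multistationarity of an extension $G$ depends only on the core network $G'$ together with the reduced rate constants $\tau=T(\kappa)$, ``because of the one-to-one correspondence between steady states''. The bijection of steady states under $\pi(x,u)=x$ is correct, but it does \emph{not} respect stoichiometric compatibility classes --- and this is precisely the point your ``main obstacle'' paragraph delegates to Lemma~\ref{lem:conservative}, \eqref{eq:cons_laws_G} and the correspondence after~\eqref{eq:tau}, none of which address it. By~\eqref{eq:cons_laws_G} a conservation law of $G$ has the form $\bar{\ell}(x,u)=\ell(x)+\sum_k \ell(y^{(k)})u_k$, so two steady states $(x^1,u^1),(x^2,u^2)$ of $G$ in one class of $G$ satisfy $\ell(x^1)-\ell(x^2)=\sum_k\ell(y^{(k)})(u^2_k-u^1_k)$, which is nonzero in general: their projections lie in \emph{different} classes of $G'$, and conversely two compatible steady states of $G'$ lift to generally incompatible steady states of $G$. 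Your resulting conclusion --- ``each of the three extension types is multistationary if and only if $G'$ itself is multistationary'' --- is therefore false; it would imply that adding intermediates to a monostationary core can never create multistationarity, which contradicts Example~\ref{ex:cascade_cont} (the two-layer cascade core is monostationary, yet its canonical extension by $\{P_1+F\}$ is multistationary) and would empty the entire notion of circuits of multistationarity that the proposition is meant to serve.

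The actual proof cannot pass through the core; it must compare the three gadgets to one another. The paper does this by citation: the ``if'' direction invokes Theorem~5.1 of \cite{fw13} (the lifting machinery of Theorems~\ref{thm:IFT} and~\ref{thm:IFT2}, which requires perturbing $\kappa$ inside the fiber ${\mathcal F}_{\tau^0}$ via the implicit function theorem and uses non-degeneracy of the steady states), and the ``only if'' direction invokes Proposition~5.3(ii) of \cite{SFeliu}, which exploits the fact that all three gadgets are non-confluent with the same input $y$, so that $u=\mu\, x^{y}$ with $(\tau,\mu)$ ranging over the full positive orthant for each of them; the multistationarity criterion (e.g.\ the sign pattern of the polynomial $B$) then depends only on the set of inputs and coincides for the three extensions. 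The part of your write-up that survives is the observation that the attainable sets of $(\tau,\mu)$ agree (essentially Lemma~\ref{lem:ext}); what is missing is the genuinely nontrivial step showing that equal $(\tau,\mu)$-data forces equal multistationarity behaviour of the \emph{extended} systems, class by class.
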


\begin{proof}
 The proof of the if direction follows from~\cite[Thm.~5.1]{fw13}  %~\cite[Cor.~1]{fw13} 
 while the only if direction is a consequence of~\cite[Prop. 5.3 (ii)]{SFeliu}.
\end{proof}

\begin{remark} \label{rem:JS}
Note that by iterating the {\em allowed} addition of intermediates in
Proposition~\ref{prop:nc=canonical}, it is possible to extend multistationarity if we add several intermediate complexes. For instance,
we can iteratively add two intermediates $U_1, U_2$ to
$y \to y'$ by first adding $y \ce{<=>U_1} \to y'$ and then
$y \ce{<=>U_1} \to U_2 \to y'$. So, if the original network is multistationary, this last one is so too. Moreover,  if we add a backward reaction from $y'$ to $U_2$:
\begin{center}
 $y$\ce{
<=>[\kappa_1][\kappa_2]}
$U_1$
\ce{->[\kappa_3]}
$U_2$
\ce{<=>[\kappa_4][\kappa_5]
} $y'$,
\end{center}
we don't change the stoichiometric subspace. Then, the hypothesis of being non-confluent is not satisfied but Theorem~3.1 in~\cite{Joshi:Shiu:Atoms} asserts that this last network is also multistationary.

This can be applied for instance to lift multistationarity to {\em weakly irreversible} phosphorylation mechanisms from {\em strongly irreversible} mechanisms as defined in the article~\cite{Gu:irrev}, according to whether the product does rebind to the enzyme or not. We explain this concept via the following relevant mechanism of double sequential phosphorylation in~\cite{MHK}. The chemical species are the unphosphorylated substrate $S_0$, the singly phosphorylated substrate $S_1$, the doubly phosphorylated substrate $S_2$, two enzymes (the kinase $E$ and the phosphatase $F$) and $6$ intermediate species.
%There are three independent conservation relations.

\begin{equation}\label{eq:red} 
\begin{split}
    {S_0} + {E}
    \underset{k_{-1}}{\overset{k_1}{\rightleftarrows}}
    {S_0E} 
    \overset{k_2}{\rightarrow}
    {S_1} + {E}
     \underset{k_{-3}}{\overset{k_3}{\rightleftarrows}}
    {S_1E}
    \overset{k_4}{\rightarrow}
    {S_2} + {E} \\
    {S_2} + {F}
    \underset{h_{-1}}{\overset{h_1}{\rightleftarrows}}
    {S_2F}
    \overset{h_2}{\rightarrow}
    {S_1F}^*
    \underset{h_{-3}}{\overset{h_3}{\rightleftarrows}}
    {S_1+F}
    \underset{h_{-4}}{\overset{h_4}{\rightleftarrows}}
    {S_1F}
    \overset{h_5}{\rightarrow}
    {S_0F}
    \underset{h_{-6}}{\overset{h_6}{\rightleftarrows}}
    {S_0} + {F}.\\
\end{split}
\end{equation}
The first  component features a strongly irreversible mechanism, since no reaction is assumed from $S_2+E$ to $S_1E$ (nor from $S_1+E$ to $S_0E$).
On the other side, the enzyme $F$ is assumed to rebind to $S_1$ and $S_0$ and is thus a weakly irreversible mechanism. 
\end{remark}

 We will not recall the general notion of {\em complete binomial network} introduced in~\cite{SFeliu} because it cannot be checked in general. Instead, we introduce a particular class of complete binomial networks which can be algorithmically checked in an easy way and wec will show many interesting examples. We will also replace their definition of {\em surjectivity} by the basic notion of consistency in Definition~\ref{def:consistent}, as it is already remarked in~\cite{PMDSC,SFeliu}.

\begin{definition} \label{def:lebn}
Let $G$ be a network with $r$ reactions and $s$ species and let $\dot x = f(x)$ %$\dot x = f_{\kappa}(x)$ 
denote the resulting mass-action system. Assume without loss of generality that $f_{1}, \dots, f_{s-d}$ is a maximal set of linearly independent polynomials among $f_1, \dots, f_s$.
We say $G$ is \emph{linearly equivalent to a binomial network} (denoted by \emph{lebn}) if there exist binomials $h_{1},  \dots, h_{s-d}$ in $\mathbb{Q}(\kappa)[x]$ with two nonzero terms with coefficients of opposite sign, and an $(s-d)\times (s-d)$ matrix $M(\kappa)$ with entries in $\mathbb{Q}(\kappa):=\mathbb{Q}(\kappa_1,\kappa_2,\dots, \kappa_r)$ defined and invertible for all $\kappa \in \mathbb{R}^r_{>0}$, such that the binomials $(h_{j })$ can be obtained  as follows:
\begin{equation}\label{eq:linearchange}
(h_{1}, \dots, h_{s-d})^{\top}
\quad := \quad M(\kappa) ~  (f_{1},\dots, f_{s-d})^{\top}~.
\end{equation}
    %In particular, $h_{j_{\ell}} \in \mathbb{Q}(\kappa)[x]$
 \end{definition}

If a binomial $h_j$ has just one nonzero term  or two terms with the same sign, then there are no positive steady states.

\begin{example}[Two-layer cascade]\label{ex:cascade2}
Consider the cascade motif from~\cite[Figure 1(k)]{FW2012} with two layers. Each layer is a one-site modification cycle, and the same phosphatase ($F$) acts in each layer, usually depicted as in Figure~\ref{fig:SP}. 
%\vspace{-0.7cm}
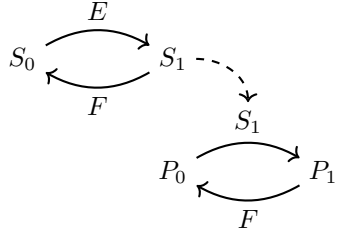
\begin{figure}[h!]
\begin{tikzpicture}[xscale=1,yscale=1]
 \node (S_0) at (0, 0) {$S_0$};
 \node (S_1) at (2, 0) {$S_1$};
 \node (P_0) at (2, -1.5) {$P_0$};
 \node (P_1) at (4, -1.5) {$P_1$};
 \node (P_01) at (3,-0.7) {};
  \draw
    (S_0) edge[->,thick,bend left] node [above] {$E$} (S_1)
       (S_1) edge[->,thick,bend left] node [below] {$F$} (S_0)
       (P_0) edge[->,thick,bend left] node [above] {$S_1$} (P_1)
    (P_1) edge[->,thick,bend left] node [below] {$F$} (P_0);
    \draw[dashed, ->,thick,looseness=1] (S_1) to [out=0,in=90] (P_01) ;   
\end{tikzpicture}
\caption{Two-layer cascade}
\label{fig:SP}
\end{figure}
Making explicit the usual intermediate species and naming the rate constants we get the following reaction network:
\begin{center}
\begin{tabular}{cc}
\ce{S_{0} + E
<=>[\kappa_1][\kappa_2]
ES_{0}
->[\kappa_3]
S_{1} + E},  &
\ce{S_{1} + F
<=>[\kappa_4][\kappa_5]
FS_{1}
->[\kappa_6]
S_{0} + F}\\
\ce{P_{0} + S_{1}
<=>[\kappa_7][\kappa_8]
S_{1}P_{0}
->[\kappa_9]
P_{1} + S_{1} }, 
&
\ce{P_{1} + F
<=>[\kappa_{10}][\kappa_{11}]
FP_{1}
->[\kappa_{12}]
P_{0} + F} \\
\end{tabular}
\end{center}
We order the species as $X_1$=\ce{S_0}, $X_2$=\ce{ES_0}, $X_3$=\ce{S_1}, $X_4$=\ce{FS_1}, $X_5$=\ce{P_0}, $X_6$=\ce{S_1P_0}, $X_7$=\ce{P_1}, $X_8$=\ce{FP_1}, $X_9$=\ce{E}, $X_{10}$=\ce{F}. There are $d=4$ conservation laws, which arise from the total amounts of substrates and enzymes:
\begin{eqnarray}\label{eq:ex45con}
x_1 + x_2 + x_3 + x_4 + x_6 &= c_1, \quad \quad \quad \quad\quad \ 
x_2 + x_9 &= c_2,\\
x_4 + x_8 + x_{10} &= c_3, \quad  
x_5 + x_6 + x_7 + x_8 &= c_4 \notag.
\end{eqnarray}
From these conservation laws, we see that this network is conservative. 
We consider the following maximally linearly independent steady state polynomials:
\begin{small}
\begin{align*}
f_{3} ~&=~  \kappa_3x_2 - \kappa_4x_3x_{10} + \kappa_5 x_4 - \kappa_7x_3x_5 + \kappa_8x_6 + \kappa_9x_6, \\
f_{6} ~&=~ \kappa_7x_3x_5 - \kappa_8x_6 - \kappa_9x_6, \\
f_{7} ~&=~ -\kappa_{10}x_7x_{10} + \kappa_{11}x_8 + \kappa_{9}x_6,  \\
f_{8}~&=~ \kappa_{10}x_7x_{10}  - \kappa_{11} x_8 - \kappa_{12} x_8, \\
f_{9}~&=~  -\kappa_1x_1x_9  +\kappa_2x_2  + \kappa_3x_2 ,  \\
f_{10} ~&=~ -\kappa_4x_3x_{10} +\kappa_5x_4 +\kappa_6x_4 -\kappa_{10}x_7x_{10}  + \kappa_{11} x_8 + \kappa_{12} x_8 .  
\end{align*}
\end{small}
Multiplying this system by the following matrix with positive determinant:
{\footnotesize
\[M(\kappa)=
\left(\begin{array}{cccccccccc}
 1 &   1 & 0 & -1 & 0 & -1\\
 0 &   1 & 0 & 0  & 0 & 0\\
 0 & 0 &  1 &   1 & 0 & 0\\
 0    &  0 & 0     &  1 & 0 & 0\\
 0    &  0 & 0   & 0   &  1 & 0\\
 0    & 0 & 0   & 1         & 0 &  1
\end{array}
\right),
\]
} 
we obtain the binomial system:
\begin{small}
\begin{equation}\label{eq:cascade_binom}
\begin{array}{ll}
h_{3}  =  \kappa_3x_2 - \kappa_6x_4, &
h_{6}= \kappa_7x_3x_5 - (\kappa_8 + \kappa_9)x_6, \\
h_{7} =\kappa_9x_6-\kappa_{12} x_8,  &
h_{8} = \kappa_{10}x_7x_{10}  -(\kappa_{11} + \kappa_{12}) x_8,\\ 
h_{9}  =  -\kappa_1x_1x_9  + (\kappa_{2} + \kappa_{3})x_2, &
h_{10}  = -\kappa_4x_3x_{10} +(\kappa_{5} + \kappa_{6})x_4,
\end{array}
\end{equation}
\end{small}
and we see that the network is lebn.  This  will also follow from Theorem~\ref{th:ylebn}.
\end{example}

An important practical consequence of the notion of a lebn network is that this is an easily checkable condition. Given a network $G$ and the associated polynomials $f_1, \dots, f_n$, we order the set of exponents occurring in them (that is, the source complexes) and consider the associated matrix $C_f$ of corresponding coefficients. Let $R_f$ be the unique reduced row echelon form  of the matrix $C_f$ and $M(\kappa)$ in $\Q(\kappa)$ be the invertible matrix yielding the equality $R_f =  M(\kappa) \, C_f$.

\begin{proposition} \label{prop:CK}
 If a network $G$ is lebn, then the reduced row echelon form $R_f$ of the matrix $C_f$ has two non-zero entries with different signs on each row. The matrix $M(\kappa)$ has rational form entries whose denominators are non-vanishing on the positive orthant. 
 
 \par
 
 Conversely, if the reduced row echelon form $R_f$ of $C_f$ has two non-zero entries  with different signs on each row, it is enough to check if the invertible matrix $M(\kappa)$  has rational form entries whose denominators are non-vanishing on the positive orthant to ensure that  $G$ is lebn. 
\end{proposition}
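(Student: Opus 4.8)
The statement is an ``if and only if'' characterization, so I would treat the two directions separately, with the forward direction being essentially a uniqueness argument for reduced row echelon forms and the converse being almost a tautology once the forward direction is understood.

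For the forward direction, suppose $G$ is lebn, so that there exist binomials $h_1, \dots, h_{s-d}$ (each with exactly two nonzero terms of opposite sign) and an invertible matrix $M(\kappa)$ over $\Q(\kappa)$ with denominators non-vanishing on $\R^r_{>0}$ such that $(h_1, \dots, h_{s-d})^\top = M(\kappa)(f_1, \dots, f_{s-d})^\top$. Writing $C_h$ for the coefficient matrix of the binomials $h_j$ with respect to the same ordered list of exponents used to build $C_f$, the relation $(h_j) = M(\kappa)(f_i)$ says precisely $C_h = M(\kappa) \, C_f$. Since $M(\kappa)$ is invertible over $\Q(\kappa)$, the rows of $C_h$ span the same $\Q(\kappa)$-row space as the rows of $C_f$; hence $C_h$ and $C_f$ have the \emph{same} reduced row echelon form, namely $R_f$ (uniqueness of RREF over the field $\Q(\kappa)$). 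Now each row of $C_h$ has exactly two nonzero entries of opposite sign. The key observation is that performing Gaussian elimination on a matrix all of whose rows are ``binomial with opposite signs'' preserves a weaker property that suffices: I would argue directly that $R_f$, being the RREF of a rank-$(s-d)$ matrix whose rows all lie in the span of such binomials, cannot have a row with $\le 1$ nonzero entry or a row whose two-or-more nonzero entries all have the same sign. The cleanest route: a row of $R_f$ with a single nonzero entry would force a monomial equation $x^{y}=0$ to hold on the positive steady state locus, and $R_f$ is row-equivalent over $\Q(\kappa)$ to $C_h$; but each $h_j$ is a genuine binomial with two opposite-sign terms, so one shows that a $\Q(\kappa)$-linear combination of the $h_j$ that produces a monomial or a same-sign binomial would contradict either the invertibility of $M(\kappa)$ or — more carefully — the combinatorial structure. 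Actually the safest argument is: the binomials $h_j$ already witness that $G$ is lebn, and the RREF $R_f$ is itself obtained from $C_h$ by an invertible $\Q(\kappa)$-change of basis $M'(\kappa)$, so $R_f$ \emph{also} exhibits $G$ as lebn via the composite $M'(\kappa)M(\kappa)$; then Definition~\ref{def:lebn} applied with the specific system $R_f$ forces each row of $R_f$ to be a binomial with two opposite-sign terms (rows with a single term or same-sign terms are excluded, as noted right after the definition, since they would prevent positive steady states while the original $h_j$ do not). The denominator claim on $M(\kappa)$ follows because $R_f = M(\kappa) C_f$ determines $M(\kappa)$ uniquely (using that $C_f$ has full row rank), and its entries are ratios of minors of $C_f$ over a pivot-minor of $C_f$, hence polynomials in $\kappa$ divided by a product of nonzero constants times such a minor — and the relevant pivot minors are, for a lebn network, non-vanishing on the positive orthant.

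For the converse, assume $R_f$ has two nonzero entries of different signs on each row, and suppose additionally that the (uniquely determined) invertible $M(\kappa)$ with $R_f = M(\kappa) C_f$ has entries that are rational functions with denominators non-vanishing on $\R^r_{>0}$. Then taking $h_j$ to be the binomial read off from the $j$-th row of $R_f$ — which by hypothesis has exactly two nonzero terms of opposite sign — and the matrix $M(\kappa)$ itself, Definition~\ref{def:lebn} is satisfied verbatim: $(h_1,\dots,h_{s-d})^\top = M(\kappa)(f_1,\dots,f_{s-d})^\top$, $M(\kappa)$ is invertible and well-defined on the positive orthant, so $G$ is lebn. This direction is immediate once one has observed that $R_f$, $C_f$ and $M(\kappa)$ fit together by construction.

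The main obstacle is the forward direction's denominator claim: one must verify that the uniquely determined $M(\kappa)$ expressing $R_f = M(\kappa)C_f$ genuinely has denominators non-vanishing on the positive orthant, rather than merely being defined generically. I would handle this by noting that $C_f$ has full row rank $s-d$ over $\Q(\kappa)$, choosing the pivot columns of $R_f$, and expressing $M(\kappa)$ as the inverse of the corresponding $(s-d)\times(s-d)$ pivot submatrix of $C_f$; since the pivots of an RREF are $1$, the denominators of $M(\kappa)$ are exactly (up to sign and nonzero scalars) this pivot minor of $C_f$, which is the product of the leading coefficients encountered during elimination. One then invokes the lebn hypothesis — the existence of the binomials $h_j$ — to see that this elimination never requires dividing by a quantity that vanishes on $\R^r_{>0}$, because the change of basis from $C_f$ to $C_h$ is already given by such a matrix $M$, and composing with the (constant-coefficient, since $R_f$ has unit pivots after clearing) passage from $C_h$ to $R_f$ keeps the denominators controlled. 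Care is needed here to make the ``constant vs.\ rational'' bookkeeping precise, but no deep idea is involved.
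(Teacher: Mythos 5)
Your converse direction is correct and is exactly what the paper means by ``the second assertion is straightforward'': read the binomials off the rows of $R_f$ and take the matrix $M(\kappa)$ itself. The forward direction, however, is where the content lies, and the paper does not prove it either: it delegates it entirely to \cite[Prop.~2.2]{CK} and \cite[Thm.~3.3]{PMDSC}. Those two citations correspond precisely to the two places where your argument breaks down. The first is the claim that the RREF of $C_h$ (equivalently of $C_f$) has at most two nonzero entries per row, knowing only that the row space is spanned by the $2$-sparse rows of $C_h$. This is the actual content of Conradi--Kahle's result and needs a genuine combinatorial argument (decomposing the columns along the graph whose edges are the supports of the $h_j$, and analyzing each connected component). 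Your ``safest argument'' is circular: you assert that $R_f$ ``also exhibits $G$ as lebn via the composite $M'(\kappa)M(\kappa)$, so Definition~\ref{def:lebn} forces each row of $R_f$ to be a binomial.'' Definition~\ref{def:lebn} only asserts the \emph{existence} of some binomial basis; it imposes nothing on the RREF basis, so you cannot deduce that the rows of $R_f$ are binomials from $G$ being lebn --- that is exactly the statement to be proved.

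The second gap is the sign claim. That each row of $R_f$ has \emph{exactly} two nonzero entries \emph{of opposite signs} does not follow from the existence of the $h_j$ alone: take $h_1=x_1-x_2$ and $h_2=x_1-2x_2$. Both are binomials with opposite-sign coefficients and the change of basis is the identity, yet the RREF of the coefficient matrix $\bigl(\begin{smallmatrix}1&-1\\ 1&-2\end{smallmatrix}\bigr)$ is $I_2$, whose rows have a single nonzero entry. Your proposed repair --- that single-term or same-sign rows ``would prevent positive steady states while the original $h_j$ do not'' --- fails because a family of opposite-sign binomials need not have a common positive zero, as this example shows; excluding this degeneration is where the toric-steady-state hypotheses of \cite[Thm.~3.3]{PMDSC} enter. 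Finally, the assertion that the specific $M(\kappa)$ with $R_f=M(\kappa)C_f$ has denominators non-vanishing on the positive orthant is in your write-up still only a plan (``the relevant pivot minors are, for a lebn network, non-vanishing''), not an argument; it too ultimately rests on the cited results rather than on anything you establish.
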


The first assertion in Proposition~\ref{prop:CK} follows from \citetext{Prop.~2.2; \citealp{CK}} and \cite[Thm. 3.3]{PMDSC} and the second assertion is straightforward.

\smallskip

The property of being lebn can be lifted to non-confluent extensions. Recall that $T$ is the map defined in~\eqref{eq:T}. 

\begin{proposition}\label{prop:lebn}
Given a lebn network $G'$ with rate constants $\tau$, any non-confluent extension $G$ of $G'$ with rate constants $\kappa$ such that $T(\kappa)=\tau$  is also lebn. 
In particular, all canonical extensions of a lebn network are lebn.
\end{proposition}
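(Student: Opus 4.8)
The plan is to use Lemma~\ref{lem:linear} to transport the binomial structure of $G'$ through the elimination of the intermediates. Concretely, let $G'$ have species $X_1,\dots,X_n$ and mass-action polynomials $g_1,\dots,g_n$, with a maximal linearly independent subset that we may take to be $g_{d+1},\dots,g_n$, and suppose $G'$ is lebn, so there is an $(n-d)\times(n-d)$ matrix $M_0(\tau)$ over $\Q(\tau)$, invertible on the positive orthant, with $(h_{d+1},\dots,h_n)^\top = M_0(\tau)\,(g_{d+1},\dots,g_n)^\top$ a vector of genuine binomials. Now let $G$ be a non-confluent extension with intermediates $U_1,\dots,U_p$, rate constants $\kappa$ satisfying $T(\kappa)=\tau$, and mass-action polynomials $f_{\kappa,1},\dots,f_{\kappa,n+p}$; write $v_k(x,u)=u_k-\varphi_k(x)$ as in Lemma~\ref{lem:linear}. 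By Remark~\ref{rem:uk}, non-confluence gives $\varphi_k(x)=\mu_k(\kappa)\,x^{y^{(k)}}$ for a single core complex $y^{(k)}$ and a positive rational function $\mu_k(\kappa)$, so each $v_k = u_k - \mu_k(\kappa)\,x^{y^{(k)}}$ is itself a binomial.

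First I would record, from Lemma~\ref{lem:linear}, that the system $f_{\kappa,1}=\dots=f_{\kappa,n+p}=0$ is $\R$-linearly equivalent to $g_{\tau,1}=\dots=g_{\tau,n}=v_1=\dots=v_p=0$; in matrix terms there is an invertible constant matrix (over $\Q(\kappa)$, or even over $\Q$ composed with the substitution $u\mapsto\varphi(x)$ — but crucially invertible for all $\kappa\in\R^{\#\mathcal R_G}_{>0}$) carrying one ordered tuple of polynomials to the other. A maximal linearly independent subset of $f_{\kappa,1},\dots,f_{\kappa,n+p}$ has size $(n-d)+p$ (the intermediates add $p$ to the dimension of the stoichiometric subspace while the conservation-law count stays $d$ by Section~\ref{ssec:conslaws}), and one checks that $g_{\tau,d+1},\dots,g_{\tau,n},v_1,\dots,v_p$ is such a subset. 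Then I would define the block matrix
\[
M(\kappa)=
\begin{pmatrix}
M_0(T(\kappa)) & 0\\
0 & I_p
\end{pmatrix},
\]
apply it to $(g_{\tau,d+1},\dots,g_{\tau,n},v_1,\dots,v_p)^\top$, and obtain $(h_{d+1},\dots,h_n,v_1,\dots,v_p)^\top$, a tuple of binomials. Composing the linear equivalence of Lemma~\ref{lem:linear} with $M(\kappa)$ produces a single invertible matrix over $\Q(\kappa)$, invertible on $\R^{\#\mathcal R_G}_{>0}$, taking a maximal linearly independent subset of the $f_{\kappa,i}$ to a binomial tuple; this is exactly Definition~\ref{def:lebn}, so $G$ is lebn. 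The ``in particular'' clause is immediate since canonical extensions are non-confluent (the remark after Definition~\ref{def:can}) and, by Lemma~\ref{lem:ext}, admit rate constants $\kappa$ with $T(\kappa)=\tau$.

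The main obstacle I expect is bookkeeping the invertibility and the field of definition precisely: $M_0$ is a matrix in $\Q(\tau)$, and after substituting $\tau=T(\kappa)$ its entries become rational functions in $\kappa$; I must check their denominators do not vanish on $\R^{\#\mathcal R_G}_{>0}$. This follows because the denominators of $M_0$ are non-vanishing for $\tau\in\R^{\#\mathcal R_{G'}}_{>0}$ (lebn hypothesis on $G'$) and $T$ maps $\R^{\#\mathcal R_G}_{>0}$ into $\R^{\#\mathcal R_{G'}}_{>0}$, together with the fact (from Section~\ref{ssec:rate_constants}) that the $\mu_{j,y}(\kappa)$ appearing in~\eqref{eq:tau} have denominators non-vanishing on the positive orthant. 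The only other point requiring care is that the linear change in Lemma~\ref{lem:linear} is over $\R$ (in fact one may take it with constant entries), hence trivially has denominators that never vanish; so the composite matrix indeed meets the requirements of Definition~\ref{def:lebn}. A secondary subtlety is confirming that $g_{\tau,d+1},\dots,g_{\tau,n},v_1,\dots,v_p$ is genuinely a maximal linearly independent set among all the transformed polynomials, which I would deduce from the fact that these $(n-d)+p$ polynomials span the same space as the full transformed tuple (by Lemma~\ref{lem:linear} this space is the span of the $f_{\kappa,i}$) and the dimension count above.
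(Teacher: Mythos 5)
Your proposal is correct and follows essentially the same route as the paper's proof: both invoke Lemma~\ref{lem:linear} to pass to the tuple $(g_{\tau,j},v_k)$, use Remark~\ref{rem:uk} to see that non-confluence makes each $v_k=u_k-\mu_k(\kappa)x^{y^{(k)}}$ a binomial, and then compose the resulting invertible matrix with the block-diagonal matrix $\bigl(\begin{smallmatrix}M_0(T(\kappa))&0\\0&I_p\end{smallmatrix}\bigr)$ coming from the lebn witness of $G'$ evaluated at $\tau=T(\kappa)$, checking that all denominators stay non-vanishing on the positive orthant. Your only (harmless) imprecision is the aside that the Lemma~\ref{lem:linear} change of coordinates could be taken with constant entries — its entries are genuinely rational functions of $\kappa$ (e.g.\ the $\mu_{j,y}(\kappa)$), but their denominators are positive on $\R^{\#\mathcal R_G}_{>0}$, which is all that is needed.
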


\begin{proof}
Let $G$ be a non-confluent extension of $G'$ with reactions constants $\tau=T(\kappa)$. We order the species of $G$ as $X_1,\dots, X_n,U_1,\dots,U_p$. If $g_{j_1},\dots,g_{j_{n-d}}$ is a basis of the $\Q(\tau)$-linear subspace generated by $g_1,\dots,g_n$, it follows from Lemma~\ref{lem:linear} that $f_{j_1},\dots,f_{j_{n-d}},f_{n+1},\dots,f_{n+p}$ generate the $\Q(\kappa)$-linear subspace spanned by $f_1,\dots,f_s$. As already observed in Remark~\ref{rem:uk}, the equations $v_1,\dots, v_p$ in Lemma~\ref{lem:linear} are binomials of the form $u_i - \mu_i \, x^{y(i)}$. We moreover know by Lemma~\ref{lem:linear} that after linearly eliminating all intermediate species,  we get the  equations of 
$G'$ with reactions constants $\tau=T(\kappa)$.  Then there is an $(s-d)\times (s-d)$ invertible matrix $M_1$ with entries in $\mathbb{Q}(\kappa)$ defined for all $\kappa \in \mathbb{R}^r_{>0}$ such that 
$$M_1 \, (f_{j_1},\dots,f_{j_{n-d}},f_{n+1},\dots,f_{n+p})^{\top}=(g_{j_1},\dots,g_{j_{n-d}},v_1,\dots,v_p)^{\top}.$$ As $G'$ is lebn and $\tau=T(\kappa)$,  there exist a set of $n-d$ binomials $\{h_{j_1},\dots,h_{j_{n-d}}\}$  with two nonzero terms with coefficients of opposite sign and an invertible matrix $M_2\in \Q(\kappa)^{(n-d)\times(n-d)}$, which is well defined for all $\kappa \in \mathbb{R}^r_{>0}$, such that $M_2 \,(g_{j_1},\dots,g_{j_{n-d}})^{\top}=(h_{j_1},\dots,h_{j_{n-d}})^{\top}$. Call $M \in \Q(\kappa)^{(s-d)\times(s-d)}$ the matrix obtained as the product of $M_1$ with following block matrix:
\[
M \, = \, \begin{pmatrix}M_2 &0\\0& Id_p\end{pmatrix} \, M_1,
\]
where $Id_p$ is the $p\times p$ identity matrix, Then, $M$ is invertible and well defined for all $\kappa \in \mathbb{R}^r_{>0}$ and, as we wanted to prove, it holds that 
\[M \, (f_{j_1},\dots,f_{j_{n-d}},f_{n+1},\dots,f_{n+p})^{\top}=(h_{j_1},\dots,h_{j_{n-d}},v_1,\dots,v_p)^{\top}.
\]
 \end{proof}

Although we are not going to introduce the definition of a complete binomial network from~\cite{SFeliu}, the purpose of introducing the algorithmically checkable notions of lebn and consistent networks is explained in the following remark.

\begin{remark}\label{rmk:l+c=>cb}
 All the references in this remark are from~\cite{SFeliu}. A {\em complete binomial network} is characterized in Definition~2.7. It should admit an {\em admissible binomial basis} (introduced in their Definition~2.1), which satisfies the conditions (rank) and (surj) given before Definition~2.7. It is straightforward to verify that a lebn network has an admissible binomial basis that satisfies (rank). The equivalence between (surj) and consistency is proven  in Lemma~2.6 in that article. Therefore, any consistent lebn network is a complete binomial network.
\end{remark}

Given a lebn (or more generally a complete binomial) network  $G$  with associated dynamical system $\dot{x}=f(x)$, the positive steady states (that is, the positive solutions of $f_1(x) = \dots = f_s(x)=0$) coincide with the positive solutions of a set of binomials with coefficients which are rational functions on the rate constants (well defined over the positive orthant). As in Proposition~\ref{prop:CK}, if  the system does have positive steady states then the coefficients of these binomials must have different signs. Moreover, the positive zeros of such a binomial coincide with the positive solutions of an equation of the form $x^m = \gamma(\kappa)$ with $\gamma(\kappa) > 0$ and $x^m$ a {\em Laurent} monomial (that is, a monomial with {\em integer} exponents $m \in \Z^s$). We now define a polynomial associated to a set of monomials and linear forms (that will be chosen to be a basis of the linear conservation relations).

\begin{definition}\label{def:Bpoly}
Given Laurent monomials $x^{m_i}$ in $s$ variables , $i = 1, \dots, s-d$, and homogeneous linear forms $\ell_1 = \sum_{j=1}^s w_{1j} x_j, \dots, \ell_d = \sum_{j=1}^s w_{dj} x_j$, we consider the following matrices. We denote by $\mathrm{Exp} \in \Z^{s \times (s-d)}$  the matrix with columns $m_1, \dots, m_{s-d}$ and  by $Wx$ the matrix of linear forms:
\begin{equation}\label{eq:Exp}
 %\mathrm{Exp}=\begin{pmatrix}m_1\\ \vdots\\ m_{s-d}\end{pmatrix}, \quad 
 Wx=\begin{pmatrix}w_{11} x_1 & \dots & w_{1s} x_s \\ \vdots & \dots & \vdots\\ w_{d1} x_1 & \dots & w_{ds} x_s\end{pmatrix}.
\end{equation}
We define the matrix $J_{\mathrm{Exp}^t,Wx}$ with first $s-d$ rows equal to the transpose matrix  $\mathrm{Exp}^t$ and last rows equal to $Wx$.
We denote by $B$ the homogeneous polynomial
\begin{equation}\label{eq:Bpoly}
B(x) \, = \det J_{\mathrm{Exp}^t,W\!x}.
\end{equation}
\end{definition}

Note that if $B$ is not the zero polynomial then it has degree $d$  in $x=(x_1, \dots, x_s)$ and all its exponents are either $0$ or $1$.

\begin{example}[Example~\ref{ex:cascade2} continued]\label{ex:cascadeB}
Recall the two-layer cascade from Example~\ref{ex:cascade2}. We can read from~\eqref{eq:ex45con} the matrix $W\!x$ and from the binomials in~\eqref{eq:cascade_binom} we obtain the Laurent monomials $x_2x_4^{-1}$, $x_3x_5x_6^{-1}$, $x_6x_8^{-1}$, $x_7x_{10}x_8^{-1}$, $x_1x_9x_2^{-1}$ and $x_3x_{10}x_4^{-1}$ to build the matrix $\mathrm{Exp}^t$. We then have
$$
  B(x) = \det \begin{pmatrix} 
 0 &  1 & 0 & -1 & 0 &  0 & 0 &  0 & 0 & 0\\ 
 0 &  0 & 1 &  0 & 1 & -1 & 0 &  0 & 0 & 0\\ 
 0 &  0 & 0 &  0 & 0 &  1 & 0 & -1 & 0 & 0\\ 
 0 &  0 & 0 &  0 & 0 &  0 & 1 & -1 & 0 & 1\\ 
 1 & -1 & 0 &  0 & 0 &  0 & 0 &  0 & 1 & 0\\ 
 0 &  0 & 1 & -1 & 0 &  0 & 0 &  0 & 0 & 1 \\ \hline
 x_1 & x_2 & x_3 & x_4 & 0 & x_6 & 0 & 0 & 0 & 0 \\ 
 0 & x_2 & 0 & 0 & 0 & 0 & 0 & 0 & x_9 & 0 \\
 0 & 0 & 0 & x_4 & 0 & 0 & 0 &  x_8 & 0 & x_{10} \\
 0 & 0 & 0 & 0 & x_5 & x_6 & x_7 & x_8 & 0 & 0 
 \end{pmatrix}
$$

\begin{align*}
& = -x_1x_2x_5x_8 +x_1x_2x_5x_{10} +x_1x_2x_6x_{10} +x_1x_2x_7x_8 +x_1x_2x_7x_{10} +x_1x_2x_8x_{10} \\
& -x_1x_5x_8x_9 +x_1x_5x_9x_{10} +x_1x_6x_9x_{10} +x_1x_7x_8x_9 +x_1x_7x_9x_{10} +x_1x_8x_9x_{10} \\
& -x_2x_5x_8x_9 +x_2x_5x_9x_{10} +x_2x_6x_9x_{10} +x_2x_7x_8x_9 +x_2x_7x_9x_{10} +x_2x_8x_9x_{10} \\
& +x_3x_4x_5x_9 +x_3x_4x_6x_9 +x_3x_4x_7x_9 +x_3x_4x_8x_9 +x_3x_5x_9x_{10} +x_3x_6x_9x_{10} \\
& +x_3x_7x_8x_9 +x_3x_7x_9x_{10} +x_3x_8x_9x_{10} +x_4x_5x_6x_9 -x_4x_5x_8x_9 
+x_4x_5x_9x_{10} \\
& -x_4x_6x_7x_9 +x_4x_6x_9x_{10} +x_4x_7x_8x_9 +x_4x_7x_9x_{10} +x_4x_8x_9x_{10} +x_5x_6x_9x_{10}.
\end{align*}

\end{example}

The following lemma is a restatement of~\cite[Lemma~2.10]{Focm}. Given a matrix $K$ with $s$ columns and a subset $I \subset \{1, \dots,s\}$, we denote by $K_I$ the submatrix of $K$ consisting of the columns of $K$ with indices in $I$ (in the same order).

\begin{lemma} \label{lem:Laplace}
 Let $m_1, \dots, m_{s-d} \in \Z^s$ and linear forms $\ell_1, \dots, \ell_d$ as in Definition~\ref{def:Bpoly}. Then for any subset $I$ of $\{1, \dots, s\}$ with cardinality $d$, the coefficient of the monomial $\prod_{ i \in I} x_i$ in the polynomial $B$ in \eqref{eq:Bpoly} equals --up to sign-- the product of the determinant of the submatrix $W\!x_I$  times the determinant of the submatrix ${\rm Exp}^t_{I^c}$ indicated by the columns not in $I$.
\end{lemma}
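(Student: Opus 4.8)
The plan is to expand the determinant $B(x) = \det J_{\mathrm{Exp}^t, W\!x}$ by the generalized Laplace expansion along the block of the last $d$ rows (the rows $Wx$). Recall that $J_{\mathrm{Exp}^t, W\!x}$ is an $s \times s$ matrix whose first $s-d$ rows are the transpose $\mathrm{Exp}^t$ and whose last $d$ rows form the matrix $W\!x$ of linear forms. Laplace expansion along the last $d$ rows writes $B(x)$ as a signed sum, over all $d$-element column subsets $J \subset \{1,\dots,s\}$, of $\pm \det (W\!x)_J \cdot \det (\mathrm{Exp}^t)_{J^c}$, where $(W\!x)_J$ is the $d \times d$ submatrix of $W\!x$ on the columns indexed by $J$ and $(\mathrm{Exp}^t)_{J^c}$ is the $(s-d) \times (s-d)$ submatrix of $\mathrm{Exp}^t$ on the complementary columns. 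The signs are the standard Laplace signs $(-1)^{\sigma(J)}$ depending only on $J$, not on $x$.

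The key observation is then a bookkeeping one about monomials. Each entry of $W\!x$ in column $j$ is a scalar multiple of $x_j$ (namely $w_{ij} x_j$), so the determinant $\det (W\!x)_J$ is homogeneous of degree $d$ and is a scalar multiple of the squarefree monomial $\prod_{j \in J} x_j$; in fact $\det (W\!x)_J = \big(\det W_J\big)\,\prod_{j\in J} x_j$, where $W_J$ is the corresponding $d\times d$ submatrix of the conservation-law matrix $W$ (this is just pulling the factor $x_j$ out of each column). On the other hand, $\det (\mathrm{Exp}^t)_{J^c}$ is a constant integer, independent of $x$. Therefore the Laplace term indexed by $J$ contributes exactly $\pm \big(\det W_J\big)\big(\det \mathrm{Exp}^t_{J^c}\big)$ to the coefficient of $\prod_{j\in J} x_j$, and it contributes nothing to any other squarefree monomial. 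Since all $\binom{s}{d}$ squarefree monomials of degree $d$ are distinct, there is no collision between the different terms: the coefficient of $\prod_{i\in I}x_i$ in $B(x)$ is picked out entirely by the single term $J = I$, and equals, up to the Laplace sign, $\det W_I \cdot \det \mathrm{Exp}^t_{I^c}$. Matching this against the paper's notation $W\!x_I$ for the submatrix (whose determinant is $\det W_I \cdot \prod_{i\in I}x_i$) and $\mathrm{Exp}^t_{I^c}$ for the complementary-column submatrix gives the stated identity up to sign.

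The only mild subtlety — and the thing I would be careful to state correctly rather than a genuine obstacle — is the sign convention. The generalized Laplace expansion along a fixed set of rows produces signs $(-1)^{\sum_{j\in J} j + \sum_{k} k}$ (sum of the selected row indices plus selected column indices), and since the row set is fixed at $\{s-d+1,\dots,s\}$ this reduces to a sign depending only on $J$. Because the lemma only claims equality "up to sign," I do not need to track this precisely; I will simply invoke the Laplace expansion theorem and note that all signs are $x$-independent. A one-line remark that $B$ is homogeneous of degree $d$ with $0/1$ exponents (already noted after Definition~\ref{def:Bpoly}) confirms that these squarefree monomials are indeed all the monomials that can occur, so the correspondence between Laplace terms and coefficients is a bijection. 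This completes the argument.
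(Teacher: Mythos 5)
Your proposal is correct and follows exactly the route the paper takes: the paper's proof is a one-line appeal to the Laplace expansion of the determinant by complementary minors along the $W\!x$ block, which is precisely the expansion you carry out, with the same bookkeeping that $\det (W\!x)_I = (\det W_I)\prod_{i\in I} x_i$ isolates the squarefree monomial $\prod_{i\in I}x_i$. Your additional remarks on the $x$-independence of the Laplace signs and the distinctness of the squarefree monomials are correct and merely make explicit what the paper leaves implicit.
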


The proof of Lemma~\ref{lem:Laplace} is a direct consequence of the Laplace expansion of the determinant by complementary minors.

\begin{corollary} \label{cor:Bnon0} 
In the hypotheses of Lemma~\ref{lem:Laplace}, $B$ is not the zero polynomial if and only if there exists $I \subset \{1, \dots, s\}$ with cardinality $d$ such that
\begin{equation}\label{eq:dets}
 \det(W_I) \det({\rm Exp}^t_{I^c}) \not= 0.
\end{equation}
 
 \end{corollary}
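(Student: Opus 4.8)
The plan is to derive Corollary~\ref{cor:Bnon0} directly from Lemma~\ref{lem:Laplace} by expanding the polynomial $B$ in the monomial basis indexed by $d$-subsets of $\{1, \dots, s\}$. Recall from the remark following Definition~\ref{def:Bpoly} that every monomial appearing in $B$ is squarefree of degree exactly $d$, so $B$ can be written uniquely as
\[
B(x) \, = \, \sum_{\substack{I \subset \{1,\dots,s\}\\ \# I = d}} b_I \, \prod_{i \in I} x_i,
\]
for some scalars $b_I$. By definition, $B$ is the zero polynomial if and only if $b_I = 0$ for every such $I$, hence $B$ is nonzero if and only if there exists some $I$ with $b_I \neq 0$.

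First I would invoke Lemma~\ref{lem:Laplace}, which identifies the coefficient $b_I$: it equals, up to sign, the product $\det(W\!x_I')\cdot \det({\rm Exp}^t_{I^c})$, where $W\!x_I$ is the submatrix of the matrix $W\!x$ in~\eqref{eq:Exp} on the columns indexed by $I$. The only point requiring a word of care is that $\det(W\!x_I)$ as a polynomial in $x$ is $\bigl(\prod_{i\in I} x_i\bigr)\det(W_I)$, since each column $i$ of $W\!x$ is $x_i$ times the $i$-th column of $W$; factoring out the monomial $\prod_{i\in I}x_i$, the scalar coefficient $b_I$ is, up to sign, exactly $\det(W_I)\det({\rm Exp}^t_{I^c})$. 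Therefore $b_I \neq 0$ if and only if $\det(W_I)\det({\rm Exp}^t_{I^c}) \neq 0$, which is precisely~\eqref{eq:dets}.

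Combining the two observations: $B \not\equiv 0$ iff some $b_I \neq 0$ iff some $I$ satisfies $\det(W_I)\det({\rm Exp}^t_{I^c}) \neq 0$. This is the claimed equivalence, and the argument is just bookkeeping once Lemma~\ref{lem:Laplace} is in hand; there is no genuine obstacle. The only subtlety worth stating explicitly in the writeup is the passage from the polynomial identity of Lemma~\ref{lem:Laplace} (which records the coefficient of the monomial $\prod_{i\in I}x_i$ in $B$) to the scalar criterion, i.e.\ making sure the factorization of $\det(W\!x_I)$ and the sign conventions are handled cleanly so that vanishing of the coefficient matches vanishing of the product of the two minors.
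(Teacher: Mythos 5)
Your proof is correct and follows essentially the same route as the paper: both rest on Lemma~\ref{lem:Laplace} together with the observation that $W\!x$ is $W$ times the diagonal matrix $\mathrm{diag}(x_1,\dots,x_s)$, so that $\det(W\!x_I)=\bigl(\prod_{i\in I}x_i\bigr)\det(W_I)$ and the scalar coefficient of $\prod_{i\in I}x_i$ in $B$ is $\pm\det(W_I)\det(\mathrm{Exp}^t_{I^c})$. (Only a stray prime in ``$\det(W\!x_I')$'' should be removed.)
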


 The proof of Corollary~\ref{cor:Bnon0} is a straightforward consequence of Lemma~\ref{lem:Laplace} because the matrix $W\!x$ equals the product of the conservation-law matrix $W$ and the invertible diagonal matrix with entries $(x_1, \dots, x_s)$.

\medskip
The definition of the polynomial $B$ in~\eqref{eq:Bpoly} is motivated by the following proposition that permeates many different results in the literature; we refer in particular to Section~2 in~\cite{Focm} where several of the previous results were unified.

\begin{proposition} \label{prop:Bpoly}
 Let $m_1, \dots, m_{s-d} \in \Z^s$ and linear forms $\ell_1, \dots, \ell_d$ as in Definition~\ref{def:Bpoly}. Then, there exist two different points $x, y \in \R^s_{>0}$ satisfying
 \[ x^{m_i} = y^{m_i}, \, \text{ for } i=1, \dots, s-d, \quad \ell_j(x) = \ell_j(y), \, \text{ for } j =1, \dots,d,
       \]
if and only if either $B$ is the zero polynomial or it has two coefficients with different signs.
 \end{proposition}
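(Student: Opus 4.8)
The plan is to pass to logarithmic coordinates and reduce the problem to a question about a linear map intersecting the positive orthant, so that the determinant $B$ becomes the natural obstruction. Writing $x = (e^{z_1}, \dots, e^{z_s})$ and $y = (e^{w_1}, \dots, e^{w_s})$, the monomial conditions $x^{m_i} = y^{m_i}$ become the linear conditions $\langle m_i, z - w\rangle = 0$ for $i = 1, \dots, s-d$, i.e. $z - w \in (\mathrm{Exp}^t)^{\perp}$, a subspace of dimension $d$. So there exist two \emph{distinct} such points with $x^{m_i} = y^{m_i}$ for all $i$ precisely when the $d$-dimensional space $V := \ker(\mathrm{Exp}^t) \subset \R^s$ is nonzero and meets the requirement coming from the linear forms. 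The remaining conditions $\ell_j(x) = \ell_j(y)$ are not linear in $z - w$, but by the mean value theorem applied coordinatewise there is, for each pair $x \ne y$, a positive diagonal matrix $D$ (depending on $x,y$) with $\ell_j(x) - \ell_j(y) = (W D (z-w))_j$; conversely, given any nonzero $v \in V$ and any positive diagonal $D$, one can produce genuine distinct positive points realizing the sign pattern of $Dv$ in the exponents (for instance by taking $x = e^{tv}$, $y = \mathbf 1$ and letting $t \to 0$, where $D \to \mathrm{Id}$). Thus the existence statement is equivalent to: there exists a nonzero $v \in V$ and a positive diagonal matrix $D$ with $W D v = 0$.

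Next I would analyze when this last condition holds. Fix a nonzero $v \in V$ and let $\sigma = \{ j : v_j \ne 0\}$ be its support. As $D$ ranges over positive diagonal matrices, $Dv$ ranges over all vectors supported on $\sigma$ with the same sign pattern as $v$ on $\sigma$ (an open orthant inside the coordinate subspace $\R^\sigma$). So $WDv = 0$ is solvable in positive $D$ iff the image of this open sign-orthant under $W$ contains $0$, equivalently (since it is an open convex cone) iff $0$ lies in the relative interior of $W(\R^\sigma_{\mathrm{sgn}(v)})$, which by a standard Gordan-type alternative happens iff there is no $\lambda \in \R^d$ with $(W^t \lambda)_j \cdot v_j \ge 0$ for all $j \in \sigma$ and strict inequality somewhere — but the cleanest route is: $0 \in W(\text{open orthant on } \sigma)$ iff the vectors $\{\mathrm{sgn}(v_j)\, W e_j : j \in \sigma\}$ positively span a linear subspace, i.e. iff $0$ is a strictly positive combination of them. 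I would then tie this to the polynomial $B$ via Corollary~\ref{cor:Bnon0} and Lemma~\ref{lem:Laplace}: $B$ is identically zero iff $\det(W_I)\det(\mathrm{Exp}^t_{I^c}) = 0$ for every $d$-subset $I$, which by Cramer/Cauchy--Binet is exactly the statement that no $v \in V$ has support containing the complement of a $W$-independent $d$-set — equivalently, every $v \in V$ has support $\sigma$ with $W|_{\R^\sigma}$ \emph{not} surjective, i.e. $\mathrm{rank}(W_\sigma) < d$. When $B$ is identically zero I would argue directly that a suitable nonzero $v \in V$ (chosen with inclusion-minimal support, so that the $W e_j$, $j \in \sigma$, are affinely... ) always admits the required $D$, because minimality of the support forces a linear dependence among $\{W e_j : j\in\sigma\}$ with all-nonzero coefficients, whose signs we can then match to $v$ after replacing $v$ by $-v$ on part of its support — here one must be slightly careful, and this is where the "two coefficients of different signs" clause enters.

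For the case $B \not\equiv 0$: then $B$, being multilinear with $0/1$ exponents, is a sum of monomials $\pm \det(W_I)\det(\mathrm{Exp}^t_{I^c}) \prod_{i\in I} x_i$, and it has two coefficients of opposite sign iff the evaluation map $x \mapsto B(x)$ takes both signs on $\R^s_{>0}$ (a multilinear polynomial with $0/1$ exponents and all coefficients of one sign is a sum of positive-times-monomial terms, hence never changes sign; conversely two opposite-sign coefficients let us make the corresponding monomials dominate). And $B(x)$ changing sign on the positive orthant is, up to a positive scalar, exactly the condition that the determinant of $J_{\mathrm{Exp}^t, Wx}$ vanishes for some positive $x$ with a nearby sign change — equivalently, for some positive diagonal $D = \mathrm{diag}(x)$ the matrix with rows $\mathrm{Exp}^t$ and $W D$ is singular and its kernel is ``transverse'', which unwinds to: there is a nonzero $v \in \ker(\mathrm{Exp}^t) = V$ with $W D v = 0$. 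I would make this precise by noting $\det J_{\mathrm{Exp}^t, WD} = 0$ for a positive $D$ iff the $d$ rows of $WD$ are linearly dependent on $V$, i.e. iff some nonzero $v \in V$ lies in $\ker(WD)$, which is the condition from the first paragraph.

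The main obstacle I expect is the bookkeeping in the ``$B \equiv 0$'' case and, more generally, keeping straight the equivalence between the \emph{linear-algebraic} statement ``$\exists$ nonzero $v\in V$, $\exists$ positive diagonal $D$ with $WDv = 0$'' and the \emph{sign} statement about $B$. Passing from a positive diagonal $D$ and a kernel vector $v$ back to genuine distinct positive points $x \ne y$ (rather than to an infinitesimal/first-order solution) requires a homotopy or implicit-function argument — this is precisely the technical heart of the Focm reference the proposition cites, so I would either invoke Section~2 of~\cite{Focm} directly or reproduce the short deformation argument ($x = \exp(tv)$, $y = \mathbf 1$, $t\to 0^+$ for the ``only if'' of the degenerate direction, and a convexity/continuity argument for the generic direction). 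Everything else — the logarithmic change of coordinates, the mean value theorem step, and the Laplace-expansion identification of the coefficients of $B$ via Lemma~\ref{lem:Laplace} — is routine.
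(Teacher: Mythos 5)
The paper does not actually prove Proposition~\ref{prop:Bpoly}; it defers entirely to Section~2 of~\cite{Focm}. Your skeleton is the standard argument from that reference --- logarithmic coordinates, the logarithmic-mean diagonal matrix, reduction to ``$\exists\, v\neq 0$ in $\ker(\mathrm{Exp}^t)$ and a positive diagonal $D$ with $WDv=0$'', and then the multilinear structure of $B$ --- so the route is the right one. But two steps as you have written them do not work and need repair.

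First, the converse realization of a pair $(v,D)$ by genuine distinct positive points: your construction $x=e^{tv}$, $y=\mathbf 1$, $t\to 0^+$ produces coincident points in the limit, so it proves nothing. No homotopy or implicit-function argument is needed here; the realization is exact and elementary. Given $v\neq 0$ with $\mathrm{Exp}^t v=0$ and a positive diagonal $D$ with $WDv=0$, set $y_k=D_{kk}\,v_k/(e^{v_k}-1)>0$ and $x_k=y_k e^{v_k}$ when $v_k\neq 0$, and $x_k=y_k$ arbitrary positive when $v_k=0$. Then $\log x-\log y=v$, so $x^{m_i}=y^{m_i}$, and $x-y=Dv$, so $\ell_j(x)-\ell_j(y)=(WDv)_j=0$, with $x\neq y$. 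Together with the mean-value direction you already have, this gives the clean equivalence with no limiting process. Second, your entire middle paragraph (Gordan's alternative, minimal supports, the separate treatment of $B\equiv 0$) is an unnecessary detour that you do not complete --- and your parenthetical that the ``two coefficients of different signs'' clause enters there is misplaced. The uniform argument is: the pair $(v,D)$ exists iff $\det J_{\mathrm{Exp}^t,\,WD}=0$ for some positive diagonal $D$ (the kernel of the stacked matrix is exactly $\ker(\mathrm{Exp}^t)\cap\ker(WD)$, so no ``transversality'' condition is needed either), i.e.\ iff $B$ vanishes at some point of $\R^s_{>0}$. Since $B$ is a sum of terms $c_I\prod_{i\in I}x_i$ over $d$-subsets $I$ (Lemma~\ref{lem:Laplace}), it is nowhere zero on the positive orthant when all nonzero $c_I$ share a sign, and it takes both signs (hence vanishes, by connectedness) when two coefficients differ in sign; the case $B\equiv 0$ is trivial. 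This one computation covers both alternatives of the statement simultaneously, so the support analysis can be deleted.
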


\medskip

Assume we have binomials $h_1,\dots, h_{s-d}$ with coefficients of different signs defining the positive steady states of the dynamical system associated to a reaction network. The positive zeros of each $h_i$ coincide with the positive solutions of an equation $x^{m_i} - \gamma_i(\kappa) =0$ with $\gamma_i(\kappa) > 0$, but one can also choose the monomial $x^{-m_i}$ (and right hand side $\gamma_i(\kappa)^{-1}$). Also, let $\ell_1, \dots, \ell_d$ be a choice of a basis of the linear conservation relations of the network. For any lebn network, both the binomials and the linear forms are defined up to multiplication by an invertible matrix not depending on the $x$ variables, so we can associate to the network a polynomial $B$ as in \eqref{eq:Bpoly}, defined up to non-zero constant.

Theorem~\ref{th:Bpoly} below is essentially a restatement of Lemma~4.7, Lemma~4.8 and Algorithm~4.9 in~\cite{SFeliu}. We only write it for the case of lebn networks, but it could be stated for complete binomial networks. Our variables $x_1, \dots, x_s$ are related to the variables $\lambda_1, \dots, \lambda_s$ in~\cite{SFeliu} via $x_i = (\lambda_i)^{-1}$.  With our definitions, a basic step in the proof is the following result. Given a complete binomial network $G'$ with concentrations $x=(x_1, \dots, x_{s'})$ and a non-confluent extension $G$ of $G'$ given by the addition of a single intermediate species (whose concentration
we denote by $u$), $G$ is also complete binomial. Moreover, if $G'$ is lebn we have by Proposition~\ref{prop:lebn} that $G$ is also lebn. 

Consider a basis of linear conservation relations for $G'$ and their extension to a basis of linear conservation relations as in~\eqref{eq:cons_laws_G}. We can {\it linearly} add to the binomials describing the positive steady states of $G'$ a binomial of the form $u- \mu x^y$ as in~\eqref{eq:ukbinom}. Then, we get the following relation between the corresponding polynomials $B_{G'}$ and $B_G$:
 \begin{equation}\label{eq:Bxu}
  B_G(x,u) = \pm \, B_{G'}(x)+u \, P(x),
 \end{equation}
where $P$ is a polynomial that only depends on $x$. In particular, setting $u=0$ in $B_G$ gives $B_{G'}$.

\begin{theorem}\label{th:Bpoly}
 Let $G'$ be a lebn network that is not multistationary. Let $C\subseteq \mathcal C_{G'}$ be a subset of complexes such that the canonical extension $G=G'_{ext,C}$ associated to $C$ is multistationary. Let $h_1, \dots, h_{s'-d}$ be binomials as in Definition~\ref{def:lebn} for $G'$.  Add binomials $h_{s'-d+1}, \dots, h_s$ of the form ~\eqref{eq:ukbinom} as in Proposition~\ref{prop:lebn}. Take a basis of linear conservation relations for $G'$ and consider their extension to a basis of linear conservation relations for $G$ as in~\eqref{eq:cons_laws_G}. We call $x$ the vector of concentrations of the species in $G'$ and $u$ the vector of concentrations of the intermediate species added in $G$. 
 \par
 Let $B(x,u)$ denote the associated polynomial of $G$ in Definition~\ref{def:Bpoly},  and write 
 \begin{equation}\label{B_gamma}
  B(x,u)=\underset{\upsilon=(\upsilon',\upsilon'')}{\sum} b_\upsilon~x^{\upsilon'}u^{\upsilon''}.
 \end{equation}
 
 Then:
 \begin{itemize}
  \item  The sign $\sigma$ of the all the terms in $B(x,0)$ is non-zero and for any fixed index $\upsilon=(\upsilon',\upsilon'')$ such that $\mathrm{sign}(b_\upsilon)=-\sigma$, 
 the subnetwork of $G$ obtained by elimination of the subset of intermediates $\{U_i~:~ \upsilon''_i= 0\}$ exhibits multistationarity.
 \item Let
 $\Upsilon =\{\upsilon''\in\{0,1\}^p :\text{ there exists }\upsilon=(\upsilon',\upsilon'')%\in \Z^{n+p} 
 \text{ with } \mathrm{sign}(b_\upsilon)= -\sigma\}$.
 There is a bijection between the elements of this set with minimal support and the circuits of multistationarity associated to $C$. 
 \item
  Call $y_i \in C$ the input complex of the intermediate $U_i$. For each $\upsilon'' \in \Upsilon$ with minimal support, the corresponding circuit of multistationarity is $\{y_i:~\upsilon_i''\neq 0\}$. 
  \end{itemize}
\end{theorem}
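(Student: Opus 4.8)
The plan is to combine Proposition~\ref{prop:Bpoly} with the iterative structure encoded in~\eqref{eq:Bxu}, treating the intermediates of the canonical extension $G = G'_{ext,C}$ one at a time. First I would recall that, by Proposition~\ref{prop:lebn}, $G$ is lebn, so its positive steady states are the positive zeros of binomials $h_1,\dots,h_{s'-d}$ (coming from $G'$) together with the binomials $u_i - \mu_i x^{y_i}$, $i=1,\dots,p$, as in~\eqref{eq:ukbinom}, and one has a basis of linear conservation relations of $G$ obtained from one of $G'$ via~\eqref{eq:cons_laws_G}. By Proposition~\ref{prop:Bpoly}, applied to the Laurent monomials of these binomials and the linear forms of this basis, $G$ has two distinct stoichiometrically compatible positive steady states (for some choice of the $\gamma_i(\kappa)$, hence by Lemma~\ref{lem:clarke} and the consistency of $G$ granted by Lemma~\ref{lem:cons} for some choice of rate constants) if and only if either $B = B_G$ is identically zero or it has two coefficients of opposite sign. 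Since $G'$ is \emph{not} multistationary, the same criterion applied to $G'$ (via~\eqref{eq:Bxu} with all $u_i = 0$) says $B_{G'} = B_G(x,0)$ is not identically zero and all of its coefficients share a common sign $\sigma$; this proves the first sentence of the first bullet.

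Next I would iterate~\eqref{eq:Bxu}: adding the intermediates $U_1,\dots,U_p$ one at a time gives, by induction on $p$, an expansion
\[
B_G(x,u) \;=\; \sigma' \, B_{G'}(x) \;+\; \sum_{\emptyset \neq S \subseteq \{1,\dots,p\}} u^{(S)} \, P_S(x),
\]
where $u^{(S)} = \prod_{i\in S} u_i$, each $P_S$ depends only on $x$, and $\sigma' = \pm 1$. In particular $B_G$ is multilinear in the $u_i$ and setting $u_i = 0$ for $i \notin S$ picks out exactly the polynomial $B_{G_S}(x,u|_S)$ of the subnetwork $G_S$ obtained from $G$ by eliminating the intermediates not in $S$ — this subnetwork is again a canonical extension of $G'$ (by the intermediates indexed by $S$). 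Now fix an index $\upsilon = (\upsilon',\upsilon'')$ with $\mathrm{sign}(b_\upsilon) = -\sigma$ and let $S = \{i : \upsilon''_i = 1\}$ be its intermediate-support. Since the exponents of $u$ occurring in $B_G$ are all $0/1$ (because each appears through a single binomial $u_i - \mu_i x^{y_i}$, so $\mathrm{Exp}^t$ has a $0/1$ column for each $u_i$ and the determinant in~\eqref{eq:Bpoly} is multilinear in these columns), the term $b_\upsilon\, x^{\upsilon'} u^{\upsilon''}$ survives in $B_{G_S}$, so $B_{G_S}$ contains a coefficient of sign $-\sigma$ while (by the $u|_S = 0$ specialization) it also contains the coefficients of $\sigma' B_{G'}$, which all have sign $\sigma\sigma'$; whichever way $\sigma'$ falls, $B_{G_S}$ has two coefficients of opposite sign, so by Proposition~\ref{prop:Bpoly} and Lemma~\ref{lem:clarke} the subnetwork $G_S$ is multistationary. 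This establishes the second sentence of the first bullet.

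For the remaining two bullets, I would argue that $\Upsilon$, as defined, is exactly the set of intermediate-supports $S$ for which the canonical subextension $G_S$ is multistationary: the inclusion $\supseteq$ is the content of the previous paragraph, and $\subseteq$ follows because if $G_S$ is multistationary then $B_{G_S}$ (being nonzero on $x$-degree-zero part only up to sign, inherited from $B_{G'} \neq 0$) must, by Proposition~\ref{prop:Bpoly}, have two coefficients of opposite sign, and specializing $u|_S$ appropriately one of them is a genuine $-\sigma$ term whose $u$-support is some nonempty subset of $S$ — so \emph{some} element of $\Upsilon$ is contained in $S$; hence $G_S$ multistationary $\iff$ $S$ contains an element of $\Upsilon$. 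Consequently the inclusion-minimal multistationary canonical subextensions correspond precisely to the minimal-support elements of $\Upsilon$, and since Definition~\ref{def:circuits} identifies a circuit of multistationarity associated to $C$ with the set of \emph{input complexes} of such a minimal subextension, and the input complex of $U_i$ is $y_i$, the bijection $\upsilon'' \mapsto \{y_i : \upsilon''_i \neq 0\}$ is exactly the asserted one (it is injective because distinct input complexes correspond to distinct intermediates in a canonical extension). The main obstacle I anticipate is the bookkeeping in the last step: one must be careful that ``minimal support in $\Upsilon$'' and ``minimal multistationary $S$'' genuinely coincide — this uses the upward-closure observation ``$G_S$ multistationary $\iff$ $S \supseteq$ some element of $\Upsilon$'', which in turn rests on the multilinearity of $B_G$ in the $u_i$ and on the sign of $B_G(x,0)$ being constant, both of which are consequences of~\eqref{eq:Bxu}; and one should check that passing from ``two points with $x^{m_i}=y^{m_i}$, $\ell_j(x)=\ell_j(y)$'' to ``two scpss of $G_S$ for honest rate constants'' uses consistency (Lemma~\ref{lem:cons}) together with Lemma~\ref{lem:clarke}, exactly as in~\cite{SFeliu}.
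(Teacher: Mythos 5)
Your argument is correct and follows exactly the route the paper itself outlines: the paper gives no self-contained proof of Theorem~\ref{th:Bpoly} (it defers to Lemmas~4.7--4.8 and Algorithm~4.9 of~\cite{SFeliu}), but the ingredients it does record --- the one-intermediate decomposition~\eqref{eq:Bxu}, the multilinearity of $B$ in the $u_i$, Proposition~\ref{prop:Bpoly} combined with consistency and Lemma~\ref{lem:clarke}, and the observation that $B(x,0)=\pm B_{G'}\not\equiv 0$ --- are precisely what you iterate and assemble. Your identification of $B_G|_{u_i=0,\,i\notin S}$ with $\pm B_{G_S}$ and the resulting upward-closure characterization of the multistationary subextensions is the intended argument, so this is the same approach, correctly completed.
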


 Note that $B(x,0)$ is the associated polynomial of $G'$ by equality~\ref{eq:Bxu}. Then as $G'$ is not multistationary, it follows that $B(x,0)$ cannot be identically zero by Proposition~\ref{prop:Bpoly} and then $B(x,u) \not\equiv 0$.

\begin{remark}
  We show in Section~\ref{sec:Messi} how to check for monostationarity in the reduced network by visual inspection of associated graphs in the case $G$ is a MESSI network \cite{messi}. The two-layer cascade (see Example~\ref{ex:cascade2}), the ERK signaling pathway (see Section~\ref{sec:ERK}) and the sequential phospho/dephosphorylation networks (see Section~\ref{sec:sequential}) are examples of MESSI networks that verify the conditions to have a monostationary core network.
\end{remark}

\begin{remark}
 In the case where $G$ has no relevant boundary steady states (and consequently $G_{red}$ has no relevant boundary steady states if $G$ is non-confluent, see Proposition~\ref{prop:bss}), the polynomial $B$ can be obtained from a {\it critical function} \cite[Lemma~4.4]{DPMST} and the multistationarity results can be recovered using degree theory (see also~\cite{CFMW}). One advantage of this approach is that the system is multistationary if and only if the polynomial $B$ has a coefficient with sign $(-1)^{\dim(\mathcal{S})+1}$ \cite[Thm.~4.6]{DPMST}. If moreover $G$ is consistent, every $x^*\in\R^s_{>0}$ with $\mathrm{sign}(B(x^*))= (-1)^{\dim(\mathcal{S})+1}$ yields a witness to multistationarity $(\kappa^*,c^*)$ by defining $c^*=Wx^*$ (with $W$ the conservation-law matrix and $\kappa^*$ such that $f_{\kappa^*}(x^*)=0$). We expand on this approach in Example~\ref{ex:cascade_cont} below.
\end{remark}

\begin{example}[Example~\ref{ex:cascadeB}, continued]\label{ex:cascade_cont}

Recall the polynomial $B$ from Example~\ref{ex:cascadeB} for the two-layer cascade. By setting $x_2=x_4=x_6=x_8=0$ (i.e. by deleting all the terms where an intermediate species is involved) it is easy to check that we obtain a polynomial with positive coefficients only. If we inspect the negative terms in $B$, we obtain $\Upsilon =\{(1,0,0,1),(0,0,0,1),(0,1,0,1),(0,1,1,0)\}$ (the coordinates correspond to the intermediate species $X_2$, $X_4$, $X_6$ and $X_8$, respectively). The vectors in $\Upsilon$ with minimal support are $(0,0,0,1)$ and $(0,1,1,0)$. The set of inputs of the network is $\{X_1+X_9,X_3+X_{10},X_5+X_3,X_7+X_{10}\}$ and then the circuits of multistationarity associated to these inputs are the sets $\{X_7+X_{10}\}$ and $\{X_3+X_{10},X_5+X_3\}$. This is, $\{P_1+F\}$ and $\{S_1+F, P_0+S_1\}$.

We can go further and find a witness to multistationarity as it can be shown that this system is consistent and does not have relevant boundary steady states \cite{messi}. Then, by  \cite[Thm.~4.6]{DPMST} we know that any $x^*\in\R^s_{>0}$ with $\mathrm{sign}(B(x^*))= (-1)^{6+1}$ yields a witness to multistationarity. In order to find it, define for instance $x^*\in {\mathbb R}^{10}$ with coordinates $\lambda$ (in indices $1,2,5,$ and $8$) and 1 (all others):
 \begin{align} \label{eq:x-lambda}
 x^*~=~
  \left(
   \lambda,~  \lambda, ~ 1, ~ 1, ~ \lambda, ~ 1, ~ 1, ~ \lambda, ~ 1,~  1
  \right).\end{align}
 After replacing we obtain
 \begin{align} \label{eq:B-phos}
 B(x^*)~=~
 -\lambda^4+\lambda^3 + 7\lambda^2 + 14\lambda + 5.\end{align}
 It follows that $B(x^*)<0$ if $\lambda$ is larger than the largest positive root of the polynomial~\eqref{eq:B-phos}.  We can use an elementary bound for this root, for instance  the sum of the absolute values of all coefficients: $1+1+7+14+5 = 28$. Let $\lambda = 29$; then, $B(x^*)|_{\lambda=29}=-676594<0$. To solve for $c^*$, we substitute  $x^*|_{\lambda=29}$, as in~\eqref{eq:x-lambda}, into equation~\eqref{eq:ex45con}, which yields:
 \begin{align*}
 c^* ~=~ (61, 30, 31, 60).
 \end{align*}
 Finally, we choose $\kappa^*$ for which $ f_{\kappa^*} (x^*)=0$:
  \begin{align*}
 \kappa^* ~=~ (2,~1,~1,~30,~1,~29,~1,~28,~1,~2,~ 1/29,~ 1/29)~.
 \end{align*}
 So, $(\kappa^*, c^*)$ is a witness to multistationarity.   We can numerically approximate other positive steady states in the same
 stoichiometric compatibility class. Using the command {\rm RootFinding[Isolate]} in Maple we found the values
 {\small\[x^{**} = (30.833,29.058,0.036,1.003,57.85, 0.072,0.003,2.076,0.942,27.922),\]
 \[x^{***} = (7.883, 26.623, 24.541, 0.918, 1.224, 1.036, 27.695,  30.045, 3.377,  0.037).
 \]

Note that the seventh coordinate on each steady state, which corresponds to the concentration of $P_1$, is very small in $x^{**}$ and big in $x^{***}$ and the opposite happens with the values of $S_1$.
}

\end{example}

\section{ERK pathway}\label{sec:ERK}

The ERK cascade is primarily segmented into three stages: RAS activation, succeeded by the MAPK kinase kinase (MAPKKK) or RAF, then by the MAPK kinase (MAPKK) or MEK, and lastly the MAPK or ERK. Each enzyme in this sequence activates the subsequent one through a series of phosphorylation events. MAPKKKs phosphorylate MAPKKs at two conserved serine residues, and MAPKKs subsequently phosphorylate MAPKs at conserved threonine and tyrosine residues \cite{sig-016}. The culmination of this cascade is marked by the activation of ERK. Once activated, ERK possesses the ability to move into the cell's nucleus. Within the nucleus, ERK assumes a critical role, influencing the function of numerous transcription factors. This regulation eventually affects particular gene expressions, inciting changes in the cell's behavior in response to the original external prompt.

In Examples~\ref{ex:3layercascade} and~\ref{ex:3layercascadeloop} that follow, we compute the circuits of multistationarity for both the Three-layer cascade and the Three-layer cascade with a feedback loop. One can easily verify that these networks are conservative and have no boundary steady states using the criteria in \cite{messi}. We adopt a simplified notation that deviates from conventional biochemistry nomenclature. Here, $E$ represents the RAS activation, $R_2$ stands for the final output ERK of the cascade, and $F_1,F_2$, and $F_3$ denote the phosphatases in each respective layer (which might be identical or distinct), among other notations.

Moreover, since the computations become much longer with the critical polynomial $B(x,u)$ in Theorem~\ref{th:Bpoly} having hundreds or even thousands of terms we do not explicit all computations anymore. Instead, we use a \cite{maple} script that can be downloaded from~\url{http://mate.dm.uba.ar/~alidick/ERKFiles}. The script does what we have done in Example~\ref{ex:cascade_cont} in a automated fashion, one just needs to insert the core reactions. It can be easily changed to work for other networks from small to medium size. We used this approach for networks of about 30 parameters/variables. The computations for all examples in the whole manuscript together took less than one minute to run in a notebook with 16MB of RAM and with 8 cores of 2.6GHz of processing.

\begin{example}[Three layer ERK cascade]\label{ex:3layercascade}
Consider the network in Figure~\ref{fig:SPR} from \cite[Figure 1]{sig-016}.
%\vspace{-1cmm}
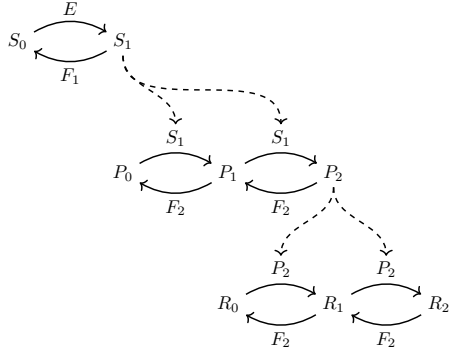
\begin{figure}[h!] 
\scalebox{0.7}{
\begin{tikzpicture}[xscale=1,yscale=1]
 \node (S_0) at (0, 0) {$S_0$};
 \node (S_1) at (2, 0) {$S_1$};
 \node (P_0) at (2, -2.5) {$P_0$};
 \node (P_1) at (4, -2.5) {$P_1$};
 \node (P_2) at (6, -2.5) {$P_2$};
 \node (R_0) at (4, -5) {$R_0$};
 \node (R_1) at (6, -5) {$R_1$};
 \node (R_2) at (8, -5) {$R_2$};
 \node (P_01) at (3,-1.7) {};
 \node (P_12) at (5,-1.7) {};
 \node (R_01) at (5,-4.2) {};
 \node (R_12) at (7,-4.2) {};
  \draw
    (S_0) edge[->,thick,bend left] node [above] {$E$} (S_1)
       (S_1) edge[->,thick,bend left] node [below] {$F_1$} (S_0)
    (P_0) edge[->,thick,bend left] node [above] {$S_1$} (P_1)
    (P_1) edge[->,thick,bend left] node [above] {$S_1$} (P_2)
    (P_1) edge[->,thick,bend left] node [below] {$F_2$} (P_0)
    (P_2) edge[->,thick,bend left] node [below] {$F_2$} (P_1)
    (R_0) edge[->,thick,bend left] node [above] {$P_2$} (R_1)
    (R_1) edge[->,thick,bend left] node [above] {$P_2$} (R_2)
    (R_1) edge[->,thick,bend left] node [below] {$F_2$} (R_0)
    (R_2) edge[->,thick,bend left] node [below] {$F_2$} (R_1);
\draw[dashed, ->,thick,looseness=1] (S_1) to [out=-90,in=90] (P_01) ;    
\draw[dashed, ->,thick,looseness=1] (S_1) to [out=-90,in=90] (P_12) ; 
\draw[dashed, ->,thick,looseness=1] (P_2) to [out=-90,in=90] (R_01) ;    
\draw[dashed, ->,thick,looseness=1] (P_2) to [out=-90,in=90] (R_12) ;  
\end{tikzpicture}
}
\caption{Three-layer ERK cascade}
\label{fig:SPR}
\end{figure}

\noindent
This network is a cascade motif with three layers;
the first layer is a one-site modification cycle and the second and third layers are two-site modification cycles. The same phosphatase ($F_2$) acts in the last two layers but is different from the phosphatase acting on the first layer ($F_1$).  

Following the same procedure as in 
Example~\ref{ex:cascade_cont} we obtain a critical function 
$B(x,u)$ which is a polynomial with $1334$ monomials. Setting all concentrations of 
intermediates equal to zero we get the critical function $B(x,0)$ corresponding to the core system 
whose  monomials have a fixed sign $\sigma \in \{+,-\}$.

In order to apply Theorem~\ref{th:Bpoly}, we look at monomials in $B(x,u)-B(x,0)$  with sign different from $\sigma$. There are $458$ such monomials with the {\em wrong} sign, among which we consider only those monomials which depend on the concentrations of intermediates  species and we look for the monomials with minimal support. The set of circuits of multistationarity is $$ 
\{ \{P_0+S_1\}, \{P_2+F_2\}, \{R_0+P_2\}
, \{R_1+F_2\}, \{R_2+F_2\}, \{P_1+F_2, R_1+P_2\} \}.$$

We can similarly
compute the circuits of multistationarity of variations of this network  considering equal phosphatases to act 
in each layer, or a different one in each layer, etc. We summarize the results obtained in Table~\ref{table:3layernoloop}.

\begin{table}[!htbp]
\centering
\begin{small}
\begin{tabular}{c|c}
Enzymes & Canonical Circuits of Multistationarity\\
\hline
\multirow{2}{*}{$F_1,F_1,F_1$}
& $ \{P_0+S_1\}, \{P_1+F_1\},
  \{P_2+F_1\}, \{R_0+P_2\},
  \{R_1+F_1\},\{R_2+F_1\},  $\\ 
& $ 
\{P_1+S_1, S_1+F_1\}$\\
\hline
\multirow{2}{*}{$F_1,F_1,F_3$}
& 
$ \{P_0+S_1\}, \{P_1+F_1\},
  \{P_2+F_1\}, \{R_0+P_2\},
  \{R_2+F_3\}, $\\ 
& $  \{P_1+S_1, S_1+F_1\} $ \\
\hline
\multirow{2}{*}{$F_1,F_2,F_1$}
& $ \{P_0+S_1\}, \{P_2+F_2\}, \{R_0+P_2\},
  \{R_1+F_1\}, \{R_2+F_1\},  $\\ 
& $    
 \{P_1+S_1, R_1+P_2, S_1+F_1\}$ \\
\hline
\multirow{2}{*}{$F_1,F_2,F_2$}
& 
$ \{P_0+S_1\}, \{P_2+F_2\}, \{R_0+P_2\}
, \{R_1+F_2\}, \{R_2+F_2\}, $\\ 
& $ \{P_1+F_2, R_1+P_2\} $ \\
\hline
\multirow{1}{*}{$F_1,F_2,F_3$}
& 
$ \{P_0+S_1\}, \{P_2+F_2\}, \{R_0+P_2\}
, \{R_2+F_3\} $\vspace{0.3cm} 
\end{tabular}
\end{small}
\caption{Canonical circuits of multistationarity for the three-layer cascade. The first column shows which phosphatase acts on each layer.
}\label{table:3layernoloop}
\end{table}

\end{example}

\begin{example}[Three-layer  ERK cascade with feedback loop]\label{ex:3layercascadeloop}
Consider now the network in Figure~\ref{fig:3layerloop} which incorporates a negative feedback loop.
\begin{figure}
\scalebox{0.7}{
\begin{tikzpicture}[xscale=1,yscale=1]
 \node (S_0) at (0, 0) {$S_0$};
 \node (S_1) at (2, 0) {$S_1$};
 \node (S_2) at (4, 0) {$S_2$};
 \node (P_0) at (2, -2.5) {$P_0$};
 \node (P_1) at (4, -2.5) {$P_1$};
 \node (P_2) at (6, -2.5) {$P_2$};
 \node (R_0) at (4, -5) {$R_0$};
 \node (R_1) at (6, -5) {$R_1$};
 \node (R_2) at (8, -5) {$R_2$};
 \node (P_01) at (3,-1.7) {};
 \node (P_12) at (5,-1.7) {};
 \node (R_01) at (5,-4.2) {};
 \node (R_12) at (7,-4.2) {};
%  \node (S_12) at (3,0.8) {};
 \node (S_12) at (3.2,0.7) {};
  \draw
    (S_0) edge[->,thick,bend left] node [above] {$E$} (S_1)
       (S_1) edge[->,thick,bend left] node [below] {$F_1$} (S_0)
    (P_0) edge[->,thick,bend left] node [above] {$S_1$} (P_1)
    (P_1) edge[->,thick,bend left] node [above] {$S_1$} (P_2)
    (P_1) edge[->,thick,bend left] node [below] {$F_2$} (P_0)
    (P_2) edge[->,thick,bend left] node [below] {$F_2$} (P_1)
    (R_0) edge[->,thick,bend left] node [above] {$P_2$} (R_1)
    (R_1) edge[->,thick,bend left] node [above] {$P_2$} (R_2)
    (R_1) edge[->,thick,bend left] node [below] {$F_3$} (R_0)
    (R_2) edge[->,thick,bend left] node [below] {$F_3$} (R_1)
     (S_1) edge[->,thick,bend left] node [above] {$R_2$} (S_2)
       (S_2) edge[->,thick,bend left] node [below] {$G$} (S_1);
\draw[dashed, ->,thick,looseness=1] (S_1) to [out=-90,in=90] (P_01) ;    
\draw[dashed, ->,thick,looseness=1] (S_1) to [out=-90,in=90] (P_12) ; 
\draw[dashed, ->,thick,looseness=1] (P_2) to [out=-90,in=90] (R_01) ;    
\draw[dashed, ->,thick,looseness=1] (P_2) to [out=-90,in=90] (R_12) ;  
% \draw[->,thick,looseness=.7] (R_2) to [out=45,in=90] (S_12) ; 
\path[dashed,  ->,thick] (R_2) edge [out=60, in=0] (S_12);
\end{tikzpicture}
}
\caption{Three-layer ERK cascade with feedback loop}
\label{fig:3layerloop}
\end{figure}
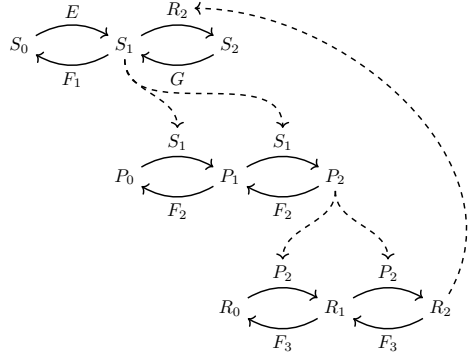
We can compute the circuits of multistationarity of variations obtained using equal or different phosphatases 
in the three layers of the cascade. We summarize the results obtained in Table  \ref{table:3layerwithloop}.

\begin{table}[h!]
\centering
\begin{small}
\begin{tabular}{c|c}
Enzymes & Canonical Circuits of Multistationarity \\
\hline
\multirow{2}{*}{$F_1,F_1,F_1$}
& $ \{P_0+S_1\}, \{P_1+F_1\},
    \{P_2+F_1\}, \{R_0+P_2\}, \{R_1+F_1\}, 
    \{R_2+F_1\},$ \\
& $ \{P_1+S_1, S_1+F_1\} $ \\
\hline
\multirow{2}{*}{$F_1,F_1,F_3$}
& $  \{P_0+S_1\}, \{P_1+F_1\},
  \{P_2+F_1\}, 
  \{R_0+P_2\},  \{R_2+F_3\}, 
 $\\ 
& $ \{P_1+S_1, S_1+F_1\}$ \\  
\hline
\multirow{2}{*}{$F_1,F_2,F_1$}
& $ \{P_0+S_1\}, \{P_2+F_2\}, 
  \{R_0+P_2\},   \{R_1+F_1\},  \{R_2+F_1\}, 
  $ \\
& $
 \{P_1+S_1, R_1+P_2, S_1+F_1\}, 
 \{P_1+F_2, P_1+S_1\}$ \\
\hline
\multirow{2}{*}{$F_1,F_2,F_2$}
& 
$ \{P_0+S_1\}, \{P_2+F_2\},
   \{R_0+P_2\}, \{R_1+F_2\}, \{R_2+F_2\}, $\\ 
& $
    \{P_1+F_2,R_1+P_2\},
    \{P_1+F_2,P_1+S_1\}
   $ \\
\hline
\multirow{2}{*}{$F_1,F_2,F_3$}
& 
$ \{P_0+S_1\}, \{P_2+F_2\}, 
  \{R_0+P_2\}, \{R_2+F_3\},  $\\ 
& $  
    \{P_1+F_2,P_1+S_1\},
\{ R_1+F_3,R_1+P_2\} $ \\     

\vspace{0.3cm} 
\end{tabular}
\end{small}
\caption{Canonical circuits of multistationarity for the three-layer  ERK cascade with feedback loop. \label{table:3layerwithloop}}
\end{table}
\end{example}

A closer look at Tables~\ref{table:3layernoloop} and \ref{table:3layerwithloop} reveals that all circuits from Table~\ref{table:3layernoloop} are present in Table~\ref{table:3layerwithloop}. 
The only new ones in Table~\ref{table:3layerwithloop} are $\{P_1+F_2, P_1+S_1\}$ and $\{R_1+F_3, R_1+P_2\}$. 

In all cases, the one element circuits $ \{P_0+S_1\}, \{P_2+F_i\}, \{R_0+P_2\}
, \{R_1+F_j\}, \{R_2+F_j\}$ are present and in the case in which the last and first phosphatases are the same but different from the second
one, then $\{P_1+S_1, R_1+P_2, S_1+F_1\}$  is a circuit of multistationarity of cardinality $3$ involving reactions in each of
the three layers. It would be certainly interesting to include the biochemical viewpoint about the occurrence of multistationarity.

\section{Hybrid phopho/dephosphorylation mechanisms}\label{sec:sequential}

We now consider the $n$-site phosphorylation system. For a more complete discussion of this topic see~\cite{Gu2007} and~\cite{SK2015}. In~\cite{GRPMD} the authors studied the $n$-site sequential distributive phosphorylation, see Figure~\ref{fig:nsite}~(A).
In this kind of networks, there is an initial substrate $S_0$  that undergoes a post-translational modification by the attachment of  phosphate groups to different binding sites.  In a sequential phosphorylation/dephosphorylation mechanism, sites are modified in a specific order. In this case, we denote by $S_i$ the phospho-form with $i$ phosphorylated sites. In a distributive mechanism, sites are phosphorylated/dephosphorylated one by one. If all the sites are phosphorylated/dephosphorylated at once  it is said that the enzymes act processively. In what follows we discuss the case when the enzyme $E$ acts distributively and the enzyme $F$ acts in a mixed fashion, part distributively and part processively, see Figure~\ref{fig:nsite}~(B).

\begin{figure}[h!] 
\scalebox{0.7}{
\begin{tabular}{cc}
 (A) & \begin{minipage}{0.85\textwidth}\begin{tikzpicture}[xscale=1,yscale=1]
 \node (S_0) at (0, 0) {$S_0$};
 \node (S_1) at (2, 0) {$S_1$};
 \node (S_2) at (4, 0) {$\cdots$};
 \node (S_{n-1}) at (6, 0) {$S_{n-1}$};
 \node (S_n) at (8, 0) {$S_{n}$};
 \draw (S_0) edge[->,thick,bend left] node [above] {$E$} (S_1)
       (S_1) edge[->,thick,bend left] node [below] {$F$} (S_0)
       (S_1) edge[->,thick,bend left] node [above] {$E$} (S_2)
       (S_2) edge[->,thick,bend left] node [below] {$F$} (S_1)
       (S_2) edge[->,thick,bend left] node [above] {$E$} (S_{n-1})
       (S_{n-1}) edge[->,thick,bend left] node [below] {$F$} (S_2)
       (S_{n-1}) edge[->,thick,bend left] node [above] {$E$} (S_n)
       (S_n) edge[->,thick,bend left] node [below] {$F$} (S_{n-1});
 \end{tikzpicture} \end{minipage}\\
(B) & \begin{minipage}{0.85\textwidth}\begin{tikzpicture}[xscale=1,yscale=1]
 \node (S_{i_0}) at (0, 0) {$S_{0}$};%{$S_{i_0}$};
 \node (P1) at (2,0) {$\cdots$};
 \node (S_{i_1}) at (4, 0) {$S_{i_1}$}; 
 \node (P2) at (6,0) {$\cdots$};
 \node (S_{i_{k-1}}) at (8, 0) {$S_{i_{k-1}}$};
 \node (P3) at (10,0) {$\cdots$};
 \node (S_{i_k}) at (12, 0) {$S_n$};%{$S_{i_k}$};
 \draw (S_{i_0}) edge[->,thick,bend left] node [above] {$E$} (P1)
       (P1) edge[->,thick,bend left] node [above] {$E$} (S_{i_1})
       (S_{i_1}) edge[->,thick,bend left] node [above] {$E$} (P2)
       (P2) edge[->,thick,bend left] node [above] {$E$} (S_{i_{k-1}})
       (S_{i_{k-1}}) edge[->,thick,bend left] node [above] {$E$} (P3)
       (P3) edge[->,thick,bend left] node [above] {$E$} (S_{i_k})
       (S_{i_1}) edge[->,thick,bend left] node [below] {$F$} (S_{i_0})
       (S_{i_{k-1}}) edge[->,thick,bend left] node [below] {$F$} (P2)  
       (P2) edge[->,thick,bend left] node [below] {$F$} (S_{i_1})    
       (S_{i_k}) edge[->,thick,bend left] node [below] {$F$} (S_{i_{k-1}});
 \end{tikzpicture}\end{minipage}
\end{tabular}
}
\caption{(A): $n$-site sequential distributive phosphorylation. (B): $n$-site sequential but not distributive phosphorylation.}
\label{fig:nsite}
\end{figure}
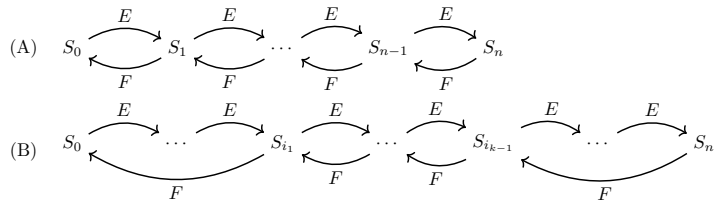

%\begin{example}\label{ex:distphos} 
  
\begin{definition}\label{def:CI}
 Let $n$ be a positive integer and fix a set of indices $I=\{i_0=0<i_1<\dots <i_k=n\}$. The network $C_{I}$ is defined by species $S_0,S_1,\dots,S_n,E,F$ and the following reactions:
\begin{align*}
S_j+E&\to S_{j+1}+E,\quad j=0,\dots, n-1 \\
S_{i_j}+F&\to S_{i_{j-1}}+F,\quad j=1,2,\dots,k.
\end{align*}
\end{definition}

We computed the circuits of multistationarity of the network $C_{I}$ for all possible $I$ for  up to $n=8$ binding sites and our experiments suggested Theorem~\ref{th:nsitepartial} below, valid for any value of $n$. 

As the networks $C_I$ are MESSI networks (see Section~\ref{sec:Messi}), it follows from Theorems~\ref{th:ylebn} and~\ref{th:monoy} that all  $C_I$ are monostationary and lebn and that any non-confluent extension of them is also lebn. Moreover, they are {$s$-toric MESSI systems}, defined in~\cite{messi}. 

We now restate with our language Theorem~5.4 in~\cite{messi}, which is essentially contained in~\cite{Focm}.
Given a lebn network, with associated matrix  $\rm {Exp}$ as in Definition~\ref{def:Bpoly} of full rank, we will denote by $A \in \Z^{d \times s}$ a full rank
Gale dual matrix of Exp. This means that $A \cdot \rm{Exp} = 0$.

\begin{theorem}[Restatement of Th.~5.4 in~\cite{messi}] \label{th:5.3}
Let $G$ be a lebn network, with conservation-law matrix $W$ and associated matrix  $\rm {Exp}$ as in Definition~\ref{def:Bpoly}. 
Assume that ${\rm rank}(W) + {\rm rank}({\rm Exp})=s$ and~\eqref{eq:dets} holds. Then, the following assertions are equivalent:
\begin{itemize}
 \item[(i)] $G$ is monostationary
 \item[(ii)] For all $I \subset  \{1, \dots, s\}$ of cardinality ${\rm rank}(W)$,
  \[{\rm sign}(\det(W_I)) = \varepsilon  \, {\rm sign} (\det(A_{I})),\] 
 with $\varepsilon = \pm 1$.
 \item[(iii)] All the coefficients in the polynomial $B$ in~\eqref{eq:Bpoly} have the same sign. 
\end{itemize}

\end{theorem}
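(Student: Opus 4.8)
The plan is to route all three equivalences through the polynomial $B$ and the binomial description of the positive steady states of a lebn network. By Definition~\ref{def:lebn}, for any rate constants $\kappa$ the positive steady states of $G$ are exactly the positive solutions of $h_1(x)=\dots=h_{s-d}(x)=0$; since each $h_i$ has two nonzero terms whose $\Q(\kappa)$-coefficients have opposite sign on $\R^s_{>0}$, this system is equivalent to $x^{m_i}=\gamma_i(\kappa)$, $i=1,\dots,s-d$, where $m_1,\dots,m_{s-d}\in\Z^s$ are the columns of ${\rm Exp}$ and each $\gamma_i(\kappa)$ is a positive rational function of $\kappa$. Two positive points are stoichiometrically compatible precisely when they agree on the conservation laws, i.e. on $Wx$. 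Hence $G$ fails to be monostationary if and only if there are a choice of $\kappa$ and two distinct points $x\neq y$ in $\R^s_{>0}$ with $x^{m_i}=\gamma_i(\kappa)=y^{m_i}$ for all $i$ and $\ell_j(x)=\ell_j(y)$ for all $j$. The forward implication needs no control of $\kappa$; for the converse one must realize a prescribed positive tuple $(\gamma_1,\dots,\gamma_{s-d})$ as $(\gamma_1(\kappa),\dots,\gamma_{s-d}(\kappa))$, which follows from a prescription-of-rate-constants argument in the spirit of Lemma~\ref{lem:clarke} (produce $\kappa$ with $f_\kappa(x)=0$ for one of the two points; then $\gamma_i(\kappa)=x^{m_i}=y^{m_i}$, and the other point is automatically a steady state). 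With this, ``$G$ is not monostationary'' becomes equivalent to the existence of distinct $x,y\in\R^s_{>0}$ with $x^{m_i}=y^{m_i}$ for all $i$ and $\ell_j(x)=\ell_j(y)$ for all $j$, with no further reference to $\kappa$.

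To conclude (i) $\Leftrightarrow$ (iii), I would apply Proposition~\ref{prop:Bpoly}: such a pair exists if and only if $B$ is the zero polynomial or $B$ has two coefficients of opposite sign. By Corollary~\ref{cor:Bnon0}, hypothesis~\eqref{eq:dets} rules out $B\equiv0$. Therefore $G$ fails to be monostationary exactly when $B$ has two coefficients of opposite sign, that is, exactly when (iii) fails.

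For (ii) $\Leftrightarrow$ (iii) I would expand the coefficients of $B$ by complementary minors. By Lemma~\ref{lem:Laplace} (together with the factorization of $Wx$ as $W$ times the diagonal matrix with entries $x_1,\dots,x_s$ used in the proof of Corollary~\ref{cor:Bnon0}), for each $I\subset\{1,\dots,s\}$ of cardinality $d={\rm rank}(W)$ the coefficient of $\prod_{i\in I}x_i$ in $B$ equals $\pm\det(W_I)\det({\rm Exp}^t_{I^c})$, the sign depending only on $I$. Since $A$ is a full-rank Gale dual of ${\rm Exp}$ (so $A\cdot{\rm Exp}=0$ and ${\rm rank}(A)=d$), the standard Gale sign identity gives $\det(A_I)=\lambda\,\sigma_I\,\det({\rm Exp}^t_{I^c})$ with $\lambda\neq0$ constant and $\sigma_I=\pm1$, and the Laplace expansion sign and $\sigma_I$ agree up to a factor independent of $I$. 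Combining these, the coefficient of $\prod_{i\in I}x_i$ in $B$ equals a fixed nonzero constant times $\det(W_I)\det(A_I)$. Hence all nonzero coefficients of $B$ have the same sign if and only if $\det(W_I)\det(A_I)$ has a fixed sign on every $I$ for which it is nonzero, i.e. if and only if ${\rm sign}(\det W_I)=\varepsilon\,{\rm sign}(\det A_I)$ for a single $\varepsilon\in\{+1,-1\}$, which is (ii).

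The step I expect to be the main obstacle is the realizability claim in the first paragraph, namely that $\kappa\mapsto(\gamma_1(\kappa),\dots,\gamma_{s-d}(\kappa))$ is surjective onto $\R^{s-d}_{>0}$: without it one only obtains that non-monostationarity implies a sign change of $B$, not the converse, and this is precisely where the lebn structure — and, in the applications below, the $s$-toric MESSI structure — must be exploited, following~\cite{Focm}. The sign bookkeeping in the third paragraph is routine but must be carried out with care, in particular the compatibility between the Laplace expansion signs and the Gale duality signs.
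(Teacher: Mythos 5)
The paper does not actually prove this statement: it is quoted verbatim as a restatement of Theorem~5.4 of \cite{messi} (``essentially contained in'' \cite{Focm}), so there is no in-paper proof to compare against. Your reconstruction is correct and is precisely the argument those references give, assembled from the paper's own ingredients: reduce ``not monostationary'' to the existence of distinct positive $x,y$ with $x^{m_i}=y^{m_i}$ and $Wx=Wy$, apply Proposition~\ref{prop:Bpoly} together with Corollary~\ref{cor:Bnon0} (which uses~\eqref{eq:dets} to exclude $B\equiv 0$) for (i)$\Leftrightarrow$(iii), and use Lemma~\ref{lem:Laplace} plus the Gale-dual minor identity for (ii)$\Leftrightarrow$(iii). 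The one step you flag as the likely obstacle --- surjectivity of $\kappa\mapsto(\gamma_1(\kappa),\dots,\gamma_{s-d}(\kappa))$ --- is in fact a non-issue, and your own Lemma~\ref{lem:clarke} device already closes it: since ${\rm rank}({\rm Exp})=s-{\rm rank}(W)=s-d$, the columns $m_1,\dots,m_{s-d}$ are linearly independent, so $x\mapsto(x^{m_1},\dots,x^{m_{s-d}})$ maps $\R^s_{>0}$ onto $\R^{s-d}_{>0}$ and the binomial system $x^{m_i}=\gamma_i(\kappa)$ has a positive solution for every positive $\kappa$; hence the network has a positive steady state for every $\kappa$ and is therefore consistent, so Lemma~\ref{lem:clarke} applies to the point $x$ furnished by Proposition~\ref{prop:Bpoly} and the resulting $\kappa$ satisfies $\gamma_i(\kappa)=x^{m_i}=y^{m_i}$, making $y$ a second stoichiometrically compatible positive steady state. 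The only other caveat is a reading convention in (ii): when one of $\det(W_I)$, $\det(A_I)$ vanishes, the displayed sign equality must be interpreted as the product $\det(W_I)\det(A_I)$ having a fixed sign whenever it is nonzero (this is how the condition is used in the proof of Theorem~\ref{th:monoy}), which is exactly the reading your minor expansion adopts.
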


We can now prove the main result in this section. Note that the case where $I=\{0,1,\dots,n\}$ is Proposition~4.11 in \cite{SFeliu}.

\begin{theorem}\label{th:nsitepartial} Let $n$ be a positive integer and fix a set $I=\{i_0=0<i_1<\dots <i_k=n\}$. Consider the network $C_{I}$, as in Definition~\ref{def:CI}, the circuits of multistationarity of $C_{I}$ are the sets defined by  every single source complex but $S_{i_{k-1}}+E, S_{i_{k-1}+1}+E, \dots, S_{n-1}+E$ 
and $S_{i_1}+F$.
\begin{center}
 \begin{tikzpicture}[xscale=1,yscale=1]
 \node (S_{i_0}) at (0, 0) {$S_{0}$};
 \node (P1) at (2,0) {$\cdots$};
 \node (S_{i_1}) at (4, 0) {$S_{i_1}$}; 
 \node (P2) at (6,0) {$\cdots$};
 \node (S_{i_{k-1}}) at (8, 0) {$S_{i_{k-1}}$};
 \node (P3) at (10,0) {$\cdots$};
 \node (S_{i_k}) at (12, 0) {$S_{n}$};
 \draw (S_{i_0}) edge[->,thick,bend left] node [above] {$E$} (P1)
       (P1) edge[->,thick,bend left] node [above] {$E$} (S_{i_1})
       (S_{i_1}) edge[->,thick,bend left] node [above] {$E$} (P2)
       (P2) edge[->,thick,bend left] node [above] {$E$} (S_{i_{k-1}})
       (S_{i_{k-1}}) edge[->,thick,bend left,dashed] node [above] {$E$} (P3)
       (P3) edge[->,thick,bend left,dashed] node [above] {$E$} (S_{i_k})
       (S_{i_1}) edge[->,thick,bend left,dashed] node [below] {$F$} (S_{i_0})
       (S_{i_{k-1}}) edge[->,thick,bend left] node [below] {$F$} (P2)  
       (P2) edge[->,thick,bend left] node [below] {$F$} (S_{i_1})    
       (S_{i_k}) edge[->,thick,bend left] node [below] {$F$} (S_{i_{k-1}});
 \end{tikzpicture}
\end{center}
If $i_1=n$, i.e. if $I=\{0,n\}$, the addition of intermediates cannot produce multistationarity. 
\end{theorem}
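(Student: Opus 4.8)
The plan is to apply Theorem~\ref{th:Bpoly}: the circuits of multistationarity of $C_I$ are the minimal $u$--supports among the monomials with the ``wrong'' sign of the polynomial $B$ attached to the maximal canonical extension of $C_I$, and we compute this sign pattern explicitly via Lemma~\ref{lem:Laplace} and the rigid structure of $C_I$. The network $C_I$ has species $S_0,\dots,S_n,E,F$, so $s=n+3$, and exactly three conservation laws (the total substrate $\sum_j s_j$ and the individually conserved $e$ and $f$), so $d=3$; by Theorems~\ref{th:ylebn} and~\ref{th:monoy} it is monostationary and lebn and all its non--confluent extensions are lebn. Telescoping the equations $\dot s_j=0$ one checks that the positive steady states are cut out by the $n$ binomials $a_j s_j-a_{j+1}s_{j+1}$, one for each $E$--step $S_j\to S_{j+1}$ lying strictly inside a block $(i_{\ell-1},i_\ell)$, together with $a_{i_\ell-1}\,e\,s_{i_\ell-1}-b_\ell\,f\,s_{i_\ell}$ for $\ell=1,\dots,k$; this determines $\mathrm{Exp}$ (columns: the $n$ Laurent exponents just listed) and $W$ (rows: all ones on the $s$--coordinates, $\delta_e$, $\delta_f$). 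The key observation is that, restricted to the $s$--coordinates, $\mathrm{Exp}^{\top}$ is the signed incidence matrix of the path $S_0-S_1-\cdots-S_n$; combining this with Corollary~\ref{cor:Bnon0} and Theorem~\ref{th:5.3} gives $B_{C_I}=\sigma\sum_{j=0}^{n}x_{S_j}x_E x_F$ for a fixed sign $\sigma$.

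Let $\widetilde G$ be the canonical extension of $C_I$ obtained by adding an intermediate $U_c$ at \emph{every} source complex $c$, these being $A_j:=S_j+E$ for $j=0,\dots,n-1$ and $B_\ell:=S_{i_\ell}+F$ for $\ell=1,\dots,k$. By Remark~\ref{rem:uk} and Lemma~\ref{lem:linear} each $U_c$ adds the binomial $u_c-\mu_c x^{c}$, and the dressed conservation laws~\eqref{eq:cons_laws_G} are $\sum_j s_j+\sum_c u_c$, $e+\sum_{c\ni E}u_c$, $f+\sum_{c\ni F}u_c$. By Lemma~\ref{lem:Laplace} the coefficient of $\prod_{i\in J}x_i$ in $B_{\widetilde G}$ is $\pm\det(\overline W_{J})\det(\widetilde{\mathrm{Exp}}^{\top}_{J^{c}})$ for $|J|=3$; a short case analysis of the $3\times 3$ submatrices $\overline W_J$ shows that $\det\overline W_J\neq 0$ forces $e\in J$ or $f\in J$, whence every monomial of $B_{\widetilde G}$ involving the new variables has $u$--degree at most $2$, and those of $u$--degree $2$ have support $\{A_j,B_\ell\}$ (exactly one $E$--source and one $F$--source). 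By Theorem~\ref{th:Bpoly} every circuit then has cardinality $\le 2$, and a circuit of size $2$ consists of one $E$--source and one $F$--source.

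The sign computation is the heart of the argument. Using~\eqref{eq:Bxu}, $B_{(C_I)_{ext,\{c\}}}(x,u)=\pm B_{C_I}(x)+u\,P_c(x)$, so by Proposition~\ref{prop:Bpoly} the singleton $\{c\}$ is a circuit if and only if $P_c$ has a coefficient of sign $-\sigma$. Computing $P_c$ amounts to peeling the single $u$--column (a $\pm1$ entry) out of the determinant defining $B$ --- which contracts the pendant edge attached at $c$ --- and then evaluating a determinant of the incidence matrix of the path $S_0-\cdots-S_n$ modified by that contraction and by the dressed conservation laws. Carrying this out, $P_{S_j+E}$ acquires a coefficient of sign $-\sigma$ precisely when $S_j$ lies strictly below the last $F$--landing point, i.e.\ when $j<i_{k-1}$, and $P_{S_{i_\ell}+F}$ acquires one precisely when $\ell\ge 2$; otherwise all coefficients of $P_c$ have sign $\sigma$. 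This yields exactly the stated singleton circuits: every source complex except $S_{i_{k-1}}+E,\dots,S_{n-1}+E$ and $S_{i_1}+F$. To rule out size--two circuits, by minimality (Definition~\ref{def:circuits}) and the previous step such a circuit would be a pair $\{A_j,B_1\}$ with $j\ge i_{k-1}$ whose degree--$(1,1)$ coefficient $Q_{A_j,B_1}$ in $B_{(C_I)_{ext,\{A_j,B_1\}}}$ has a $-\sigma$ coefficient while $P_{A_j}$ and $P_{B_1}$ do not; the analogous two--pendant--edge computation shows $Q_{A_j,B_1}$ inherits constant sign $\sigma$ in that range, so no such pair exists. Finally, if $i_1=n$ then $k=1$ and the exclusion list above is the whole set of source complexes, so there is no singleton and hence, by the size bound, no circuit at all; equivalently, the same determinant evaluation shows every coefficient of $B_{\widetilde G}$ has sign $\sigma$, so by Proposition~\ref{prop:Bpoly} together with Proposition~\ref{prop:nc=canonical} no extension of $C_{\{0,n\}}$ by intermediates is multistationary.

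The main obstacle is exactly the sign bookkeeping in the last paragraph: one must fix a coherent ordering of species and intermediates, recognize the relevant minors $\det(\widetilde{\mathrm{Exp}}^{\top}_{J^{c}})$ (and, after the peeling steps, $P_c$ and $Q_{A_j,B_\ell}$) as determinants of the incidence matrix of the phosphorylation path with pendant edges attached at the chosen complexes --- so that the nonzero minors are $\pm1$ and their signs are combinatorial --- and then match these signs consistently against $\sigma$ while tracking the overall sign ambiguity in $B$ and the dressing of the conservation laws. It is in this matching that the thresholds $j<i_{k-1}$ and $\ell\ge 2$ appear; everything else is routine.
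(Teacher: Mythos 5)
Your overall strategy---compute the $B$-polynomial of the full canonical extension and read off the circuits from the minimal $u$-supports of the wrong-sign monomials via Theorem~\ref{th:Bpoly}---is viable and is essentially dual to what the paper does (the paper works instead with the minor-sign criterion of Theorem~\ref{th:5.3}, which by its equivalence (ii)$\Leftrightarrow$(iii) encodes the same sign information). The structural reductions you make are mostly sound: the identification of $B_{C_I}$ as a sum of monomials $x_{S_j}x_Ex_F$ of constant sign, and the bound that every monomial of $B_{\widetilde G}$ has $u$-degree at most $2$ with the degree-$2$ supports consisting of one $E$-source and one $F$-source, are correct (although your intermediate claim that $\det\overline W_J\neq 0$ forces $e\in J$ or $f\in J$ is false --- take $J$ consisting of one $S_j$-column, one $E$-intermediate column and one $F$-intermediate column, giving determinant $\pm1$; the $u$-degree bound survives only because two $u$-columns of the same type are equal).

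The genuine gap is that the entire content of the theorem --- which singletons produce a $-\sigma$ coefficient and which do not, and that the residual pairs $\{S_j+E,\,S_{i_1}+F\}$ with $j\ge i_{k-1}$ do not --- is asserted rather than proved. You write ``Carrying this out, $P_{S_j+E}$ acquires a coefficient of sign $-\sigma$ precisely when $j<i_{k-1}$'' and ``the analogous two-pendant-edge computation shows $Q_{A_j,B_1}$ inherits constant sign $\sigma$'', and you then name this sign bookkeeping as ``the main obstacle''; but that obstacle is exactly where the thresholds $i_{k-1}$ and $i_1$ come from, and nothing in your write-up derives them. The paper resolves this by exhibiting, for each claimed singleton circuit, the explicit monomial parametrization of the extended steady states in terms of $s_0,e,f$ and two concrete $3\times3$ column subsets $I$ with $\det(W_I)\det(A_I)$ of opposite signs (e.g.\ $\{S_0,E,F\}$ versus $\{S_n,F,U_{\ell_0}\}$ for the $E$-intermediates), and, for the negative direction, by adding \emph{all} excluded intermediates at once and verifying $\det(W_J)\det(A_J)\ge 0$ for every $J$ --- which, after collapsing repeated columns, reduces to checking a single $3\times5$ pair of matrices; monostationarity of that one extension then rules out every subset of the excluded complexes simultaneously, so no case-by-case treatment of pairs is needed. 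To complete your argument you would need to supply the explicit minors (or equivalently the explicit parametrization exponents) that produce your claimed sign pattern; until then the proof establishes only the a priori bound that circuits have size at most two, not which sets actually are circuits.
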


In particular, if $I=\{i_0=0<i_1=m<i_2=n\}$ for some $0<m<n$ then the circuits of multistationarity are 
$$\{S_0+E\},\{S_1+E\},\dots,\{S_{m-1}+E\}, \{S_{n}+F\}.$$

In the case $I=\{i_0=0<i_1=a<i_2=b<i_3=n\}$, the circuits of multistationarity are
$$\{S_0+E\},\{S_1+E\},\dots,\{S_{b-1}+E\}, \{S_{b}+F\}, \{S_{n}+F\}.$$

\begin{proof} We apply Theorem~\ref{th:5.3}. Suppose that we add one intermediate $U_{\ell_0}$ in the canonical form: 
\begin{equation*}
 {{S_{\ell_0}}}+{E}
 \rightleftarrows U_{\ell_0},
  \text{ for certain } \ell_0 \text{ with } i_{j_0}\leq \ell_0 < i_{j_0+1},\, \text{and } 0\leq {j_0}<k-1. 
 \end{equation*} 
By Theorem~\ref{th:ylebn}, this new network is lebn and we can therefore parametrize the positive steady states by monomials. Call with small letters the concentrations of the species. We can write the concentration of all the species at steady state in terms of the concentrations $s_0$, $e$, and $f$ in the following way:
\begin{equation}\label{parametrization}
s_{\ell}=\psi_{\ell}(\kappa)\dfrac{s_0 e^j}{f^j}\ \text{ if } i_j\leq \ell < i_{j+1},
\ \text{ and }\ u_{\ell_0}=\phi(\kappa)\dfrac{s_0 e^{j_0+1}}{f^{j_0}},
\end{equation}
where $\psi_{\ell}(\kappa)$ and $\phi(\kappa)$ are rational functions of the rate constants $\kappa$ which
are defined in the positive orthant.

We consider the matrix $A$ of exponents of the monomials in the parametrization~\eqref{parametrization} with the following order of the species: $S_0=S_{i_0}$, $S_1$, $\dots$, $S_n=S_{i_k}$, $E$, $F$ and $U_{\ell_0}$. 
\begin{small}
\[A= \bordermatrix{%
     &  S_{i_0} & & S_{i_1 - 1}       &  S_{i_1} & S_{i_1 + 1}  & & S_{i_k -1 } & S_{i_k} & E & F & Y_{\ell_0} \cr
     &  1 & \cdots &1 & 1 & 1 &\cdots & 1  & 1 & 0 & 0 & 1\cr
     & 0 & \cdots   & 0 & 1 & 1 & \cdots &  k-1 & k & 1 & 0 & j_0+1\cr
     & 0 & \cdots & 0 & -1 & -1 & \cdots &  -(k-1) & -k & 0 & 1 & -j_0
  }.
\]
\end{small}%
We also consider the matrix $W$ of conservation relations:
\begin{small}
\[W= \bordermatrix{%
     &  S_{i_0} & & S_{i_1 - 1}       &  S_{i_1} & S_{i_1 + 1}  & & S_{i_k -1 } & S_{i_k} & E & F & Y_{\ell_0} \cr
     &  1 & \cdots &1 & 1 & 1 &\cdots & 1  & 1 & 0 & 0 & 1\cr
     & 0 & \cdots   & 0 & 0 & 0 & \cdots &  0 & 0 & 1 & 0 & 1\cr
     & 0 & \cdots & 0 & 0 & 0 & \cdots &  0 & 0 & 0 & 1 & 0
  }.
\]
\end{small}%
If we take the submatrices of $W$ and $A$ with columns corresponding to the species $S_0=S_{i_0}$, $E$ and $F$, the corresponding determinants have the same sign. 
But if we take the submatrices of $W$ and $A$ with columns corresponding to the species $S_n=S_{i_k}$, $F$ and $Y_{\ell_0}$, the determinants have opposite sign, and then the network is multistationary. 

When we add one intermediate $U_j$ in the canonical form
  \begin{equation*}
 {S_{i_{j}}}+{F} \rightleftarrows U_j, \text{ with } 2\leq j \leq k, 
\end{equation*}
 the network is also multistationary and the proof is analogous to the previous one. In this case, we can obtain a parametrization of the concentration of the species at steady state similar as in~\eqref{parametrization} in terms of $s_0$, $e$ and $f$. We have the corresponding matrix of monomials:
\begin{small}
\[A= \bordermatrix{%
     &  S_{i_0} & & S_{i_1 - 1}       &  S_{i_1} & S_{i_1 + 1}  & & S_{i_k -1 } & S_{i_k} & E & F & U_{j} \cr
     &  1 & \cdots &1 & 1 & 1 &\cdots & 1  & 1 & 0 & 0 & 1\cr
     & 0 & \cdots   & 0 & 1 & 1 & \cdots &  k-1 & k & 1 & 0 & j\cr
     & 0 & \cdots & 0 & -1 & -1 & \cdots &  -(k-1) & -k & 0 & 1 & -(j-1)
  },
\]
\end{small}%
and the matrix of conservation relations:
\begin{small}
\[W= \bordermatrix{%
     &  S_{i_0} & & S_{i_1 - 1}       &  S_{i_1} & S_{i_1 + 1}  & & S_{i_k -1 } & S_{i_k} & E & F & U_{j} \cr
     &  1 & \cdots &1 & 1 & 1 &\cdots & 1  & 1 & 0 & 0 & 1\cr
     & 0 & \cdots   & 0 & 0 & 0 & \cdots &  0 & 0 & 1 & 0 & 0\cr
     & 0 & \cdots & 0 & 0 & 0 & \cdots &  0 & 0 & 0 & 1 & 1
  }.
\]
\end{small}%
As before, if we take the submatrices of $W$ and $A$ with columns corresponding to the species $S_0=S_{i_0}$, $E$ and $F$, the corresponding determinants have the same sign. But if we take the submatrices of $W$ and $A$ with columns corresponding  to the species $S_0=S_{i_{i_0}}$, $E$ and $U_{j}$, the corresponding determinants have opposite sign, and then the network is multistationary.

Now, it remains to check that if we add the all the intermediates $U_{\ell}$ for each $i_{k-1}\leq \ell< i_{k}=n$ and the intermediate $U_{i_1}$ in the canonical form, the network is monostationary. That is, if we add simultaneously  all the intermediates in the following form:
\begin{equation*}
 {{S_{\ell}}}+{E} \rightleftarrows Y_{\ell},
  \text{ for all } \ell \text{ with } i_{k-1}\leq \ell< i_{k},   \qquad
 {S_{i_{1}}}+{F} \rightleftarrows U_{i_1},
 \end{equation*}
the network is monostationary. In this case, any of its subnetworks is also monostationary and we are done.  We can also express all the concentrations of the species in terms of $s_0$, $e$, and $f$, in a similar way as in parametrization~\eqref{parametrization}. Then, we take the corresponding matrices $A$ and $W$ with the following order of species:  $S_0=S_{i_0}$, $S_1$, $\dots$, $S_n=S_{i_k}$, $E$, $F$ and $Y_{i_{k-1}}$,$\dots$,$Y_{i_k-1}$ and $U_{i_1}$: 
 
\begin{small}
\[A= \bordermatrix{%
     &  S_{i_0} &  & S_{i_k} & E & F & Y_{i_{k-1}} & & Y_{i_{k}-1} & U_{i_1} \cr
     &  1 & \cdots &1 & 0 & 0 & 1 & \cdots & 1 & 1 \cr
     & 0 & \cdots & k & 1 & 0 & k & \cdots & k & 1 \cr
     & 0 & \cdots & -k & 0 & 1 & -(k-1) & \cdots & -(k-1)  & 0
  },
\]
\end{small}%

\begin{small}
\[W= \bordermatrix{%
     &  S_{i_0} &  & S_{i_k} & E & F & Y_{i_{k-1}} & & Y_{i_{k}-1} & U_{i_1} \cr
     &  1 & \cdots  & 1 & 0 & 0 & 1 &  \cdots  & 1 & 1  \cr
     & 0 & \cdots   & 0 & 1 & 0 & 1 & \cdots   & 1 & 0 \cr
     & 0 & \cdots  & 0 & 0 & 1 & 0 & \cdots  & 0 & 1
  }.
\] 
\end{small}% 

Note that the first columns corresponding to the substrates $S_{\ell}$ are the same as in the matrices of the previous cases. Once again, if we take the submatrices of $W$ and $A$ with columns corresponding to the species $S_0=S_{i_0}$, $E$ and $F$, the corresponding determinants are equal to one. Now we have to check that  $\det(W_J)\det(A_J)\geq 0,$ for all subset $J$, with $|J|=3$ and we can conclude that the network is monostationary.
 
Since $A$ and $W$ have several repeated columns it is enough to check that for all integers $j$ with $0\leq j \leq k$ and for all subsets $J$ with $|J|=3$ we have $\det(W'_J)\det(A'_J)\geq 0,$ where
\begin{small}
\[
A'= \begin{pmatrix}
1&0&0&1&1\\
j&1&0&k&1\\
-j&0&1& -(k-1) & 0
\end{pmatrix}
\quad \text{ and } \quad
W'= \begin{pmatrix}%
1&0&0&1&1\\
0&1&0&1&0\\
0&0&1&0&1
\end{pmatrix}.
\] 
\end{small}% 
This can be easily done in a Computer Algebra System or even by hand.

\end{proof}

\section{Certified lebn biochemical mechanisms} \label{sec:Messi}

  In this section we show that many common examples of MESSI systems can be easily checked to be lebn.
The key feature of MESSI networks is that the chemical species are grouped into different subsets according to the way they participate in the reactions, very much akin to the intuitive partition of the species according to their function. 
We briefly introduce the basic definitions. For a more detailed explanation, see \cite{messi}. 
  We give in Theorems~\ref{th:ylebn} and~\ref{th:monoy} simple combinatorial conditions that ensure that a  core network is monostationary or linearly equivalent to a binomial network.  Thus, any  MESSI system without intermediate species that satisfies the hypotheses in Theorem~\ref{th:ylebn} is lebn  and we can take $C=\mathscr{C}_G$ to obtain all the multistationarity circuits, as in~\cite{SFeliu}. In particular, we show in Example~\ref{ex:3-layer-again} that this happens for the three-layer cascade without intermediate species studied in
 Section~\ref{sec:ERK}.

\medskip

\begin{definition} \label{def:messi}
A MESSI structure on a network with set of species $\Sp$ is given by a partition as in~\eqref{eq:partition}, together with a partition on the set of complexes and restrictions on the possible reactions described below:
\begin{equation}\label{eq:partition}
\Sp=\mathcal{I}\, \bigsqcup \Sp^{(1)} \bigsqcup \Sp^{(2)} \bigsqcup \dots \bigsqcup \Sp^{(d)}, \quad d \ge 1,
\end{equation}
where $\bigsqcup$ denotes disjoint union. Species in $\mathcal{I}$ are intermediate species and the species in $\Sp_M:= \Sp \setminus\mathcal{I}$ are core species. 

Complexes are also partitioned into intermediate complexes and core complexes, as we have defined in \S~\ref{sec:intermediates}. However, core complexes in MESSI networks must satisfy the following two conditions:
\begin{enumerate}[(i)]
\item They are monomolecular or bimolecular and consist of either one or two core species.  
\item If the core complex consists of two species $X_i, X_j$, they {\em must} belong to \emph{different} sets 
$\Sp^{(\alpha)}, \Sp^{(\beta)}$ (with $\alpha \neq \beta$ and $\alpha, \beta \geq1$).
\end{enumerate}

\medskip

The reactions of MESSI networks are constrained by the following rules:
\begin{itemize}
\item[(iii)] If three species are related by $X_i+ X_j \uri X_k$ or $X_k \uri X_i + X_j$, then $X_k\in\mathcal{I}$.
 \item[(iv)] If two core species $X_i, X_j$ are related by $X_i\uri X_j$, then there exists $\alpha \ge 1$ such that both belong to $\Sp^{(\alpha)}$.
 \item[(v)] If $X_i+X_j\uri X_k+X_\ell$, then there exist $\alpha \neq \beta$ such that $X_i,X_k \in 
 \Sp^{(\alpha)}$, $X_j,X_\ell \in \Sp^{(\beta)}$ or $X_i,X_\ell \in \Sp^{(\alpha)}$, $X_j,X_k \in \Sp^{(\beta)}$.
\end{itemize}

A {\em MESSI system} is the mass-action kinetics dynamical system~\eqref{eq:CRN} associated with a MESSI network. 
\end{definition}

\begin{remark}
 Note that constraint~(iii) imposed on the reactions makes the definition of intermediate species and complexes in MESSI networks slightly different from the definitions of core and intermediate species in \S~\ref{sec:intermediates}. Any intermediate species of a MESSI network can be considered an intermediate species but some core species as in~\S~\ref{sec:intermediates} could only be intermediate species in the MESSI setting. For example, if we consider the network
\[
 X_1+X_2\longrightarrow X_3 \longrightarrow X_4 \longrightarrow X_1+X_5,
\]
then necessarily $\{X_3,X_4\}\subseteq \mathcal{I}$.
\end{remark}

Given two partitions, $\Sp$ and $\Sp'$ we say that $\Sp$ refines $\Sp'$ if and only if the set of intermediate species of $\Sp$ contains the set of intermediate species of $\Sp'$ and for every set of core species of $\Sp$, $\Sp^{(\alpha)}$, there exists a set of core species of $\Sp'$, $\Sp'^{(\beta)}$, such that $\Sp^{(\alpha)}\subseteq \Sp'^{(\beta)}$. With this partial order we have the notion of minimal partition.

We now present three associated digraphs $G_1$, $G_2^0$, and $G_E$ associated to a MESSI network $G$. We refer the reader to \cite{messi} for complete definitions.

\begin{definition}
 \label{def:digraphs}
Given a MESSI network $G$, we call $G_1$ the associated digraph of the reduced network $G_{red, \mathcal I}$ introduced in Definition~\ref{def:redext}. The labels assigned to the edges are the rational functions of the original rate constants $\kappa$ defined in~\eqref{eq:tau}.
In order to construct the digraph $G_2^0$ we first ``hide'' the concentrations of some of the species in the labels.
We keep all monomolecular reactions $X_i\to X_j$ and for each reaction $X_i+X_\ell \overset{\tau}{\longrightarrow} X_j+X_m$, with $X_i,X_j \in \Sp^{(\alpha)}$, $X_\ell,X_m \in \Sp^{(\beta)}$, we consider two reactions $X_i \overset{\tau x_\ell}{\longrightarrow} X_j$ and $X_\ell \overset{\tau x_i}{\longrightarrow} X_m$.
We obtain a multidigraph $MG_2$ that may contain loops or parallel edges between some pairs of nodes (i.e., directed edges with the same source and target nodes). We define the digraph $G_2^0$ by deleting loops and isolated nodes and by collapsing into one edge all parallel edges in $MG_2$. We define the labels of each edge as the sum of the labels of the corresponding collapsed edges in $MG_2$. Note that these labels might depend on some of the concentrations.
We finally define the associated digraph $G_E$. The set of vertices of $G_E$ equals $\{\Sp^{(\alpha)}\mid \, \alpha \ge 1\}$. The pair $(\Sp^{(\alpha)},\Sp^{(\beta)})$ is an edge of $G_E$ when there is a species in $\Sp^{(\alpha)}$ in a label of an edge in $G_2^0$ between (distinct) species of $\Sp^{(\beta)}$. 
\end{definition}

\begin{remark}\label{rmk:eq_from_G2^0}
 Note that $G_1$ and $G_2^0$ together with the equations of the intermediate species define the whole variety of steady states of the system associated to $G$. Moreover, when the set of intermediate species is empty ($\mathcal{I}=\emptyset$), then the equations of $G$ can be reconstructed from $G_2^0$.
\end{remark}

\begin{example}[Three-layer cascade, continued] \label{ex:3-layer-again}
Recall the three-layer cascade network in Example~\ref{ex:3layercascade}. 
The $s=10$ species of the network are:
\begin{center}
\begin{tabular}{lllllll}
$X_1$=\ce{S_0},& $X_3$=\ce{P_0},& $X_5$=\ce{P_2},&$X_7$=\ce{R_1}, & $X_9$=\ce{E}, &  $X_{11}$=\ce{F_2}~.\\
$X_2$=\ce{S_1},& $X_4$=\ce{P_1},& $X_6$=\ce{R_0},& $X_8$=\ce{R_2},& $X_{10}$=\ce{F_1},\\
\end{tabular}
\end{center}
We consider the following partition: 
$$\mathcal{I}=\{ES_0,F_1S_1,S_1P_0, S_1P_1, F_2P_2,F_2P_1,P_2R_0, P_2R_1, F_2R_2, F_2R_1\},$$  
is the set of intermediate species, and
$$\Sp^{(1)}=\{S_0,S_1\}, \Sp^{(2)}=\{P_0,P_1,P_2\}, \Sp^{(3)}=\{R_0,R_1,R_2\},$$
$$\Sp^{(4)}=\{E\}, \Sp^{(5)}=\{F_1\}, \Sp^{(6)}=\{F_2\},$$ 
are the subsets of core species.
The intermediate complexes correspond to the intermediate species, and the remaining complexes are core complexes. 
This partition defines a MESSI structure in the network. Moreover, this is the only possible minimal partition (up to order) for the (core) species.

We present the digraphs $G_1$, $G_2^\circ$, $G_E$, and the multidigraph $MG_2$ associated to the three-layer cascade network in Figure~\ref{fig:3-layerMESSI}.
\begin{figure}
\scalebox{0.7}{
\begin{tabular}{l@{}l}
 $G_1$: &
 \begin{tabular}[t]{rl}
\ce{S_{0} + E ->[\tau_1] S_{1} + E},  &
\ce{S_{1} + F_1 ->[\tau_2] S_{0} + F_1}\\
\ce{P_{0} + S_{1} ->[\tau_3] P_{1} + S_{1}  ->[\tau_4] P_{2} + S_{1} }, &
\ce{P_{2} + F_2 ->[\tau_5] P_{1} + F_2 ->[\tau_6] P_{0} + F_2} \\
\ce{R_{0} + P_{2} ->[\tau_7] R_{1} + P_{2}  ->[\tau_8] R_{2} + P_{2} }, &
\ce{R_{2} + F_2 ->[\tau_9] R_{1} + F_2 ->[\tau_{10}] R_{0} + F_2}. \\
\end{tabular}\\
\multirow{2}{*}{$MG_2$:} & \multirow{12}{*}{
 \begin{tikzpicture}[node distance=1.7cm]
  \node (S0) {\ce{S_{0}}};
  \node (S1) [right of=S0] {\ce{S_{1}}};
  \node (P0) [below of=S0] {\ce{P_{0}}};
  \node (P1) [right of=P0] {\ce{P_{1}}};
  \node (P2) [right of=P1] {\ce{P_{2}}};
  \node (R0) [below of=P0] {\ce{R_{0}}};
  \node (R1) [right of=R0] {\ce{R_{1}}};
  \node (R2) [right of=R1] {\ce{R_{2}}};
  \draw ($(S0.east) + (0,.2em)$) edge[->] node[above]{$\tau_1 x_9$} ($(S1.west) + (0,.2em)$);
  \draw ($(S1.west) + (0,-.2em)$) edge[->] node[below]{$\tau_2 x_{10}$} ($(S0.east) + (0,-.2em)$);
  \draw ($(P0.east) + (0,.2em)$) edge[->] node[above]{$\tau_3 x_2$} ($(P1.west) + (0,.2em)$);
  \draw ($(P1.west) + (0,-.2em)$) edge[->] node[below]{$\tau_6 x_{11}$} ($(P0.east) + (0,-.2em)$);
  \draw ($(P1.east) + (0,.2em)$) edge[->] node[above]{$\tau_4 x_2$} ($(P2.west) + (0,.2em)$);
  \draw ($(P2.west) + (0,-.2em)$) edge[->] node[below]{$\tau_5 x_{11}$} ($(P1.east) + (0,-.2em)$);
  \draw ($(R0.east) + (0,.2em)$) edge[->] node[above]{$\tau_7 x_5$} ($(R1.west) + (0,.2em)$);
  \draw ($(R1.west) + (0,-.2em)$) edge[->] node[below]{$\tau_{10} x_{11}$} ($(R0.east) + (0,-.2em)$);
  \draw ($(R1.east) + (0,.2em)$) edge[->] node[above]{$\tau_8 x_5$} ($(R2.west) + (0,.2em)$);
  \draw ($(R2.west) + (0,-.2em)$) edge[->] node[below]{$\tau_9 x_{11}$} ($(R1.east) + (0,-.2em)$);
  \path[]
    (S1) edge [loop right] node {$\tau_1 x_1$} (S1)
         edge [->,loop right,in=-30,out=30,looseness=15] node[above=8pt] {$\tau_4 x_4$} (S1)
    (P2) edge [loop right] node {$\tau_7 x_6$} (P2)
         edge [->,loop right,in=-30,out=30,looseness=15] node[above=8pt] {$\tau_8 x_7$} (P2);
  \node (E) [right=6.5cm of S0] {\ce{E}};
  \node (F1) [below of=E] {\ce{F_1}};
  \node (F2) [below of=F1] {\ce{F_2}};
  \path[]
    (E) edge [loop right] node {$\tau_3 x_3$} (E)
    (F1) edge [loop right] node {$\tau_2 x_2$} (F1)
    (F2) edge [loop right] node {$\tau_5 x_5$} (F2)
         edge [->,loop right,in=-20,out=20,looseness=20] node {$\tau_6 x_4$} (F2)
         edge [->,loop right,in=-20,out=20,looseness=35] node {$\tau_9 x_8$} (F2)
         edge [->,loop right,in=-20,out=20,looseness=50] node {$\tau_{10} x_7$} (F2);    
\end{tikzpicture}
} \\
 & \\ 
 & \\
 & \\
 & \\
 & \\
 & \\
 & \\
 & \\
 & \\
 & \\
 & \\
 & \\
 & \\
\multicolumn{2}{l}{
\begin{tabular}{llll}
   $G_2^\circ$: & 
\multirow{10}{*}{
 \begin{tikzpicture}[node distance=1.7cm]
  \node (S0) {\ce{S_{0}}};
  \node (S1) [right of=S0] {\ce{S_{1}}};
  \node (P0) [below of=S0] {\ce{P_{0}}};
  \node (P1) [right of=P0] {\ce{P_{1}}};
  \node (P2) [right of=P1] {\ce{P_{2}}};
  \node (R0) [below of=P0] {\ce{R_{0}}};
  \node (R1) [right of=R0] {\ce{R_{1}}};
  \node (R2) [right of=R1] {\ce{R_{2}}};
  \draw ($(S0.east) + (0,.2em)$) edge[->] node[above]{$\tau_1 x_9$} ($(S1.west) + (0,.2em)$);
  \draw ($(S1.west) + (0,-.2em)$) edge[->] node[below]{$\tau_2 x_{10}$} ($(S0.east) + (0,-.2em)$);
  \draw ($(P0.east) + (0,.2em)$) edge[->] node[above]{$\tau_3 x_2$} ($(P1.west) + (0,.2em)$);
  \draw ($(P1.west) + (0,-.2em)$) edge[->] node[below]{$\tau_6 x_{11}$} ($(P0.east) + (0,-.2em)$);
  \draw ($(P1.east) + (0,.2em)$) edge[->] node[above]{$\tau_4 x_2$} ($(P2.west) + (0,.2em)$);
  \draw ($(P2.west) + (0,-.2em)$) edge[->] node[below]{$\tau_5 x_{11}$} ($(P1.east) + (0,-.2em)$);
  \draw ($(R0.east) + (0,.2em)$) edge[->] node[above]{$\tau_7 x_5$} ($(R1.west) + (0,.2em)$);
  \draw ($(R1.west) + (0,-.2em)$) edge[->] node[below]{$\tau_{10} x_{11}$} ($(R0.east) + (0,-.2em)$);
  \draw ($(R1.east) + (0,.2em)$) edge[->] node[above]{$\tau_8 x_5$} ($(R2.west) + (0,.2em)$);
  \draw ($(R2.west) + (0,-.2em)$) edge[->] node[below]{$\tau_9 x_{11}$} ($(R1.east) + (0,-.2em)$); 
\end{tikzpicture} 
} & 
$G_E:$ & 
\multirow{10}{*}{
\begin{tikzpicture}[ampersand replacement=\&] 
\matrix (m) [matrix of math nodes, row sep=1.5em, column sep=1em, text height=1.5ex, text depth=0.25ex]
{ \Sp^{(1)}\&  \Sp^{(2)} \& \Sp^{(3)} \\
\Sp^{(4)} \&  \Sp^{(5)} \& \Sp^{(6)} \\};
\draw[->]($(m-1-1)+(0.4,0)$) to node[below] (x) {}($(m-1-2)+(-0.4,0)$) ;
\draw[->]($(m-1-2)+(0.4,0)$) to node[below] (x) {} ($(m-1-3)+(-0.4,0)$);
\draw[->]($(m-2-3)+(0,0.3)$) to node[below] (x) {} ($(m-1-3)+(0,-0.2)$);
\draw[->]($(m-2-3)+(0,0.3)$) to node[below] (x) {} ($(m-1-2)+(0.35,-0.15)$);
\draw[->]($(m-2-2)+(0,0.3)$) to node[below] (x) {} ($(m-1-1)+(0.35,-0.15)$);
\draw[->]($(m-2-1)+(0,0.3)$) to node[below] (x) {} ($(m-1-1)+(0,-0.2)$);
\end{tikzpicture}
} \\
 & \\
 & \\
 & \\
 & \\
 & \\
 & \\
 & \\
 & \\
 & \\
\end{tabular}
}
\end{tabular}
}
 \caption{The digraphs $G_1$, $G_2^\circ$, $G_E$, and the multidigraph $MG_2$ associated to the three-layer cascade network. It is easy to see that there are no parallel edges between different nodes of $MG_2$, there is a single directed path between any two nodes of each connected component of $G_2^\circ$, and $G_E$ has no directed cycles.}\label{fig:3-layerMESSI}
 \end{figure}
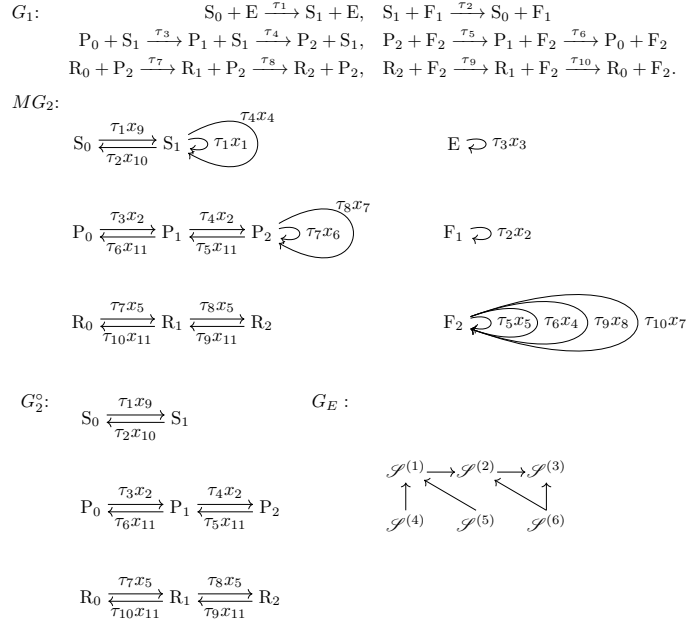
It is straightforward to check that the reduced network (without all intermediate species) associated to the three-layer cascade satisfies the hypotheses of Theorems~\ref{th:ylebn} and~\ref{th:monoy} below.  
\end{example}

Our next theorems give simple combinatorial conditions to ensure monostationarity and the lebn property of core MESSI networks. 

We first recall that a {\em directed cactus graph} is a strongly connected digraph in which each edge is contained in exactly one directed cycle or, equivalently, there is a single directed path between any two nodes~\cite{Cact1,Cact2}.

\medskip

\begin{theorem}~\label{th:ylebn} Let $G$ be the underlying digraph of a MESSI system without intermediates. 
Consider a minimal partition of the set of species as in \eqref{eq:partition} and the associated digraph $G_2^0$ from Definition~\ref{def:digraphs}.

\smallskip

Assume that the associated multidigraph $MG_2$ does not have parallel edges between different nodes.  If each connected component of $G_2^0$ is a cactus digraph, then $G$ and any non-confluent extension $G_C$ of $G$ are linearly equivalent to a binomial network.

\end{theorem}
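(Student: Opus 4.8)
The plan is to reduce everything to showing that the core network $G$ itself is lebn: once this is established, Proposition~\ref{prop:lebn} immediately gives that every non-confluent extension $G_C$ of $G$ (in particular every canonical one) is lebn as well, since by that proposition a non-confluent extension with rate constants $\kappa$ satisfying $T(\kappa)=\tau$ inherits the property from the core network with rate constants $\tau$. So from now on I would assume $\mathcal{I}=\emptyset$ and work with $G$.

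First I would make the equations explicit in terms of $G_2^0$. Since $G$ has no intermediates, Remark~\ref{rmk:eq_from_G2^0} and the construction in Definition~\ref{def:digraphs} give that for every species $X_i$,
\[
f_i \;=\; \sum_{e:\ t(e)=i}\Phi_e \;-\; \sum_{e:\ s(e)=i}\Phi_e,
\]
where the sums run over the edges of $G_2^0$, $s(e)$ and $t(e)$ are the source and target of $e$, and $\Phi_e:=\lambda_e\,x_{s(e)}$ is the flux along $e$ with label $\lambda_e$. Here the hypothesis that $MG_2$ has no parallel edges between distinct nodes enters: it guarantees that each $\lambda_e$ is a single monomial, of the form $\kappa_e$ (for a monomolecular reaction) or $\kappa_e\,x_\ell$ for a core species $X_\ell$ lying, by the MESSI axioms, in a different block of the partition \eqref{eq:partition} than the endpoints of $e$; consequently each $\Phi_e$ is a monomial with coefficient in $\Q(\kappa)$. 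I would also record two elementary facts: every edge of $G_2^0$ joins two species inside the same block $\Sp^{(\alpha)}$, so each connected component $C$ of $G_2^0$ is contained in a single $\Sp^{(\alpha)}$ and the monomials $\lambda_e$ of its edges involve no variable from $C$; and for a species $X_\ell$ that is not a vertex of $G_2^0$ (only loops, or no edges at all, in $MG_2$) one has $f_\ell\equiv 0$. Together with the standard description of the conservation laws of a MESSI system \cite{messi}, it follows that $s-d=\sum_{C}(|C|-1)$, the sum over the connected components of $G_2^0$, and that a maximal $\Q(\kappa)$-linearly independent subset of $f_1,\dots,f_s$ is obtained by keeping, inside each component $C$, all but one of the forms $\{f_i: X_i\in C\}$.

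The heart of the argument is then the following claim, to be proved componentwise: if $C$ is a directed cactus, the $\Q(\kappa)$-span of $\{f_i: X_i\in C\}$, which has dimension $|C|-1$ with the unique relation $\sum_{X_i\in C}f_i=0$, admits a basis of binomials. The point is that in a directed cycle the flux is the same along every edge: if $X_i$ is a vertex of $C$ lying in a single directed cycle of $C$ (so in particular every non-cut vertex of $C$), then $X_i$ has exactly one in-edge $e$ and one out-edge $e'$ in $C$, so $f_i=\Phi_e-\Phi_{e'}$ is a binomial --- and it has two genuinely distinct monomials with opposite signs, because $s(e)\neq s(e')$ are distinct vertices of the same $\Sp^{(\alpha)}$ while the remaining variables of $\lambda_e,\lambda_{e'}$ lie outside $\Sp^{(\alpha)}$. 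To cover cut vertices I would induct on the number of directed cycles (blocks) of $C$: choosing a leaf block $Z$ of the block-cut tree of $C$, attached to the rest $C'$ at the unique cut vertex $w$, the vertex equations of the $|Z|-1$ vertices of $Z$ other than $w$ are binomials of the above form, and their sum equals $-(\Phi_{e_{\mathrm{in}}}-\Phi_{e_{\mathrm{out}}})$, where $e_{\mathrm{in}},e_{\mathrm{out}}$ are the two edges of $Z$ at $w$; subtracting that binomial from $f_w$ leaves exactly the vertex equation of $w$ in the smaller cactus $C'$, to which the inductive hypothesis applies. Since $Z$ and $C'$ share no edge and the fluxes $\{\Phi_e: e\in C\}$ are distinct monomials, the $(|Z|-1)+(|C'|-1)=|C|-1$ binomials so produced are linearly independent, hence a basis; and since all scalars used are rational numbers, the passage from the chosen maximal independent subset of $\{f_i:X_i\in C\}$ to this binomial basis is effected by an invertible matrix with rational entries.

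Finally I would assemble the pieces. Collecting the per-component data into a single matrix $M(\kappa)$ --- block-diagonal, with constant rational (indeed integer) blocks, hence invertible for all $\kappa\in\R^r_{>0}$ --- yields $M(\kappa)\,(f_{i_1},\dots,f_{i_{s-d}})^{\top}=(h_1,\dots,h_{s-d})^{\top}$ with each $h_j$ a binomial with two terms of opposite sign and coefficients in $\Q(\kappa)$, so $G$ is lebn; Proposition~\ref{prop:lebn} then finishes the proof for non-confluent extensions. I expect the only delicate point to be the block-cut tree induction in the previous paragraph --- in particular, verifying that at a cut vertex the ``leftover'' of $f_w$ is precisely the vertex equation of the contracted subcactus, and that the binomials produced from the various blocks are jointly independent; the identification of $f_i$ with the net flux at $X_i$, the role of the no-parallel-edges hypothesis, the dimension bookkeeping, and the reduction via Proposition~\ref{prop:lebn} are all routine.
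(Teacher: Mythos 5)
Your proposal is correct and follows essentially the same route as the paper: identify each $f_i$ with the net flux at a vertex of $G_2^0$ (i.e.\ the incidence matrix acting on the vector of edge monomials, which are genuine monomials thanks to the no-parallel-edges hypothesis), prove componentwise by induction on the cycle/block structure of the cactus that the span admits a binomial basis --- your leaf block of the block-cut tree attached at a cut vertex $w$ is exactly the paper's leaf of the tree representation $T_{\widetilde{G}}$ attached at a hinge, and your ``leftover of $f_w$'' is the last row of the paper's matrix $K_1 I_G$ in Lemma~\ref{lem:incidence} --- and then invoke Proposition~\ref{prop:lebn} for non-confluent extensions. The only cosmetic difference is that the paper's binomials for a leaf cycle are differences of edge fluxes $\Phi_{e_i}-\Phi_{e_{n_1}}$ rather than the vertex equations of the degree-two vertices, which span the same space over $\Q$.
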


We next show that similar combinatorial conditions ensure monostationarity.

\smallskip

\begin{theorem}~\label{th:monoy} Let $G$ be the underlying digraph of a MESSI system without intermediates. 
Consider a minimal partition of the set of species as in  and the associated digraphs $G_2^0$ and $G_E$.  If $G_E$ has no directed cycles,  and the assumptions on $MG_2$ and $G_2^0$ in the statement of Theorem~\ref{th:ylebn} hold, then the system is monostationary.
\end{theorem}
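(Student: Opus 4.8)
The plan is to reduce monostationarity to the sign condition (ii) in Theorem~\ref{th:5.3}, applied to the lebn network $G$ furnished by Theorem~\ref{th:ylebn}. Since the hypotheses include those of Theorem~\ref{th:ylebn}, we already know $G$ is lebn, so the steady states are cut out by binomials, and we obtain the exponent matrix $\mathrm{Exp}$ and a Gale dual $A$ with $A \cdot \mathrm{Exp}=0$. The first step is to pin down the combinatorial structure of $\mathrm{Exp}$ (equivalently of $A$) coming from $G_2^0$. Because each connected component of $G_2^0$ is a cactus digraph, within each component there is a unique directed path between any two vertices; tracing these paths and multiplying edge labels gives an explicit monomial parametrization of the core species concentrations in terms of a choice of ``free'' species — one per connected component of $G_2^0$, together with the species that appear only inside edge labels. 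This is exactly the kind of parametrization written out in~\eqref{parametrization} in the proof of Theorem~\ref{th:nsitepartial}, and the rows of $A$ are read off from the exponents appearing there. The key point to extract is that the sign pattern of the maximal minors $\det(A_I)$ is governed by the orientation structure of the cacti and the acyclicity of $G_E$.

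The second and central step is to verify condition (ii) of Theorem~\ref{th:5.3}: for every $I\subset\{1,\dots,s\}$ with $|I|={\rm rank}(W)$, the signs of $\det(W_I)$ and $\det(A_I)$ agree up to a single global $\varepsilon=\pm1$ independent of $I$. Here $W$ is the conservation-law matrix, whose rows are the total-amount relations: one per partition block $\Sp^{(\alpha)}$ (the sum of concentrations of species in that block, appropriately weighted). The comparison should be done minor by minor using a Laplace/Cauchy–Binet bookkeeping, grouping the columns of $I$ according to which connected component of $G_2^0$ and which block $\Sp^{(\alpha)}$ each species belongs to. The hypothesis that $G_E$ has no directed cycles is what forces a coherent orientation: it lets us linearly order the blocks $\Sp^{(\alpha)}$ so that ``later'' blocks only appear in the labels of edges between ``earlier'' blocks (never the reverse), which in turn makes the matrix $A$ block-triangular after a suitable permutation of rows and columns. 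The determinant of each diagonal block is then a product of $\pm1$'s coming from the cactus structure of a single component of $G_2^0$, and one checks these match the corresponding diagonal blocks of $W$ (which are themselves essentially incidence-type matrices with $0/1$ entries). I would also invoke the condition~\eqref{eq:dets} (nonvanishing of some $\det(W_I)\det(\mathrm{Exp}^t_{I^c})$), which follows from the cactus + acyclicity hypotheses, to guarantee $B\not\equiv 0$ so that Theorem~\ref{th:5.3} genuinely applies; then (ii)$\Leftrightarrow$(i) gives monostationarity.

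The main obstacle I expect is the combinatorial sign bookkeeping in step two: translating ``$G_E$ acyclic and each component of $G_2^0$ a cactus'' into a clean statement that the mixed minors $\det(W_I)\det(A_I)$ all have one sign. The subtlety is that $W$ and $A$ do not have the same column structure — $W$ sums over a whole block $\Sp^{(\alpha)}$ while $A$ records path exponents — so the minors must be matched across a change of basis, and one has to be careful that the cactus condition (as opposed to merely ``strongly connected'') is exactly what kills the potential sign cancellations (a non-cactus strongly connected component would create independent cycles and hence minors of both signs, which is precisely the multistationary case). I would handle this by first treating the case of a single connected component of $G_2^0$ with a single cycle through it, then a single cactus component, and finally assembling components using the acyclic order on the blocks given by $G_E$; at each stage the determinant computation is the elementary one flagged at the end of the proof of Theorem~\ref{th:nsitepartial} (``easily done in a Computer Algebra System or even by hand'').
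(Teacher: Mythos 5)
Your plan is correct and follows essentially the same route as the paper: use the cactus hypothesis to turn the Matrix--Tree generator of each component of $G_2^0$ into a single monomial (giving the parametrization with one free species per block), use the acyclicity of $G_E$ to order the blocks in layers so that the exponent matrix $A$ becomes unit upper triangular, and then verify condition~(ii) of Theorem~\ref{th:5.3}. The sign bookkeeping you worry about in your second step is in fact much tamer than you anticipate: since each row of $W$ is the sum of the concentrations over one block, $\det(W_I)$ vanishes unless $I$ picks exactly one species per block, in which case $W_I$ is the identity and $A_I$ is upper triangular with unit diagonal, so every product $\det(W_I)\det(A_I)$ is $0$ or $1$ and no Cauchy--Binet matching across components is needed.
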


\smallskip

The reduced networks of all the examples treated in Section~\ref{sec:ERK} are MESSI systems; furthermore, all of them but the three-layer cascade with feedback loop satisfy the hypotheses of Theorems~\ref{th:ylebn} and~\ref{th:monoy}.  In fact, the associated digraph $G_E$ for the three-layer cascade with feedback loop has a directed cycle while the others do not. % We now 
Before we prove Theorems~\ref{th:ylebn} and~\ref{th:monoy}, we show the sets $\Sp^{(1)},\dots,\Sp^{(m)}$ and the graph $G_E$ for the examples we have seen previously. %in Section\ref{sec:Examples}. 

\begin{example}

For the Example \ref{ex:cascade2} (two-layer cascade) %take
a minimal partition of the set of species is:
$$\Sp^{(1)}=\{S_0,S_1\},\ 
\Sp^{(2)}=\{P_0,P_1\},\ 
\Sp^{(3)}=\{E\},\ 
\Sp^{(4)}=\{F\}\ $$
and the graph $G_E$ is

\begin{center}
\begin{tikzcd}
& \Sp^{(2)} & \\
\Sp^{(3)}\arrow[r]& \Sp^{(1)}\arrow[u] & \Sp^{(4)}\arrow[l]\arrow[ul]
\end{tikzcd}
\end{center}

For the Example \ref{ex:3layercascadeloop} (three-layer cascade with feedback loop) with three distinct enzymes we can consider the minimal partition $\Sp^{(1)}=\{S_0,S_1,S_2\},\ 
\Sp^{(2)}=\{P_0,P_1,P_2\},\ 
\Sp^{(3)}=\{R_0,R_1,R_2\},\Sp^{(4)}=\{E\},\
\Sp^{(5)}=\{G\},\
\Sp^{(6)}=\{F_1\},\ 
\Sp^{(7)}=\{F_2\},\ 
\Sp^{(8)}=\{F_3\} $
and the graph $G_E$ is

\begin{center}
\begin{tikzcd}
\Sp^{(4)}\arrow[r]&\Sp^{(1)}\arrow[r]&\Sp^{(2)}\arrow[r]&
\Sp^{(3)}\arrow[ll,bend right]\\
\Sp^{(5)}\arrow[ur]&
\Sp^{(6)}\arrow[u]&
\Sp^{(7)}\arrow[u]&
\Sp^{(8)}\arrow[u]&
\end{tikzcd}
\end{center}

For distinct configurations of enzymes the only differences are in $\Sp^{(6)},\Sp^{(7)},\Sp^{(8)}$ and the edges coming from them. For instance, if $F_2=F_3$ them  $\Sp^{(7)}=\Sp^{(8)}$ and there is two edges coming from it, going to $\Sp^{(2)}$ and $\Sp^{(3)}.$

Finally, for the $n$-site sequential distributive phosphorylation network in Figure~\ref{fig:nsite},  take 
$\Sp^{(1)}=\{S_0,S_1,\dots,S_n\},\ 
\Sp^{(2)}=\{E\},\ 
\Sp^{(3)}=\{F\}\ $
and the graph $G_E$ is

\begin{center}
\begin{tikzcd}
\Sp^{(3)}\arrow[r]& \Sp^{(1)} & \Sp^{(2)}\arrow[l].
\end{tikzcd}
\end{center}

\end{example}

When $G$ is a directed cactus graph, we can consider the tree structure over the underlying undirected graph $\widetilde{G}$ of $G$. For this purpose, the set of vertices of $\widetilde{G}$ is partitioned into three subsets: the set of vertices of degree two that are included in exactly one cycle; the vertices that do not belong to any cycle; and the remaining vertices, also called \emph{hinges}, which belong to at least one cycle. The tree representation $T_{\widetilde{G}}$ of $\widetilde{G}$ is obtained by keeping the vertices in the last two subsets and their corresponding edges, and by replacing all the cycles in $\widetilde{G}$ with cycle nodes and adding an edge from a hinge node to a cycle node if the hinge vertex belongs to the corresponding cycle in $\widetilde{G}$.

\begin{lemma}\label{lem:incidence}
 Let $G$ be a directed cactus graph with $n$ vertices and $r$ edges, and let $I_G$ be the $n\times r$ incidence matrix of the graph $G$. There exists an invertible matrix $K\in\Q^{n\times n}$ such that the first $n-1$ rows of $K\cdot I_G$ have only two nonzero entries, both with different signs. 
\end{lemma}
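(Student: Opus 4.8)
The plan is to exploit the tree representation $T_{\widetilde G}$ of a directed cactus graph and to build the matrix $K$ by a block-by-block procedure that handles each cycle separately, then splices the blocks together along the tree. First I would recall that in a directed cactus graph every edge lies in exactly one directed cycle; let $Z_1,\dots,Z_m$ be these cycles, with $Z_t$ having length $\ell_t$, so that $r=\sum_t \ell_t$ and (since cycles are glued only at hinge vertices, forming a tree of cycles) $n = 1 + \sum_t(\ell_t-1)$. The incidence matrix $I_G$ has one column per edge; the columns belonging to a single directed cycle $Z_t$, restricted to the $\ell_t$ rows indexed by the vertices of $Z_t$, form the incidence matrix of an oriented $\ell_t$-cycle, which has rank $\ell_t-1$ and a one-dimensional left kernel spanned by the all-ones vector.

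Next I would carry out the key local step: for a single oriented cycle on vertices $v_0\to v_1\to\cdots\to v_{\ell-1}\to v_0$, the incidence matrix is $I_{Z}$ with columns $e_{v_{j+1}}-e_{v_j}$. Choosing the elementary partial-sum operation, i.e. replacing row $v_j$ by $e_{v_0}^\top + e_{v_1}^\top + \cdots + e_{v_j}^\top$ for $j=1,\dots,\ell-1$ and leaving one row (say $v_0$) untouched, produces a matrix whose first $\ell-1$ rows each have exactly two nonzero entries, $+1$ and $-1$: indeed the $j$-th partial sum dotted with the column $e_{v_{k+1}}-e_{v_k}$ telescopes to $\mathbf 1_{[k<j]}-\mathbf 1_{[k\le j-1]}$-type differences that vanish except for two columns. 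The local transformation matrix $K_Z$ (lower-triangular with $1$'s, hence invertible over $\Q$) is the elementary ingredient. The point is that this same partial-sum trick, applied along the tree $T_{\widetilde G}$ starting from a chosen root hinge and proceeding outward cycle by cycle, can be assembled into a single $n\times n$ invertible $K$: each cycle contributes $\ell_t-1$ ``good'' rows, which together with the single root row give $1+\sum_t(\ell_t-1)=n$ rows, exactly $n-1$ of which have the two-nonzero-entry property, and $K$ is block-triangular with respect to the cycle ordering induced by the tree, hence invertible.

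The main obstacle will be bookkeeping the overlaps at hinge vertices: a hinge vertex belongs to several cycles, so its row is ``shared'', and one must be careful that when the partial-sum operation for a child cycle is performed it does not destroy the two-entry structure already arranged for the parent cycle, and that the resulting $K$ is still invertible (not merely that its rows are $\Q$-linear combinations of the $I_G$ rows). The clean way to handle this is to root the tree $T_{\widetilde G}$, orient each cycle so that its ``entry hinge'' plays the role of $v_0$, and perform the partial sums within each cycle using only rows indexed by that cycle's own vertices; since a non-entry vertex of a cycle is an entry vertex of at most its own subtree, the operations on distinct cycles touch disjoint sets of ``modified'' rows except at the single shared entry hinge, whose row we never modify inside the child. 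This makes $K$ unitriangular in a vertex ordering compatible with a DFS of $T_{\widetilde G}$, so $\det K=\pm1$ and in particular $K\in\Q^{n\times n}$ is invertible; the first $n-1$ rows of $K\cdot I_G$, namely all the partial-sum rows over all cycles, each have exactly two nonzero entries of opposite sign, while the single remaining row (the root) is left as a genuine incidence row. This is precisely what is needed to feed into the lebn machinery via Theorem~\ref{th:ylebn}. $\square$
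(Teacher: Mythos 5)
Your overall strategy (partial sums along each cycle, assembled along the tree of cycles) is the same as the paper's, but the specific rows you construct fail to have the claimed two-nonzero-entry property as soon as the cactus has more than one cycle, and your verification only checks the columns of the cycle's own edges. The problem is that your modified row for a vertex $v_j$ of a cycle $Z$ is the indicator vector of $\{v_0,\dots,v_j\}$, and this set contains the entry hinge $v_0$, which is also a vertex of the parent cycle $Z'$. Dotting that indicator against the columns of the two edges of $Z'$ incident to $v_0$ gives $+1$ and $-1$, so the row of $K\cdot I_G$ indexed by $v_j$ acquires two extra nonzero entries in the columns of $Z'$; the same pollution occurs at every internal hinge $v_i\in\{v_1,\dots,v_j\}$ where a child cycle attaches. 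Concretely, take two directed triangles $a\to b\to c\to a$ and $c\to d\to e\to c$ glued at $c$, root the tree at the first triangle and let $c$ be the entry hinge of the second: your row for $d$ is $(e_c+e_d)^{\top}I_G$, which is nonzero in the four columns $b\to c$, $c\to a$, $d\to e$, $e\to c$, not in two.

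The vectors you actually need are indicators of sets of the form ``a proper arc of one cycle together with everything hanging off that arc in the tree of cycles,'' and the paper produces them inductively: it picks a \emph{leaf} cycle of $T_{\widetilde{G}}$, whose non-hinge vertices lie in no other cycle, takes partial sums over those non-hinge vertices only (so the resulting $n_1-1$ rows are automatically orthogonal to every column of an edge outside that cycle), replaces the hinge row by the full sum over the leaf cycle (which turns it into the incidence row of the hinge in the contracted graph, with zeros in the deleted columns), and then recurses on the smaller cactus. If you insist on a one-shot global construction, you must replace each indicator of $\{v_0,\dots,v_j\}$ by the indicator of $\{v_1,\dots,v_j\}$ together with all vertices of the sub-cacti attached at hinges among $v_1,\dots,v_j$; with that correction your triangularity/invertibility argument for $K$ can be pushed through, but as written the proof does not establish the lemma.
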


\begin{proof}
 Consider a leaf of the tree representation $T_{\widetilde{G}}$ of $\widetilde{G}$ (that is, a vertex of degree one). This leaf is associated to a directed cycle in $G$ with $n_1$ vertices ($n_1\geq 2$) and assume without loss of generality that the vertices in $G$ are numbered so that the first $n_1$ nodes and $n_1$ edges correspond to this cycle ($1\to 2 \to \dots \to n_1\to 1)$. If there is a hinge, assume $n_1$ is the hinge. Then the first $n_1$ columns of $I_G$, the incidence matrix of $G$ correspond to this cycle, and the last $r-n_1$ columns have zeros in the first $n_1-1$ entries:
 {\small 
 \[
  I_G = \left (
\begin{array}{ccccc|c|ccc}
-1 & 0 & \dots & \dots& 0 & 1 &  & & \\
1 & -1 & \ddots & & \vdots & \vdots & &  & \\
0 & 1 & \ddots & & 0 & & & 0 & \\
\vdots & & \ddots & \ddots &0 & \vdots & &  & \\
0 & \dots&  &1 & -1 & 0 & &  & \\ \hline
0 & \dots&  & 0 & 1 & -1 & {I_G}_{n_1,n_1+1}& \dots & {I_G}_{n_1,r}\\ \hline
&  &  & &  & 0 & &  & \\ 
 &  & 0 & &  & \vdots & \multicolumn{3}{c}{{I_G}_{(n_1+1\dots r),(n_1+1\dots r)}} \\ 
\makebox[0pt][l]{$\smash{\underbrace{\phantom{%
    \begin{matrix} -1 & -1 & \ddots & 0 & -1 & -1\end{matrix}}}_{\text{$n_1$}}}$} & &  &  &  &  0 &  & &  \\
\end{array}
\right ).
 \vspace{5mm}
 \]
}
 
 We can consider the invertible $n\times n$ block matrix $K_1$ as follows: 
 {\small 
 \[
  K_1=\left (
\begin{array}{ccccc|ccc}
-1 & 0 & \dots &  & 0 &  & & \\
-1 & -1 & \ddots & & \vdots & & 0 & \\
\vdots & & \ddots & \ddots & \vdots& &  & \\
-1 & \dots&  &-1 & 0 & &  & \\
1 & \dots&  &1 & 1 & &  & \\ \hline
 & &  &  & & &  & \\
 & & 0 &  & & & Id_{n-n_1} & \\ 
\makebox[0pt][l]{$\smash{\underbrace{\phantom{%
    \begin{matrix}-1 & -1 & \dots &-1 & 0\end{matrix}}}_{\text{$n_1$}}}$} &  &  &  &  &  & &  \\
\end{array}
\right ).
\vspace{5mm}
 \]
 }
 
By multiplying $K_1$ and $I_G$ we obtain the block matrix
{\small 
\begin{equation}\label{eq:incidence}
   K_1I_G=\left (
\begin{array}{ccc|c|ccc}
& & & -1 &  & & \\
 & Id_{n_1-1} & &\vdots & & 0 & \\
 & &  & -1 & &  & \\ \hline
0 & \dots&   0 & 0 & {I_G}_{n_1,n_1+1}& \dots & {I_G}_{n_1,n}\\ \hline
 &  & & 0 & &  & \\ 
 &  0 &  & \vdots & \multicolumn{3}{c}{{I_G}_{(n_1+1\dots r),(n_1+1\dots,r)} } \\ 
\makebox[0pt][l]{$\smash{\underbrace{\phantom{%
    \begin{matrix} 1  & \ddots & \ddots & 1 & -1\end{matrix}}}_{\text{$n_1$}}}$}  &  &  &  0 &  & &  \\
\end{array}
\right )
\vspace{5mm}
\end{equation}}
whose last block equals the incidence matrix of the subgraph of $G$ obtained by removing the first $n_1-1$ vertices and the edges from the corresponding cycle. By an inductive argument we see that we can obtain finitely many invertible matrices $K_1,K_2\dots$ such that the product gives an invertible matrix $K\in\Q^{n\times n}$ that yields the desired result. 
\end{proof}

We can now prove Theorems~\ref{th:ylebn} and~\ref{th:monoy}.

\begin{proof}[Proof of Theorem~\ref{th:ylebn}]  
 We start by proving that $G$ is lebn. As we pointed out in Remark~\ref{rmk:eq_from_G2^0}, as there are no intermediate species, the equations of the mass-action system determined by $G$ can be reconstructed from $G_2^0$ in a mass-action fashion by treating the labels of the edges as if they were reaction constants. As the associated multidigraph $MG_2$ does not have parallel edges between different nodes, these labels are monomials in the species concentrations. 
 
 Consider $1\leq \alpha\leq d$. By the assumption of minimality of the partition, there is a connected component of $G_2^0$, which we denote by $H_\alpha$, with vertices the species in $\Sp^{(\alpha)}$. Consider, without loss of generality, that $\Sp^{(\alpha)}$ consists of the first $n_\alpha$ species, i.e. $\Sp^{(\alpha)}=\{X_1,\dots,X_{n_\alpha}\}$. By Equation~\eqref{eq:CRNN} and Remark~\ref{rmk:eq_from_G2^0}, we have
 \[
  (f_1,\dots,f_{n_\alpha})^T \, = \, N_\alpha \cdot R_{\kappa,\alpha}(x),
 \]
where we order the $r_\alpha$ edges in  $H_\alpha$, $R_{\kappa,\alpha}(x)$ is the vector of size $r_\alpha$ where if the $i$-th edge is $X_k \overset{\kappa_i x^{\gamma_i}}{\longrightarrow} X_\ell$ then $R_{\kappa,\alpha}(x)_i = \kappa_i x_kx^{\gamma_i}$ and $N_\alpha$ is the stoichiometric matrix which in this linear case coincides with the incidence matrix of $H_\alpha$. Then, by multiplying $(f_1,\dots,f_{n_\alpha})^T $ on the left by the invertible matrix $K_\alpha$ from Lemma~\ref{lem:incidence}, we obtain binomials in the monomials of $R_{\kappa,\alpha}(x)$. 

By repeating the reasoning above with each $1\leq \alpha\leq d$, we obtain a block matrix $M\in\Q^{s\times s}$ that yields the binomial equivalence for $(f_1,\dots,f_s)$. 
To end the proof, it is immediate to see from Proposition~\ref{prop:lebn} that any non-confluent extension $G_C$ of $G$ is also lebn.
\end{proof}

\begin{proof}[Proof of Theorem~\ref{th:monoy}]  
Assume that $G_E$ has no directed cycles. We define the following subsets of indices, as in the proof of Theorem~3.15 in \cite{messi}:
\begin{align*}
L_0=&\{\beta \geq 1 : \text{indegree} \text{ of }\Sp^{(\beta)}\text{ is }0\},\ \text{and for } k\geq 1:\\
L_k=&\{\beta \geq 1: \text{for any }  \Sp^{(\gamma)}\to\Sp^{(\beta)}\text{ in }G_E \text{ we have }
\gamma \in L_t \text{ with }  t<k \}\backslash \underset{t=0}{\overset{k-1}{\bigcup}} L_t.
\end{align*}

It is important to note that $L_0\neq \emptyset$ because the graph $G_E$ has no directed cycles.
For each $\alpha\geq 1$, fix $X_{i_\alpha}\in\Sp^{(\alpha)}$. Because of the minimality of the partition,  any other $X_i \in \Sp^{(\alpha)}$ lies in the connected component $H_\alpha$ of $G_2^0$ containing $X_{i_\alpha}$. 

Choose $X_{i_{1}},\dots,X_{i_{d}}$ species, with $X_{i_{\alpha}}\in\Sp^{(\alpha)}$, for each $\alpha=1,\dots,d$.
Take any other species $X_{i}\in\Sp^{(\alpha)}$ for some $\alpha\in L_k$, $X_{i}\neq X_{i_{\alpha}}$, with $L_k$ as above. 
As $H_\alpha$ is strongly connected, let $\rho(G)$ be the generator of the kernel of $\mathcal{L}(H_\alpha)$ as in~\eqref{eq:rhoi}. Then we can write $x_i=\frac{\rho_i(H_\alpha)}{\rho_{i_\alpha}(H_\alpha)}x_{i_\alpha}$. But as $H_\alpha$ is a directed cactus graph, there is a unique $j$-tree for each node $j$ in $H_\alpha$ and then each entry $\rho_j(H_\alpha)$ is a monomial in the variables $x_{i_{\beta}}$ with ${\beta}\in L_t$, with $t<k$. Hence, the concentration of $X_{i}$ can be expressed in terms of $x_{i_1},\dots,x_{i_d}$ in the form:
\begin{equation}\label{concentrationofcore}
x_{i}=\phi(\tau)\, x_{i_{\alpha}}\, x^{a},
\end{equation}
for some $\phi(\tau)\in\Q(\tau)$, where $x^{a}$ is a Laurent monomial that depends only on variables $x_{i_{\beta}}$ with ${\beta}\in L_t$, with $t<k$. Note that, if $k=0$, then $x^{a}=1$.

Suppose that the species of $G$ are ordered in the following way: first we put the species in $\Sp^{(\beta)}$,  for all $\beta$ such that $\beta\in L_0$, then the species in $\Sp^{(\beta)}$, such that $\beta\in L_1$, and so on. With this order, from Equation~\eqref{concentrationofcore} we obtain Laurent monomials of the form $x_{i_{\alpha}}\, x^{a}\, x_i^{-1}$ with $i\neq i_\alpha$, $\alpha\in L_k$ and $x^a$ involves variables $x_j$ with $j<i$ for any $i$ such that $X_i\in \Sp^{(\alpha)}$.  From these monomials we build the $d\times s$ matrix $A$ as in Theorem~\ref{th:5.3}.

On the other hand, there are $d$ independent conservation relations by 
the hypotheses of the statement and Theorem~3.2
in \cite{messi}. They are:
\begin{equation}\label{eq:consalpha}
\ell_{\alpha}(x)\, = \sum_{X_j \in \Sp^{(\alpha)}} x_j, \quad \alpha=1,\dots,d.
\end{equation}
Furthermore, $\dim(\mathcal{S}^{\perp})=d$, where $\mathcal{S}$ is the stoichiometric subspace.

Consider the conservation-law matrix $W$ according to the conservation laws \eqref{eq:consalpha}. 
Note that by hypotheses and Proposition~$5.6$ in \cite{messi}, ${\rm rank}(W) + {\rm rank}({\rm Exp})=s$. We will now apply Theorem~\ref{th:5.3}. We note first that if we consider a submatrix of $W$ with two columns corresponding to species in the same set $\Sp^{(\beta)}$, then, its determinant is zero, because the two columns are equal. So we are interested in submatrices with columns corresponding to species in different sets $\Sp^{(\beta)}$. Then, suppose that we choose the set $I=\{j_1,\dots,j_d\}$ such that the species $X_{j_{\beta}}\in\Sp^{(\beta)}$, for each $\beta=1,\dots,d$. We then have $W_I=Id_d$ and $A_I$ is an upper triangular matrix with ones on its diagonal entries.
Then, $\det(W_I)\det(A_{I})=1$ in all these cases, and zero in the other cases, as we wanted to prove.
\end{proof}

\section*{Acknowledgments} 

We are very thankful to Eugenia Ellis and Andrea Solotar for organizing the excellent project ``Matem\'aticas en el Cono Sur'', which lead to this work. We also thank Eugenia Ellis for the warm hospitality in Montevideo, Uruguay, on December 3-7, 2018, where we devised our main results. AD, MG and MPM were partially supported by UBACYT 20020220200166BA and CONICET PIP 11220200100182CO, Argentina. Rick Richter was partially supported by FAPEMIG RED-00133-21. We also acknowledge partial support from the PUE grant 22920170100037 of the Instituto de Investigaciones Matem\'aticas Luis A. Santal\'o, that covered a visit of Rick Rischter to Buenos Aires to end this project.


\begin{thebibliography}{99}

\bibitem[Alberts et al.(2002)]{alberts2002}
B. Alberts, A. Johnson, J. Lewis, M. Raff, K. Roberts, and P. Walter (2002).
\textit{Molecular Biology of the Cell}, 4th edition.
New York: Garland Science.
%ISBN-10: 0-8153-3218-1.

\bibitem[Balaji et al.(2020)]{Cact1}
R. Balaji, R. B. Bapat, S. Goel (2020).
\textit{Resistance distance in directed cactus graphs}.
Elect. J. Lin. Alg., 36:277--292.
DOI: 10.13001/ela.2020.5093


\bibitem[Banaji(2023)]{Murad23} M. Banaji (2023).
\textit{Splitting reactions preserves nondegenerate behaviors in chemical reaction networks}.
SIAM J. Appl. Math. 83(2): 748--769.

\bibitem[Banaji et al.(2022)]{Murad22}
Banaji M., Boros B., Hofbauer J. (2022).
\textit{Adding species to chemical reaction networks: preserving rank preserves nondegenerate behaviours}.
Appl. Math. Comput. 426, Paper No. 127109, 13 pp.

\bibitem[Banaji and Pantea(2018)]{Murad18}
Banaji M., Pantea C. (2018). 
\textit{The inheritance of nondegenerate multistationarity in chemical reaction networks}.
SIAM J. Appl. Math. 78(2): 1105--1130.


\bibitem[Conradi et al.(2017)]{CFMW}
C. Conradi, E. Feliu, M. Mincheva, C. Wiuf (2017).
\textit{Identifying parameter regions for multistationarity}.
PLoS Comput. Biol., 13(10):e1005751.

\bibitem[Conradi et al.(2008)]{CFR}
C. Conradi, D. Flockerzi, J. Raisch (2008).
\textit{Multistationarity in the activation of a MAPK: Parametrizing the relevant region in parameter space}.
Math. Biosci. 211(1):105--131.

\bibitem[Conradi and Kahle(2015)]{CK}
C. Conradi and T. Kahle (2015).
\textit{Detecting binomiality}.
Adv. Appl. Math. 71:52--67.

\bibitem[Dickenstein(2016)]{alicia}
A. Dickenstein (2016).
\textit{Biochemical reaction networks: an invitation for algebraic geometers}.
MCA 2013, Contemporary Mathematics 656:65--83.

\bibitem[Dickenstein(2020)]{notices}
A. Dickenstein (2020).
\textit{Algebraic Geometry Tools in Systems Biology}.
Notices of the AMS, 67(11):1706--1715.

\bibitem[Dickenstein et al.(2019)]{DPMST}
A. Dickenstein, M. P\'erez Mill\'an, A. Shiu, X. Tang (2019).
\textit{Mutistationarity in structured reaction networks}.
Bull. Math. Biol. 81:1527--1581.

\bibitem[Feliu and Wiuf(2012)]{FW2012}
E. Feliu and C. Wiuf (2012).
\textit{Enzyme-sharing as a cause of multi-stationarity in signalling systems}.
J. R. Soc. Interface, 9(71):1224--1232.

\bibitem[Feliu and Wiuf(2013)]{fw13}
E. Feliu and C. Wiuf (2013).
\textit{Simplifying biochemical models with intermediate species}.
J. R. Soc. Interface, 10:20130484.

\bibitem[Giaroli et al.(2019)]{GRPMD}
M. Giaroli, R. Rischter, M. P\'erez Mill\'an, A. Dickenstein (2019).
\textit{Parameter regions that give rise to 2[n/2] +1 positive steady states in the n-site phosphorylation system}.
Mathematical Biosciences and Engineering 16:7589--7615.

\bibitem[Gunawardena(2007)]{Gu2007}
J. Gunawardena (2007).
\textit{Distributivity and Processivity in Multisite Phosphorylation Can Be Distinguished through Steady-State Invariants}.
Biophysical Journal, 93:3828--3834.

\bibitem[Hou and Chen(2015)]{Cact2}
Y. Hou and J. Chen (2015).
\textit{Inverse of the distance matrix of a cactoid digraph}.
Lin. Alg. Appl., 475:1--10.

\bibitem[Huang and Ferrell(1996)]{sig-016}
C-YF. Huang and J. E. Ferrell (1996).
\textit{Ultrasensitivity in the Mitogen-Activated Protein Kinase Cascade}.
Proc. Natl. Acad. Sci. USA, 93(19):10078--10083.

\bibitem[Joshi and Shiu(2013)]{Joshi:Shiu:Atoms}
B. Joshi and A. Shiu (2013).
\textit{Atoms of multistationarity in chemical reaction networks}.
Journal of Mathematical Chemistry, 51:153--178.

\bibitem[Maplesoft(2014)]{maple}
Maplesoft (2014).
\textit{Maple 18}.
Maplesoft, a division of Waterloo Maple Inc., Waterloo, Ontario.

\bibitem[Markevich et al.(2004)]{MHK}
N. Markevich, J. Hoek, B. Kholodenko (2004).
\textit{Signaling switches and bistability arising from multisite phosphorylation in protein kinase cascades}.
J. Cell Biol. 164(3):353--359.

\bibitem[Mirzaev and Gunawardena(2013)]{MiGu13}
I. Mirzaev and J. Gunawardena (2013).
\textit{Laplacian dynamics on general graphs}.
Bull. Math. Biol., 75(11):2118--49.

\bibitem[M\"uller et al.(2016)]{Focm}
S. M\"uller, E. Feliu, G. Regensburger, C. Conradi, A. Shiu, A. Dickenstein (2016).
\textit{Sign conditions for injectivity of generalized polynomial maps with applications to chemical reaction networks and real algebraic geometry}.
Found. Comput. Math. 16(1):69--97.

\bibitem[Nam et al.(2020)]{Gu:irrev}
K.-M. Nam, B. M. Gyori, S. V. Amethyst, D. J. Bates, J. Gunawardena (2020).
\textit{Robustness and parameter geography in post-translational modification systems}.
PLoS Comput Biol, 16:e1007573.

\bibitem[Patel and Shvartsman(2018)]{PS18}
A.~Patel and S.~Shvartsman (2018).
\textit{Outstanding questions in developmental ERK signaling}.
Development, 145 (14): dev143818.

\bibitem[P\'erez Mill\'an and Dickenstein(2018)]{messi}
M. P\'erez Mill\'an and A. Dickenstein (2018).
\textit{The structure of MESSI biological systems}.
SIAM J. Appl. Dyn. Syst. 17(2):1650--1682.

\bibitem[P\'erez Mill\'an et al.(2012)]{PMDSC}
M. P\'erez Mill\'an, A. Shiu, C. Conradi, A. Dickenstein (2012).
\textit{Chemical reaction systems with toric steady states}.
Bulletin of Mathematical Biology, 74(5):1027--1065.

\bibitem[Sadeghimanesh and Feliu(2019)]{SFeliu}
A. Sadeghimanesh and E. Feliu (2019).
\textit{The multistationarity structure of networks with intermediates and a binomial core network}.
Bull. Math. Biol., 81:2428--2462.

\bibitem[Suwanmajo and Krishnan(2015)]{SK2015}
T. Suwanmajo and J. Krishnan (2015).
\textit{Mixed mechanisms of multi-site phosphorylation}.
Journal of the Royal Society, Interface, 12(107):20141405.

\bibitem[Tutte(1948)]{tutte}
W. T. Tutte (1948).
\textit{The dissection of equilateral triangles into equilateral triangles}.
Proc. Cambridge Philos. Soc., 44:463--482.

\end{thebibliography}
    \end{document}